\newif\ifFull \Fulltrue
\tikzset{every node/.style={node font=\tiny, node distance=5mm and 1cm}}
\definecolor{darkgreen}{rgb}{0.0, 0.5, 0.0}
\tikzstyle{smallstate}=[inner sep=1.2pt,draw,circle,fill]
\tikzstyle{textstate}=[node font=\small]
\tikzstyle{flowchartnode}=[node distance=1.75cm, node font=\footnotesize, inner sep=5pt, draw=gray, dashed, minimum height=0.75cm,minimum width=1.25cm]
\tikzstyle{output-edge}=[-Implies, double distance=1pt]
\tikzstyle{transition-edge}=[-{Stealth[length=2mm,width=1.5mm]},shorten >=0.8mm, shorten <=1.2mm]
\tikzstyle{output}=[draw=none, node distance=.6cm, inner sep=0pt,yshift=1pt]
\crefname{apx}{appendix}{appendices}
\Crefname{apx}{Appendix}{Appendices}
\def\@acmplainindent{0pt}
\def\@acmdefinitionindent{0pt}
\def\@proofindent{\noindent}
\newcommand\myparagraph[1]{%
  \par\addvspace{6pt plus 2pt minus 1pt}%
  \noindent\textbf{\textit{#1}.}\hspace{.5em plus 0.1em minus 0.1em}%
}
\def\orcidID#1{\href{http://orcid.org/#1}{\protect\raisebox{-1.25pt}{\protect\includegraphics{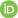}}}}
\crefname{lstlisting}{listing}{listings}
\Crefname{lstlisting}{Listing}{Listings}
\algnewcommand\algorithmicmatch{\textbf{match}}
\algnewcommand\algorithmicwith{\textbf{with}}
\algnewcommand\algorithmiccase{\textbf{case}}
\algnewcommand\algorithmicdefault{\textbf{default}}
\algnewcommand\Continue{\textbf{continue}}
\newcommand{\IndentState}{\State\hspace{\algorithmicindent}}
\tikzset{2tail/.code={\pgfsetarrowsstart{Implies[reversed]}}}
\newcommand{\Rightarrowtail}{
  \mathrel{
    \begin{tikzpicture}
      \draw[double equal sign distance, -implies, 2tail] (0,0) -- (0.4,0);
    \end{tikzpicture}
  }
}
\newcommand{\true}{\mathrm{true}}
\newcommand{\false}{\mathrm{false}}
\newcommand{\At}{\mathbf{At}}
\newcommand{\JmpRes}[1]{\!\downarrow _{#1}}
\newcommand{\sem}[1]{\llbracket #1 \rrbracket}
\newcommand{\transCorr}[1]{\rightarrowtail^{#1}} 
\newcommand{\transCorrSymb}[1]{\Rightarrowtail^{#1}} 
\newcommand{\powSet}[1]{\mathcal{P}({#1})}
\newcommand{\refClo}{\mathsf{ref}}
\newcommand{\symClo}{\mathsf{sym}}
\newcommand{\transClo}{\mathsf{trans}}
\newcommand{\accu}{\mathrm{accu}}
\newcommand{\res}{\mathrm{res}}
\newcommand{\con}{\mathrm{con}}
\newcommand{\command}[1]{\texttt{#1}}
\newcommand{\comAssert}[1]{\command{assert}~#1}
\newcommand{\comSkip}{\command{skip}}
\newcommand{\comIfElse}[3]{\command{if}~#1~\command{then}~#2~\command{else}~#3}
\newcommand{\comIfThen}[2]{\command{if}~#1~\command{then}~#2}
\newcommand{\comWhile}[2]{\command{while}~#1~\command{do}~#2}
\newcommand{\comDoWhil}[2]{\command{do}~#1~\command{while}~#2}
\newcommand{\comLabel}[1]{\command{label}\!~#1}
\newcommand{\comBrk}{\command{break}}
\newcommand{\comCont}{\command{continue}}
\newcommand{\comRet}{\command{return}}
\newcommand{\comGoto}[1]{\command{goto}\!~#1}
\definecolor{forKeyword}{RGB}{13,71,161}
\definecolor{forCommand}{RGB}{130,119,23}
\newcommand{\colorCommand}[1]{{\color{forCommand}{\texttt{\small{#1}}}}}
\newcommand{\colorKeyword}[1]{{\color{forKeyword}{\texttt{\small{#1}}}}}
\newcommand{\colorAssert}[1]{\colorCommand{assert}~#1}
\newcommand{\colorDiverge}{\colorCommand{diverge}}
\newcommand{\colorIfThen}[2]{\colorKeyword{if}~#1~\colorKeyword{then}~#2}
\newcommand{\colorElseIf}[2]{\colorKeyword{else}~\colorKeyword{if}~#1~\colorKeyword{then}~#2}
\newcommand{\colorElse}[1]{\colorKeyword{else}~#1}
\newcommand{\colorWhile}[2]{\colorKeyword{while}~#1~\colorKeyword{do}~#2}
\newcommand{\colorLabel}[1]{\colorCommand{label}\!~#1}
\newcommand{\colorRet}{\colorCommand{return}}
\newcommand{\colorGoto}[1]{\colorCommand{goto}\!~#1}
\newcommand{\algCall}[1]{{\textsc{#1}}}
\newcommand{\algIsDead}{\algCall{isDead}}
\newcommand{\algKnownDead}{\algCall{knownDead}}
\newcommand{\algEquiv}{\algCall{equiv}}
\newcommand{\algUnion}{\algCall{union}}
\newcommand{\algRep}{\algCall{rep}}
\newcommand{\unfold}{\vartriangleleft}
\newcommand{\seq}{\fatsemi}
\newcommand{\longrightsquigarrow}{\xymatrix{{}\ar@{~>}[r]&{}}}
\newcommand{\longleftrightsquigarrow}{\xymatrix{{}\ar@{<~>}[r]&{}}}
\newcommand{\conti}[1]{\mathbf{#1}}
\newcommand{\acc}[1]{{\conti{acc} ~ #1}}  
\newcommand{\contc}[1]{{\conti{cont} ~ #1}}  
\newcommand{\retc}{{\conti{ret}}}  
\newcommand{\brkc}[1]{\conti{brk} ~ #1}  
\newcommand{\jmpc}[1]{{\conti{jmp} ~ #1}} 
\newcommand{\theoryOf}[1]{\ensuremath{\mathsf{#1}}}
\newcommand\gkat{\textsf{GKAT}\xspace}
\newcommand\cfgkat{\textsf{CF-GKAT}\xspace}
\newcommand\CFOrGKAT{\textsf{(CF-)GKAT}\xspace}
\newcommand{\kat}{\textsf{KAT}\xspace}
\newcommand{\BExp}{\theoryOf{BExp}}
\newcommand{\BExpI}{\theoryOf{BExp}^I}
\newcommand{\transvia}[1]{
  \mathrel{\raisebox{-2pt}{\(\xrightarrow{#1}\)}}
}
\newcommand{\transOut}[1]{
  \mathrel{\adjustbox{trim=0 5pt 0 0}{\(\xRightarrow{#1}\)}}
}
\newcommand{\transRes}[1]{
  \mathrel{\adjustbox{trim=0 2pt 0 0}{\(\xhookrightarrow{#1}\)}}
}
\newcommand{\transRej}[1]{\lightning^{#1}}
\begin{document}

\title{Outrunning Big KATs: Efficient Decision Procedures for Variants of GKAT}

\author{
  Cheng Zhang
  \inst{1}
  \orcidID{0000-0002-8197-6181} 
  \and
  Qiancheng Fu
  \inst{2}
  \orcidID{0000-0002-5234-8565} 
  \and
  Hang Ji
  \inst{2}
  \and  
  Ines Santacruz Del Valle
  \inst{2}
  \and  
  Alexandra Silva 
  \inst{4}
  \orcidID{0000-0001-5014-9784}
  \and  
  Marco Gaboardi 
  \inst{2}
  \orcidID{0000-0002-5235-7066}
}





%
\authorrunning{C. Zhang et al.}
%
\institute{
  Worcester Polytechnic Instituite, Worcester, MA, U.S.A.
  \email{czhang13@wpi.edu}
  \and
  Boston University, Boston, MA, U.S.A.
  \email{\{qcfu, gaboardi\}@bu.edu}
  \and
  Cornell University, Ithaca, NY, U.S.A. 
  \email{alexandra.silva@cornell.edu}}
\maketitle              
\begin{abstract}
  This paper presents several efficient decision procedures for trace equivalence of \gkat automata, which make use of on-the-fly symbolic techniques via SAT solvers.
  To demonstrate applicability of our algorithms, we designed symbolic derivatives for \cfgkat, a practical system based on \gkat designed to validate control-flow transformations. 
  We implemented the algorithms in Rust and evaluated them on both randomly generated benchmarks and real-world control-flow transformations. 
  Indeed, we observed order-of-magnitude performance improvements against existing implementations for both \kat and \cfgkat. Notably, our experiments also revealed a bug in Ghidra, an industry-standard decompiler, highlighting the practical viability of these systems.
\end{abstract}

\section{Introduction}


Kleene Algebra with Tests
(KAT)~\cite{kozen_KleeneAlgebraTests_1997c} is an algebraic framework which has 
been widely applied for modeling and verification in different
domains, including but not limited to software-defined
networks~\cite{anderson_NetKATSemanticFoundations_2014,smolka_CantorMeetsScott_2017},
concurrency~\cite{wagemaker_PartiallyObservableConcurrent_2020,hoare_ConcurrentKleeneAlgebra_2009a},
program
logics~\cite{zhang_IncorrectnessLogicKleene_2022c,zhang_DomainReasoningTopKAT_2024,desharnais_ModalKleeneAlgebra_2004},
and program
transformations~\cite{angus_KleeneAlgebraTests_2001,kozen_CertificationCompilerOptimizations_2000}.
Besides its broad applicability, Kleene Algebra with Tests has also
attracted interest for the efficiency of some of its decision procedures.
Despite the fact that deciding equivalence for general KAT expressions
is PSPACE-complete~\cite{cohen_ComplexityKleeneAlgebra_1996a}, in some
applications, decision procedures based on subsets of KAT are able to
outperform state-of-the-art tools with domain-specific
tasks~\cite{smolka_ScalableVerificationProbabilistic_2019}.
This has led to the systematic study of an efficient
fragment of KAT, named \emph{Guarded Kleene Algebra with Tests}
(GKAT)~\cite{smolka_GuardedKleeneAlgebra_2020}, where the
non-deterministic addition \(e + f\) and iteration \(e^*\) operations
are restricted to if-statement \(e +_b f\) and while-loop \(e^{(b)}\),
respectively.


Despite being a more restrictive language, \gkat has proven to be a powerful framework, and also found applications in different domains, from control-flow equivalences~\cite{zhang_CFGKATEfficientValidation_2025}, to network verification~\cite{smolka_ScalableVerificationProbabilistic_2019}, to weighted and probabilistic programming~\cite{ro.zowski_ProbabilisticGuardedKAT_2023,gomes_KleeneAlgebraTests_2024,koevering_WeightedGKATCompleteness_2025}. 
In particular, the extension presented in~\cite{zhang_CFGKATEfficientValidation_2025}---Control-flow \gkat (\cfgkat)---augments \gkat with non-local control-flow structures, like \command{goto} and \command{break}, and indicator variables; providing a comprehensive framework to validate control-flow transformations.
Take the programs in~\Cref{fig:equiv-cfgkat-prog} as an example: the program on the left is a high-level program, and the program on the right can be an optimized pseudo-assembly that was obtained from the high-level program.
Despite their different appearances, \cfgkat is able to validate their equivalence with automata-based techniques.

\begin{figure}[!t]
  \begin{subfigure}{0.5\textwidth}
    \[
      \begin{aligned}
        & \colorIfThen{t_{1}}{p}; \\[-.5ex]  
        & \colorWhile{t_{1} \lor t_{2}}{\\[-.5ex]
        & \quad \colorIfThen{t_{1} \land \overline{t_{2}}}{(q; \colorAssert{t_{1} \land \overline{t_{2}}});}} \\[-.5ex]   
        & \quad \colorElse{p;}\\[-.5ex] 
        & \colorRet;\\[-3ex]
      \end{aligned}
    \]
    \caption{A \texttt{while}-loop program.}\label{fig:high-level-equiv-cfgkat-prog}
  \end{subfigure}
  \begin{subfigure}{0.5\textwidth}
    \[
      \begin{aligned}
        & \colorIfThen{t_{1} \lor t_{2}}{p;} \\[-.5ex] 
        & \colorElse{\colorLabel{l}; \\[-.5ex] 
        & \quad \colorIfThen{t_{2}}{p; \colorGoto{l};} \\[-.5ex]   
        & \quad \colorElseIf{t_{1}}{\colorDiverge;}} \\[-.5ex]   
        & \colorRet;\\[-2ex]
      \end{aligned}
    \]
    \caption{A \texttt{goto} program.}\label{fig:low-level-equiv-cfgkat-prog}
  \end{subfigure}
  \caption{Two finite-trace equivalent \cfgkat programs, where \((q; \comAssert{t_{1} \land \overline{t_{2}}})\) encodes the knowledge that the test \(t_{1} \land \overline{t_{2}}\) always holds after action \(q\).}\label{fig:equiv-cfgkat-prog} \vspace*{-.4cm}
\end{figure}

 Indeed, at the core of the success of languages like \kat, \gkat, and extensions thereof, is a strong automata-theoretic foundation. The original decision procedures of both \gkat and \cfgkat rely on a special type of automaton introduced as the operational semantics of the languages---\emph{\gkat automata}~\cite{zhang_CFGKATEfficientValidation_2025,smolka_GuardedKleeneAlgebra_2020}.
Concretely, both decision procedures first convert expressions into \gkat automata, then check the trace equivalence of the generated automata.
Notably, the trace-equivalence checking for \gkat automata is \emph{sound and complete}, that is, the algorithm outputs true \emph{if and only if} the input automata are trace equivalent.

Thus, improving the existing equivalence procedure for \emph{\gkat automata} will not only speed up the decision procedure for \gkat, but also some of its variants.
In this paper, we made two important improvements on the \textcolor{red}{original} algorithm for equivalence check of \gkat-automata, as outlined in~\Cref{fig:equiv-algs-summary}. 
Firstly, we designed an {\color{blue}on-the-fly algorithm} for equivalence checking: the algorithm can terminate as soon as a counter example is encountered without iterating through the entire generated automata.
Secondly, we lift the {\color{blue}on-the-fly algorithm} to a {\color{teal}symbolic algorithm}, which not only enjoys short-circuiting upon counter-examples, but also prevents the exponential blow-up with the addition of new primitive tests.
In the following paragraphs, we will delve deeper into the challenges in designing these algorithms and our contributions.


%
\noindent\textbf{Normalization v.s. On-the-fly:} 
originally, deciding trace equivalence requires a \emph{normalization} procedure, which reroutes transitions that do not produce finite traces (e.g. infinite loops) into immediate rejection.
This simple step introduces significant performance penalties: although \gkat automata can be generated on-the-fly~\cite{schmid_GuardedKleeneAlgebra_2021}, normalization  requires iterating through the entire automaton once, which nullifies the performance improvements of on-the-fly verification~\cite{fernandez_OntheflyVerificationFinite_1992}.
In~\Cref{sec:on-the-fly-norm-overview}, we present our {\color{blue}on-the-fly algorithm}, which invokes normalization {\em lazily}: we only perform it when the bisimulation failed.
This change allows the algorithm to halt immediately when a
counter-example is encountered, greatly improving the efficiency in the
negative case.  Moreover, our normalization algorithm only iterates
through the reachable states of the failed checks, instead of the
entire automata, improving the performance in the positive cases as
well.

\begin{figure}[!t]
  \begin{tikzpicture}
    \node[flowchartnode] (prog) {\CFOrGKAT Program};
    \node[flowchartnode] (aut) [right=of prog] {\gkat Automata};
    \node[flowchartnode] (normAut) [above right=0.5cm and -.5cm of aut, align=center] {Normalized\\\gkat Automata};
    \node[flowchartnode] (equivAut) [right=2cm of aut, align=center] {Trace\\Equivalence};
    \node[flowchartnode] (symbAut) [below=1cm of aut, align=center] {Symbolic\\\gkat Automata};
    
    \draw[->, thick, red] (prog) edge[bend left=7] node[auto, text=black] {
      \begin{tabular}{c}
        Thompson's\\Construction\\
        \cite{smolka_GuardedKleeneAlgebra_2020,zhang_CFGKATEfficientValidation_2025}
      \end{tabular}
    } (aut);
    \draw[->, thick, blue] (prog) edge[bend right=7] node[auto, swap, text=black] {
      \begin{tabular}{c}
        Derivative\\
        \cite{schmid_GuardedKleeneAlgebra_2021}
      \end{tabular}
    } (aut);
    \draw[->, thick, red] (aut) edge node[auto, text=black] {Normalization} (normAut);
    \draw[->, thick, red] (normAut) edge node[auto, text=black, pos=0.6] {Bisimulation} (equivAut);
    \draw[->, thick] (symbAut) edge node[auto, swap, align=left] {
      Concretization\\\Cref{def:concretization}} (aut);
    \draw[->, thick, teal] (symbAut) edge[bend right=25] node[below right, text=black, align=left, pos=0.7] {
      On-the-fly\\Symbolic Algorithm\\
      \Cref{alg:symb-bisim}} (equivAut);
    \draw[->, thick, teal] (prog) edge[bend right=25] node[auto, swap, text=black, align=right, pos=0.3] {
      Symbolic Derivative
      \\\Cref{sec:aut-construction-cfgkat}
      \ifFull\\\Cref{ap:aut-construction-gkat}\fi} (symbAut);
    \draw[->, thick, blue] (aut) edge node[below, text=black, align=center] {On-the-fly\\\Cref{alg:nonsymb-bisim}} (equivAut);
  \end{tikzpicture}
  \caption{A summary of algorithms mentioned in this paper: the {\color{red}original} one is marked in {\color{red}red}; the {\color{blue}on-the-fly} is marked in {\color{blue}blue}; and the {\color{teal}symbolic on-the-fly} is marked in {\color{teal}green}.}\label{fig:equiv-algs-summary}\vspace{-.4cm}
\end{figure}
\noindent\textbf{Symbolic Equivalence Checking:}
\gkat and \kat automata both transition via \emph{atoms}, which are truth assignments for all the primitive tests.
This means the sizes of automata are exponential to the number of primitive tests. 
In~\Cref{sec:symb-aut-overview}, we present the notion of {\color{teal}symbolic \gkat automata}, which groups collections of atoms using Boolean formulae, circumventing the exponential blow-up.
Additionally, we also present a {\color{teal}symbolic on-the-fly algorithm} to determine the trace equivalence between two symbolic \gkat automata.
We use a Boolean-formulae-based approach which contrasts with a more standard symbolic approach to \kat pioneered in SymKAT~\cite{pous_SymbolicAlgorithmsLanguage_2015} using  binary decision diagrams (BDDs). 
We will show that our approach opens the door to the use of arbitrary SAT solvers to store and compare Boolean expressions, which in turn leads to better performance than SymKAT (see \Cref{sec:implementation}) in checking the trace equivalence of \gkat expressions.  
Of course, as a trade-off, our symbolic equivalence checking algorithm currently only works on \gkat automata, instead of the more general \kat automata.


\noindent{\textbf{Correctness:}}
In~\Cref{sec:correctness-and-complexity}, we show that both our {\color{blue}on-the-fly algorithm} (\Cref{alg:nonsymb-bisim}) and {\color{teal}symbolic on-the-fly algorithm} (\Cref{alg:symb-bisim}) for trace equivalence are \emph{sound and complete}, i.e. they will produce neither false-positive nor false-negative with respect to trace equivalence.
Additionally, we show that both of our algorithms can be tweaked to correctly check infinite-trace equivalence~\cite{schmid_CoalgebraicCompletenessTheorems_2024}, a useful notion of equivalence which SymKAT does not support.

\noindent\textbf{Expressions to Symbolic Automata:} 
while the equivalence algorithm is designed for automata, one ultimately wants to reason about program equivalence, therefore a bridge between expressions and automata is needed. 
In~\Cref{sec:aut-construction-cfgkat}, we present an on-the-fly algorithm to generate {\color{teal} symbolic \gkat automata} from \cfgkat program~\cite{zhang_CFGKATEfficientValidation_2025}, an extension of \gkat with non-local control-flow structures and indicator variables. 
The complexity of these extra constructs make the derivative construction for \cfgkat quite intricate and not a simple instance of \gkat.
Additionally, since \gkat is a subsystem of \cfgkat, this algorithm also induces an on-the-fly generation algorithm for \gkat expressions.

In a nutshell, in this paper we designed symbolic algorithms to overcome several crucial performance issues in the original decision procedure, while preserving the soundness and completeness guarantee. 
To demonstrate the efficiency of our approach, we provide a Rust implementation and compared it with state-of-the-art tools for \CFOrGKAT equivalence checking, as detailed in~\Cref{sec:implementation}. We will show that our implementation exhibits orders-of-magnitude improvements in both runtime and memory usage, and the flexibility to switch between different (UN)SAT solvers allows us to utilize the most efficient solvers for each specific scenario. 
Moreover, during our evaluation, we also identified a bug~\cite{nationalsecurityagency_possbile_2025} in the industry standard decompiler Ghidra~\cite{NationalSecurityAgency_Ghidra_2025,nationalsecurityagency_possbile_2025} demonstrating the real-world applicability of our tools.
\ifFull\else
The complete proof and the detailed symbolic derivative for \gkat expressions is presented in the full version~\cite{zhangOutrunningBigKATs2026}.
\fi

In the next section, we explain the motivations and intuitions behind our approaches with examples, and introduce primaries used in our development.



\section{Overview}\label{sec:overview}

In this section, we provide an overview of the paper's main contributions through examples. In \Cref{sec:original-alg-overview}, we introduce basic notations and explain the core motivation of our developments: to overcome two challenging inefficiencies in the original trace-equivalence algorithms for \gkat automata~\cite{smolka_GuardedKleeneAlgebra_2020}: the lack of support for short-circuiting on counter-examples due to the ``normalization'' process and the exponential increase of transitions with the addition of primitive tests.

In \Cref{sec:on-the-fly-norm-overview,sec:symb-aut-overview}, we explain how we address the aforementioned problems in two stages. 
First, we propose a new trace-equivalence decision procedure, that only invokes the normalization when necessary.
This crucial change not only enables short-circuiting on counter-examples, but also reduces the number of states needed to be checked. 
Second, we extend our algorithm to a novel symbolic representation of \gkat automata, where transitions with the same source and target are often combined into one, effectively mitigating the exponential growth of transitions, and enabling scaling to large programs.

\subsection{Original \cfgkat Decision Procedures}\label{sec:original-alg-overview}


We begin by briefly walking through original decision procedures for Guarded Kleene Algebra with Tests (GKAT)~\cite{smolka_GuardedKleeneAlgebra_2020} and Control-flow \gkat (\cfgkat)~\cite{zhang_CFGKATEfficientValidation_2025}, using the example \cfgkat programs presented in~\Cref{fig:equiv-cfgkat-prog},

The original algorithm~\cite{zhang_CFGKATEfficientValidation_2025} first converts the \cfgkat programs in~\Cref{fig:equiv-cfgkat-prog} into \gkat automata, where each state corresponds to a location in the program, and transitions are based on the current truth assignment of all the primitive tests. 
These truth assignments are represented as \emph{atoms}~\cite{kozen_KleeneAlgebraTests_1997c}: for a finite set of primitive tests \(T \triangleq \{t_{1}, t_{2}, t_{3}, \dots, t_{n}\}\), atoms \(\At\) over \(T\) are sequences of all the tests (in a fixed order) in its positive or negative form:
\[ \alpha,  \beta, \gamma \in \At = \{\tau_{1}~\tau_{2} ~\tau_{3} ~ \dots ~\tau_{n} \mid  \tau_{i} \in \{t_{i}, \overline{t_{i}}\}\}.\]
Based on the given atom, a state in a \gkat automaton will either accept, denoted as \(\retc\); reject, denoted as \(\bot\); or transition to the next state while executing an action in \(\Sigma\).
Formally, \gkat automata consist of a finite set of states \(S\), a start state \(s_{0} \in S\), and a transition function:
\[ \zeta : S \to \At \to \{ \bot , \retc\} + S \times \Sigma.\]
We often use \(s \transOut{\alpha} \retc\) to denote \( \zeta (s)_ \alpha  = \retc\); use \(s \transRej{\alpha}\) to denote \( \zeta (s)_ \alpha  =  \bot \); use \(s \transvia{\alpha \mid p} s'\) to denote \( \zeta (s)_ \alpha  = (s', p)\); and use \(s \transRes{\alpha} r\) to denote \(\zeta(s)_\alpha = r\).

{It is important to observe that the number of atoms is exponential with respect to the number of primitive tests, and so is the number of transitions in a \gkat automaton.}
For example, the programs in~\Cref{fig:equiv-cfgkat-prog} contains two primitive tests: \(t_{1}\) and \(t_{2}\); which gives \(2^2 = 4\) distinct atoms labelling the outgoing transitions of each state in the generated automata, as shown in~\Cref{fig:aut-equiv-cfgkat-prog}.

\begin{figure}[t]
  \begin{subfigure}{0.45\textwidth}
    \centering
    \begin{tikzpicture}
      \node (init) {};
      \node[textstate] (s0) [right=5mm of init] {\({\color[RGB]{27,94,32}s_0}\)};
      \node[textstate] (s1) [right=of s0] {\({\color{blue}s_1}\)};
      \node[textstate] (s2) [below right=of s1] {\({\color{blue}s_2}\)};
      \node[textstate] (s3) [above right=of s2] {\({\color{red}s_3}\)};  
      \node (s0ret) [above=3mm of s0] {\(\retc\)};
      \node (s1ret) [above=3mm of s1] {\(\retc\)};
      \node (s2ret) [below=3mm of s2] {\(\retc\)};
      \draw[->] (init) edge (s0);
      \draw[->] (s0) edge[output-edge] node[left] {\(\overline{t_1}\overline{t_2}\)} (s0ret);
      \draw[->] (s1) edge[output-edge] node[right] {\(\overline{t_1}\overline{t_2}\)} (s1ret);
      \draw[->] (s2) edge[output-edge] node[left] {\(\overline{t_1}\overline{t_2}\)} (s2ret);
      \draw[->] (s0) edge node[above, align=center] {\(t_1 t_2 \mid p\) \\ \(t_1 \overline{t_2} \mid p\)} (s1);
      \draw[->] (s0) edge[out=-90, in=190] node[below] {\(\overline{t_1} t_2 \mid p\)} (s2);
      \draw[->] (s1) edge node[below left=-1mm and -1mm, align=center] {\(t_1 t_2 \mid p\) \\ \(\overline{t_1} t_2 \mid p\)} (s2);
      \draw[->] (s2) edge node[below right=-1mm and -1mm] {\(t_1 \overline{t_2} \mid p\)} (s3);
      \draw[->] (s1) edge node[above] {\(t_1 \overline{t_2} \mid p\)} (s3);
      \draw[->] (s3) edge[out=80,in=125,loop,looseness=5] node[above] {\(t_1 \overline{t_2} \mid q\)} (s3);
      \draw[->] (s2) edge[out=-30,in=0,loop,looseness=8] node[below right=-2mm and 0mm, align=center] {\(t_1 t_2 \mid p\) \\ \(\overline{t_1} t_2 \mid p\)} (s2);
    \end{tikzpicture}
    \caption{The automaton for~\Cref{fig:high-level-equiv-cfgkat-prog}.}\label{fig:aut-high-level-equiv-cfgkat-prog}
  \end{subfigure}
  \hfil
  \begin{subfigure}{0.45\textwidth}    
    \centering
    \begin{tikzpicture}
      \node (init) {};
      \node[textstate] (u0) [right=5mm of init] {\({\color[RGB]{27,94,32}u_0}\)};
      \node[textstate] (u1) [right=of u0] {\({\color{blue}u_1}\)};
      \node[textstate] (u2) [right=of u1] {\({\color{blue}u_2}\)};
      \node (u0ret) [above=3mm of u0] {\(\retc\)};
      \node (u1ret) [above=3mm of u1] {\(\retc\)};
      \node (u2ret) [below=3mm of u2] {\(\retc\)};
      \draw[->] (init) edge (u0);
      \draw[->] (u0) edge[output-edge] node[left] {\(\overline{t_1}\overline{t_2}\)} (u0ret);
      \draw[->] (u1) edge[output-edge] node[left] {\(\overline{t_1}\overline{t_2}\)} (u1ret);
      \draw[->] (u2) edge[output-edge] node[right] {\(\overline{t_1}\overline{t_2}\)} (u2ret);
      \draw[->] (u0) edge node[below, align=center] {\(t_1 t_2 \mid p\) \\ \(\overline{t_2} t_2 \mid p\) \\ \(t_1 \overline{t_2} \mid p\)} (u1);
      \draw[->] (u1) edge node[above, align=center] {\(t_1 t_2 \mid p\) \\ \(\overline{t_1} t_2 \mid p\)} (u2);
      \draw[->] (u2) edge[out=45,in=100,loop,looseness=4] node[above, align=center] {\(t_1 t_2 \mid p\) \\ \(\overline{t_1} t_2 \mid p\)} (u2);
    \end{tikzpicture}
    \caption{The automaton for~\Cref{fig:low-level-equiv-cfgkat-prog}.}\label{fig:aut-low-level-equiv-cfgkat-prog}
  \end{subfigure}
  \caption{The automata generated for programs in~\Cref{fig:equiv-cfgkat-prog} using the algorithm in~\cite[Section 3]{zhang_CFGKATEfficientValidation_2025}. We omit transitions that output rejection. All the dead states of the automata are highlighted in {\color{red}red} and equivalent states are marked using the same colors.}\label{fig:aut-equiv-cfgkat-prog}\vspace{-.4cm}
\end{figure}

After generating the automata, the algorithm performs a process called \emph{normalization}, which reroutes all transitions that cannot lead to acceptance into immediate rejection~\cite{smolka_GuardedKleeneAlgebra_2020}.
To identify these transitions, the algorithm iterates through the entire automaton and identifies all the \emph{dead states}: states that never reaches acceptance; and then modifies all transitions to dead states into rejection.
For instance, \({\color{red}s_{3}}\) is a dead state in~\Cref{fig:aut-high-level-equiv-cfgkat-prog}, therefore the transition \({\color{blue}s_{1}} \transvia{t_{1}\overline{t_{2}} \mid p} {\color{red}s_{3}}\) in~\Cref{fig:aut-high-level-equiv-cfgkat-prog} will be modified to rejection \({\color{blue}s_{1}} \transRej{t_{1}\overline{t_{2}}}\) during normalization.

Finally, to decide the trace equivalence of these programs (\Cref{def:trace-sem}), the algorithm performs bisimulation~\cite[Algorithm 1]{smolka_GuardedKleeneAlgebra_2020} on the normalized \gkat automata. 
In~\Cref{fig:aut-equiv-cfgkat-prog}, we mark the trace equivalent states with the same color.

Despite its simplicity, the original algorithm is sufficient to decide trace equivalence between two \gkat automata without producing false negatives or false positives; that is, it is \emph{correct} or \emph{sound and complete}~\cite{smolka_GuardedKleeneAlgebra_2020,zhang_CFGKATEfficientValidation_2025}.
Notably, \textbf{normalization is necessary for this correctness result}: if the transition into \({\color{red}s_3}\) is not removed, the pair of automata in~\Cref{fig:aut-equiv-cfgkat-prog} will fail the bisimulation checking, even though they are indeed trace equivalent.

\medbreak
In this work, we propose novel algorithms for (\textsf{CF)}-\gkat equivalence that we will show scales to expressions with thousands of tests and actions, while preserving the simplicity, soundness, and completeness of the original one. Our algorithms are designed to solve the two aforementioned challenging performance problems that remained open since the work of~\citet{smolka_GuardedKleeneAlgebra_2020}.
\begin{enumerate}[leftmargin=*, nosep]
 \item 
    The original algorithm requires one iteration of the entire automata to perform normalization. 
    This necessity forces the generation of the entire automata, preventing the use of techniques like derivatives, which enable short-circuiting on counter-examples and avoid generating the remaining automata.
 \item 
    The number of transitions is exponential to the number of primitive tests.  
    In program with just tens of primitive tests, the generated automata can already contain more than thousands of outgoing transitions per state, resulting in inefficient storage and iteration.
\end{enumerate}

In the following subsections, we will delve deeper into these two problems through illustrative examples, and outline our solutions. 
We then demonstrate how our algorithm outperforms the original approaches, including the symbolic algorithm provided by~\citet{pous_SymbolicAlgorithmsLanguage_2015}.
Besides the performance improvement, our algorithm also preserves the soundness and completeness of the original algorithms, providing a strong theoretical guarantee for the correctness of our results.

\subsection{When Normalization Flies}\label{sec:on-the-fly-norm-overview}

On-the-fly automata generation techniques, like derivatives~\cite{brzozowski_DerivativesRegularExpressions_1964,antimirov_PartialDerivativesRegular_1996,kozen_CoalgebraicTheoryKleene_2017}, have long been an essential for efficient bisimulation construction.
Since the automata are generated on-demand while checking for bisimulation, the generation process can stop when the bisimulation fails, avoiding unnecessary state generation~\cite{fernandez_OntheflyVerificationFinite_1992,almeida_DecidingKATHoare_2012}.

\begin{figure}[t]
  \begin{subfigure}{0.45\textwidth}
     \[\begin{aligned}
      & \colorIfThen{t_{1}}{p;} \\[-0.5ex]
      & \colorElse{\colorLabel{l}; \\[-0.5ex]
      & \quad\colorIfThen{t_{2}}{p; \colorGoto{l};} \\[-0.5ex]
      & \quad\colorElseIf{t_{1}}{\colorDiverge}} \\[-0.5ex] 
      & \colorRet;\\[-3ex]
    \end{aligned}\]
  \caption{}
  \end{subfigure} 
  \hfil
  \begin{subfigure}{0.45\textwidth}
    \centering
    \begin{tikzpicture}
      \node (init) {};
      \node[textstate] (u0) [right=5mm of init] {\(u_0'\)};
      \node[textstate] (u1) [right=of u0] {\(u_1'\)};
      \node[textstate] (u2) [right=of u1] {\(u_2'\)};
      \node (u0ret) [above=6mm of u0] {\(\retc\)};
      \node (u1ret) [above=3mm of u1] {\(\retc\)};
      \node (u2ret) [below=3mm of u2] {\(\retc\)};
      \draw[->] (init) edge (u0);
      \draw[->] (u0) edge[output-edge] node[left, align=center] {\(\overline{t_1} t_2\)\\\(\overline{t_1}\overline{t_2}\)} (u0ret);
      \draw[->] (u1) edge[output-edge] node[left] {\(\overline{t_1}\overline{t_2}\)} (u1ret);
      \draw[->] (u2) edge[output-edge] node[right] {\(\overline{t_1}\overline{t_2}\)} (u2ret);
      \draw[->] (u0) edge node[below, align=center] {\(t_1 t_2 \mid p\) \\ \(t_1 \overline{t_2} \mid p\)} (u1);
      \draw[->] (u1) edge node[above, align=center] {\(t_1 t_2 \mid p\) \\ \(\overline{t_1} t_2 \mid q\)} (u2);
      \draw[->] (u2) edge[out=45,in=100,loop,looseness=4] node[above, align=center] {\(t_1 t_2 \mid p\) \\ \(\overline{t_1} t_2 \mid p\)} (u2);
    \end{tikzpicture}
    \caption{}\label{fig:aut-err-low-level-equiv-cfgkat-prog}
  \end{subfigure}
  \caption{On the left, an erroneous version of the low-level program that is not trace equivalent to~\Cref{fig:high-level-equiv-cfgkat-prog}, where the \(t_{1} \lor t_{2}\) in the first if-statement is mistaken as \(t_{1}\). On the right, the corresponding automaton (with omitted reject transitions).}\label{fig:err-low-level-equiv-cfgkat-prog}\vspace{-.4cm}
\end{figure}

The normalization algorithm needed in \gkat equivalence checking requires iterating through the entire automaton \emph{prior to bisimulation checking}, and thus forces the generation of the entire automaton, even when a counter-example might have been encountered very early in the bisimulation.
For instance, consider the program in~\Cref{fig:err-low-level-equiv-cfgkat-prog}, which erroneously uses \(t_{1}\) as the condition of the first if-statement, instead of \(t_{1} \lor t_{2}\) in~\Cref{fig:low-level-equiv-cfgkat-prog}.
This mistake makes the program 
in~\Cref{fig:err-low-level-equiv-cfgkat-prog} no longer equivalent to the one in~\Cref{fig:high-level-equiv-cfgkat-prog}.
Indeed, we can conclude the inequivalence of their respecitve automata, presented in~\Cref{fig:aut-err-low-level-equiv-cfgkat-prog,fig:aut-high-level-equiv-cfgkat-prog}; by examining the \(t_{1}\overline{t_{2}}\) transition from the start states \(s_0\) and \(u_0'\).
To show that these two automata are not trace equivalent, we notice that the start state \(u_{0}'\) accepts the atom \(t_{1}\overline{t_{2}}\), which implies that \(t_{1}\overline{t_{2}}\) is a trace of \(u_{0}'\). 
In contrast, the start state \({\color[RGB]{27,94,32}s_{0}}\) executes the action \(p\) on atom \(t_{1} \overline{t_{2}}\), excluding \(t_{1}\overline{t_{2}}\) to be a trace of \({\color[RGB]{27,94,32}s_{0}}\).
However, \textbf{despite this inequality being witnessed by two transitions from the start state, the original algorithm still requires generating the entire automata} in order to perform normalization.

In this paper, we introduce both a naive and a symbolic algorithm that avoid this issue. The non-symbolic version of our trace-equivalence algorithm, presented in~\Cref{alg:nonsymb-bisim}, prevents generating the remaining automaton upon encountering a counter-example.
\textbf{This efficiency improvement is achieved by only invoking normalization, or dead-state detection, when necessary.}
In~\Cref{alg:nonsymb-bisim}, the primary checks are listed on lines 8-13, where we attempt to verify the bisimulation conditions for each atom, and cache all the checked states using a union-find data structure, inspired by previous bisimulation works~\cite{smolka_GuardedKleeneAlgebra_2020,hopcroft_LinearAlgorithmTesting_1971}. 
When the bisimulation checks fail, the algorithm invokes dead state detection on the mismatched states, via the \(\algIsDead\) function.
For example, the condition on line 10 in~\Cref{alg:nonsymb-bisim} states when \(s \transvia{\alpha \mid p} s'\) yet \(u\transRej{\alpha}\), then the \(s\) and \(u\) are trace-equivalent only if \(s'\) is dead.
In addition to the bisimulation check, this algorithm can sometimes take a shortcut, as detailed on lines 4-7: if one of the states \(s, u\) is known to be dead, then \(s\) and \(u\) are trace equivalent if and only if the other state is also dead; saving us from further bisimulation checks.

\begin{figure}[t]
    \centering
    \begin{algorithmic}[1]
        \Function{equiv}{$s, u$}
        \If {\Call{rep}{$s$} = \Call{rep}{$u$}} {\Return true} \EndIf
        \State \Call{union}{s, u};
        \If {\Call{knownDead}{$s$}} 
           \Return \Call{isDead}{$u$}
        \EndIf
        \If {\Call{knownDead}{$u$}} 
           \Return \Call{isDead}{$s$}
        \EndIf
        \State\Return \( \forall  \alpha \in \At\)
        \IndentState
          \(s \transOut{\alpha} \retc \text{ iff } u \transOut{\alpha} \retc \mathrel{\&\!\&}\)
        \IndentState
          \(s \transvia{\alpha \mid p} s' \text{ and } u \transRej{\alpha} \text{ implies } \Call{isDead}{s'} \mathrel{\&\!\&}\)
        \IndentState
          \(u \transvia{\alpha \mid p} u' \text{ and } s \transRej{\alpha} \text{ implies } \Call{isDead}{u'} \mathrel{\&\!\&}\)
        \IndentState
          \(s \transvia{\alpha \mid p} s' \text{ and } u \transvia{\alpha \mid q} u' \text{ and } p \neq q \text{ implies } \Call{isDead}{s'} \text{ and } \Call{isDead}{u'} \mathrel{\&\!\&}\)
        \IndentState
          \(s \transvia{\alpha \mid p} s' \text{ and } u \transvia{\alpha \mid p} u' \text{ implies } \Call{equiv}{s', u'}\)
        \EndFunction
    \end{algorithmic}
    \caption{Non-symbolic on-the-fly bisimulation algorithm. 
    The call \algIsDead(\(s\)) first checks whether \(s\) is in the set of cached ``known-dead states'' (\algKnownDead(\(s\))): if so, returns true, else it runs a DFS to see if the state \(s\) can reach any accepting state; if \(s\) cannot reach an accepting state, all the reachable states of \(s\) are cached as known-dead states.
    The call \algUnion{}(\(s, u\)) links \(s, u\) to the same representative in the union-find object, and \algRep(\(s\)) returns the representative of \(s\) from said union-find object.
    ``iff'', ``and'', ``implies'' are the respective logical operators on Booleans, and \(\&\!\&\) is the logical conjunction.}\label{alg:nonsymb-bisim}\vspace{-.4cm}
\end{figure}

The simple change underlying the new algorithm has big gains as we can now benefit from on-the-fly generation of automata, and terminate immediately after a counter-example.
We revisit the automata in~\Cref{fig:aut-equiv-cfgkat-prog,fig:aut-err-low-level-equiv-cfgkat-prog} using this new algorithm, and assume these automata are generated on-the-fly. 
If the first explored atom is \(t_{1} \overline{t_{2}}\), then the transitions from the start states (${\color[RGB]{27,94,32}s_{0}} \transvia{t_{1}\overline{t_{2}} \mid p} {\color{blue}s_{1}}$ and $u_{0}'  \transOut{t_{1}\overline{t_{2}}} \retc$) will immediately violate the first condition on line 9 of~\Cref{alg:nonsymb-bisim}, allowing us to short-circuit the algorithm and conclude that the input programs are not trace equivalent, without generating any other states and transitions.

\begin{remark}\label[remark]{rem:symb-check-eps-first}
  In the previous example, we assumed the algorithm checks atom \(t_{1}\overline{t_{2}}\) first. 
  However, this assumption is not necessary for the symbolic case, since all the atoms leading to immediate acceptance are grouped into Boolean expressions; allowing the algorithm to always check the terminating atoms first. 
\end{remark}

Besides early termination on negative cases, our algorithm can offer benefits in the positive cases as well, since \(\algIsDead\) only checks the reachable states of the discrepancy. 
For example, consider the automata in~\Cref{fig:aut-high-level-equiv-cfgkat-prog,fig:aut-low-level-equiv-cfgkat-prog}:
the original algorithm iterates through all $7$ states in both automata for normalization, then checks bisimulation on the $6$ live states to determine their equivalence.
Our algorithm only iterates through all the $7$ states once, including one invocation of \algIsDead{} on \({\color{red}s_{3}}\), nearly halving the complexity to confirm equivalence.


In the extreme case, when the two input programs generate bisimilar \gkat automata~\cite{smolka_GuardedKleeneAlgebra_2020}, the bisimulation condition check will always succeed, preventing the invocation of dead-state checking in our algorithm.
Thus, our algorithm only needs one iteration of the automata to confirm the equivalence of bisimilar automata, whereas the original algorithm requires two passes: one for normalization, and the next for bisimulation.
This improvement is crucial for several applications, as many program-transformation algorithms often produce output program that is bisimilar to the input~\cite{erosa_TamingControlFlow_1994,kozen_BohmJacopiniTheorem_2008} i.e. they do not discard actions that will eventually diverge.
Hence, our algorithm will bring a significant performance improvement when validating the correctness of these transformations, which involves validating the equivalence of the input and output program.

\begin{remark}\label{rem:check-inf-trace-equiv}
  If the functions \(\algIsDead\) and \(\algKnownDead\) are substituted with the constant false function, the algorithm presented in~\Cref{alg:nonsymb-bisim} will detect a finer notion of equivalence, namely infinite-trace equivalence (coinciding with bisimularity)~\cite{schmid_GuardedKleeneAlgebra_2021}, which accounts for actions in non-terminating executions. 
  For instance, this semantics can distinguish between expressions such as {\colorWhile{\texttt{true}}{p}} and {\colorWhile{\texttt{true}}{q}}. 
  This modification is also applicable to the symbolic algorithm in~\Cref{alg:symb-bisim}, resulting in a efficient algorithm for infinite-trace equivalence. 
  However, we do not investigate these generalizations in our evaluation, as the existing tools we compared against do not support infinite-trace equivalence.
\end{remark}

\subsection{Symbolic Approach For \gkat Automata}\label{sec:symb-aut-overview}

The transitions of both \gkat and KAT automata are indexed by atoms; and since atoms are truth assignments to \emph{all the primitive tests}, (G)KAT automata obtained from the original constructions can often blow-up exponentially with the increase of primitive tests.
In fact, this problem is noted in several previous works~\cite{smolka_GuardedKleeneAlgebra_2020,zhang_CFGKATEfficientValidation_2025}, and both works hypothesized that symbolic techniques can aid in this regard. Consider the transitions from \({\color[RGB]{27,94,32}s_{0}}\) to \({\color{blue}s_{1}}\) in~\Cref{fig:aut-high-level-equiv-cfgkat-prog}, as shown on 

\begin{wrapfigure}{r}{2cm}
  \vspace{-8mm}
  \begin{tikzpicture}
    \node[textstate,align=left,
    draw=gray, dashed, rounded corners=5pt, thick, draw opacity=0.5,
    label={[anchor=south west]north west:\(T = \{t_1, t_2\}\)}] (origTrans)
    {\({\color[RGB]{27,94,32}s_{0}} \xrightarrow{t_{1} t_{2} \mid p} {\color{blue}s_{1}}\)\\
    \({\color[RGB]{27,94,32}s_{0}} \xrightarrow{t_{1} \overline{t_{2}} \mid p} {\color{blue}s_{1}}\)
    };
    \node[textstate,align=left,below=1.3cm of origTrans,
    draw=gray, dashed, rounded corners=5pt, thick, draw opacity=0.5,
    label={[anchor=north west]south west:\(T = \{t_1, t_2, t_3\}\)}] (t3Trans)
    {
  \({\color[RGB]{27,94,32}s_{0}} \xrightarrow{t_{1} t_{2} t_{3} \mid p} {\color{blue}s_{1}} \)\\
  \({\color[RGB]{27,94,32}s_{0}} \xrightarrow{t_{1} t_{2} \overline{t_{3}} \mid p} {\color{blue}s_{1}}\)\\
  \({\color[RGB]{27,94,32}s_{0}} \xrightarrow{t_{1} \overline{t_{2}} t_{3} \mid  p} {\color{blue}s_{1}}\)\\
  \({\color[RGB]{27,94,32}s_{0}} \xrightarrow{t_{1} \overline{t_{2}} \overline{t_{3}} \mid  p}{\color{blue}s_{1}}\)
    };

  \draw[->, decorate, decoration={snake, amplitude=2px, segment length=3mm}]
      (origTrans.230) -- node[right=1mm, align=left,node font=\scriptsize]{adding\\unused \(t_3\)} (t3Trans.115);
  \end{tikzpicture}
  \vspace{-12mm}
\end{wrapfigure} 
\noindent  the right. These transitions correspond to the first if statement in~\Cref{fig:high-level-equiv-cfgkat-prog}, namely \(\comIfThen{t_{1}}{p}\): when \(t_{1}\) is true, the state \({\color[RGB]{27,94,32}s_{0}}\) will transition to   \({\color{blue}s_{1}}\) and execute \(p\), regardless of the truth value of \(t_{2}\). However, both transitions have to include \(t_{2}\) in the atom that labels them. 
Even worse, the number of transitions from \({\color[RGB]{27,94,32}s_{0}}\) to \({\color{blue}s_{1}}\) will double when adding an unused primitive test \(t_{3}\), because atoms are defined to carry the truth assignment of \emph{every} primitive test.
This problem is not exclusive to \gkat and some solutions have been explored. 
Indeed, the seminal work of~\citet{pous_SymbolicAlgorithmsLanguage_2015} introduced an alternative presentation of \kat automata using Binary Decision Diagrams (BDDs)~\cite{bryant_BDD_1986}, which was also generalized to NetKAT~\cite{moeller_KATchFastSymbolic_2024}. 
However, the construction of Pous requires the use of BDDs when comparing Boolean expressions, and we aim to provide a more modular approach by preserving the Boolean expressions.
Our symbolic automata are designed to carry Boolean formulae on the transitions, allowing the algorithm to leverage arbitrary (UN)SAT solvers, including BDD-based solvers and other more efficient ones.


\begin{figure}[t]
  \begin{subfigure}{0.45\textwidth}
    \centering
    \begin{tikzpicture}
      \node (init) {};
      \node[textstate] (s0) [right=3mm of init] {\({\color[RGB]{27,94,32}(p +_{t_1} 1) \seq \mathrm{loop}}\)};
      \node[textstate] (s2) [right=of s0] {\({\color{blue}\mathrm{loop}}\)};
      \node[textstate] (s3text) [below=7mm of s2] {\({\color{red}(t_1 \land \overline{t_2}) \seq \mathrm{loop}}\)};  
      \node (s3) [fit=(s3text)] {};
      \node (s0ret) [below=6mm of s0] {\(\retc\)};
      \node (s2ret) [right=8mm of s2] {\(\retc\)};
      \draw[->] (init) edge (s0);
      \draw[->] (s0) edge[output-edge] node[auto] {\(\overline{t_1} \land\overline{t_2}\)} (s0ret);
      \draw[->] (s2) edge[output-edge] node[auto] {\(\overline{t_1} \land \overline{t_2}\)} (s2ret);
      \draw[->] (s0) edge node[above, align=left] {\(t_1 \mid p\) \\ \(\overline{t_1} \land t_2 \mid p\)} (s2);
      \draw[->] (s2) edge[out=125,in=55,loop,looseness=3] node[above] {\(t_2 \mid p\)} (s2);
      \draw[->] (s2) edge[bend left] node[right] {\(t_1 \land \overline{t_2} \mid q\)} (s3);
      \draw[->] (s3) edge[bend left] node[left] {\(0 \mid p\)} (s2);
      \draw[->] (s3) edge[out=150,in=-150,loop,looseness=3] node[below left] {\(t_1 \land \overline{t_2} \mid q\)} (s3);
    \end{tikzpicture}
    \caption{Symbolic \gkat automaton.}\label{fig:symb-gkat-aut-equiv-cfgkat-prog}
  \end{subfigure}
  \hfil
  \begin{subfigure}{0.45\textwidth}
    \centering
    \begin{tikzpicture}[node distance=10mm]
      \node (init) {};
      \node[textstate] (start) [right=3mm of init] {\({\color[RGB]{27,94,32}(p +_{t_1} 1) \seq \mathrm{loop}}\)};
      \node[textstate] (t1start) [right=4mm of start] {\({\color{teal} t_1}\)};
      \node[textstate] (t2start) [below=of t1start] {\({\color{teal} t_2}\)};
      \node[textstate] (loop) [right=5mm of t1start] {\(\color{blue} \mathrm{loop}\)};
      \node[textstate] (t1loop) [below=of loop] {\(\color{teal} t_1\)};
      \node[textstate] (dead) [below left=3mm and 5mm of t1loop] {\(\color{red} (t_1 \land \overline{t_2}) \seq \mathrm{loop}\)};
      \node[textstate] (t2loop) [right=5mm of dead] {\(\color{teal} t_2\)};
      \node (startRet) [below=4mm of start] {\(\retc\)};
      \node (loopRet) [right=5mm of loop] {\(\retc\)};
      \draw[->] (init) edge (start);
      \draw[->] (start) edge[output-edge] node[left] {\(\overline{t_1} \land\overline{t_2}\)} (startRet);
      \draw[->] (loop) edge[output-edge] node[above] {\(\overline{t_1} \land \overline{t_2}\)} (loopRet);
      \draw[->] (start) edge node[auto] {\(p\)} (t1start);
      \draw[->] (t1start) edge[dashed] (t2start);
      \draw[->] (t1start) edge (loop);
      \draw[->] (t2start) edge[bend left] (loop);  
      \draw[->] (loop) edge[bend left] node[above] {\(p\)} (t2start);
      \draw[->] (loop) edge node[auto] {\(q\)} (t1loop); 
      \draw[->] (t1loop) edge (t2loop); 
      \draw[->] (t2loop) edge[dashed] (dead);  
      \draw[->] (dead) edge node[auto,swap] {\(q\)} (t1loop);
    \end{tikzpicture}
    \caption{SymKAT~\cite{pous_SymbolicAlgorithmsLanguage_2015} automaton.}\label{fig:symkat-aut-equiv-cfgkat-prog}
  \end{subfigure}
  \caption{A comparison between symbolic \gkat and SymKAT automata using the example in~\Cref{fig:high-level-equiv-cfgkat-prog}. We use ``\(\mathrm{loop}\)'' to refer to the while loop in~\Cref{fig:high-level-equiv-cfgkat-prog}.}\vspace{-.4cm}
\end{figure}

In~\Cref{fig:symb-gkat-aut-equiv-cfgkat-prog}, we present the symbolic GKAT automata generated from the program in~\Cref{fig:high-level-equiv-cfgkat-prog}, using symbolic derivative described in~\Cref{sec:aut-construction-cfgkat}.
Compare to the \gkat automaton generated by the original algorithm in~\Cref{fig:aut-high-level-equiv-cfgkat-prog}, our new automaton is more compact in several ways. Firstly, two trace-equivalent states \({\color{blue}s_{1}}\) and \({\color{blue}s_{2}}\) are merged into one state \({\color{blue}\mathrm{loop}}\).
This efficiency is due to the derivative construction generally producing more compact automata than other original constructions, e.g. Thompson's construction~\cite{owens_RegularexpressionDerivativesReexamined_2009}.
Second, \textbf{multiple transitions between the same states are consolidated into Boolean}
\begin{wrapfigure}{l}{4cm}
  \vspace{-9.5mm}
  \begin{tikzpicture}
    \node[textstate,align=left,
    draw=gray, dashed, rounded corners=5pt, thick, draw opacity=0.5,
    label={[anchor=south west]north west:concrete transitions}] (conc)
    {
    \({\color[RGB]{27,94,32}s_{0}} \transvia{t_{1} t_{2} \mid p} {\color{blue}s_{1}}\) 
    \\
    \({\color[RGB]{27,94,32}s_{0}} \transvia{t_{1} \overline{t_{2}} \mid p} {\color{blue}s_{1}}\)
    };
    \node[textstate,align=left,below=1cm of conc,
    draw=gray, dashed, rounded corners=5pt, thick, draw opacity=0.5,
    label={[anchor=north west]south west:symbolic transitions}] (symb)
    {
    \({\color[RGB]{27,94,32}(p +_{t_{1}} 1) \seq \mathrm{loop}} \transvia{t_{1} \mid p} {\color{blue}\mathrm{loop}}\)
    };

    \draw[->, thick]
      (symb) -- node[auto, align=left,node font=\scriptsize]{concretize} (conc);
  \end{tikzpicture}
  \vspace{-10mm}
\end{wrapfigure}
\textbf{expressions}; intuitively, a Boolean expression represents the collection of atoms that satisfies them.
For example, the two transitions from \(\color[RGB]{27,94,32} s_{0}\) to \(\color{blue}s_{1}\) in~\Cref{fig:aut-high-level-equiv-cfgkat-prog} is merged into one in~\Cref{fig:symb-gkat-aut-equiv-cfgkat-prog}, as shown on the left, where \(t_1 t_2\) and \(t_1 \overline{t_2}\) are the two atoms that satisfy the Boolean expression \(t_1\).
Unlike the non-symbolic case, the addition of an unused primitive test \(t_{3}\) will not enlarge the symbolic automata; thereby avoiding the exponential size blowup of the \gkat automata.

To decide the trace equivalence of symbolic \gkat automata, we lift the non-symbolic algorithm in~\Cref{alg:nonsymb-bisim} to the symbolic setting, which is presented in~\Cref{alg:symb-bisim}, where \(\rho(s)\) contains all the atoms that are rejected by \(s\), and \(\epsilon(s)\) contains all the Boolean expression that \(s\) accepts (see~\Cref{sec:symb-aut-detail} for details).
As mentioned in~\Cref{rem:symb-check-eps-first}, the algorithm will immediately conclude that \(s\) is not trace-equivalent to \(u\) when there exists an atom \( \alpha \), s.t. \(s \transOut{\alpha} \retc\) and \(u \transvia{\alpha \mid p} u'\); since that would imply \(\alpha \in \bigvee  \varepsilon(s)\) but \( \alpha \notin \bigvee \varepsilon(u)\), thus \(\bigvee \varepsilon(s) \not\equiv \bigvee \varepsilon(u)\), contradicting the first requirement on line 9 of the symbolic algorithm.

\begin{figure}[t]
    \begin{algorithmic}[1]
        \Function{equiv}{$s, u$}
        \If {\Call{rep}{$s$} = \Call{rep}{$u$}} {\Return true;} \EndIf
        \State\Call{union}{$s$, $u$};
        \If {\Call{knownDead}{$s$}} 
            \Return \Call{isDead}{$u$}
        \EndIf
        \If {\Call{knownDead}{$u$}} 
            \Return \Call{isDead}{$s$}
        \EndIf
        \State\Return {
        \IndentState
          \( \bigvee \varepsilon(s) \equiv \bigvee \varepsilon(u) \mathrel{\&\!\&}\)
        \IndentState 
          \(s \transvia{b \mid p} s' \text{ and } (b \land  \rho (u)) \not\equiv 0 \text{ implies \Call{isDead}{$s'$}} \mathrel{\&\!\&}\)
        \IndentState
          \(u \transvia{b \mid p} u' \text{ and } (b \land  \rho (s)) \not\equiv 0 \text{ implies \Call{isDead}{$u'$}} \mathrel{\&\!\&}\)
        \IndentState 
          \(s \transvia{b \mid p} s' \text{, } u \transvia{a \mid q} u' \text{, } b \land a \not\equiv 0 \text{, and } p \neq q \text{ implies \Call{isDead}{$s'$}, \Call{isDead}{$u'$}} \mathrel{\&\!\&}\)
        \IndentState 
          \(s \transvia{b \mid p} s' \text{ and } u \transvia{a \mid p} u' \text{ and } b \land a \not\equiv 0 \text{ implies } \text{\Call{equiv}{$s', u'$}}\)
        }
        \EndFunction
    \end{algorithmic}
    \caption{Symbolic on-the-fly bisimulation algorithm, lifting the algorithm in~\Cref{alg:nonsymb-bisim} to symbolic \gkat automata.
    We use \algKnownDead{} and \algIsDead{} to denote dead-state detection: \algIsDead($s$) runs a DFS on all the \emph{non-blocking transitions} (\Cref{rem:blocked-transition}) from \(s\), and return if \(s\) can reach an accepting state \(u\), which is checked by \( \bigvee  \varepsilon(u) \not\equiv 0\).
    }\label{alg:symb-bisim}\vspace{-.4cm}
\end{figure}

We also present the corresponding SymKAT automaton~\cite{pous_SymbolicAlgorithmsLanguage_2015} in~\Cref{fig:symkat-aut-equiv-cfgkat-prog}, in which all Boolean expressions are unfolded into BDDs during the generation process, creating several intermediate nodes marked by \({\color{teal} t_{1}}\) and \({\color{teal} t_{2}}\).
Our symbolic \gkat automaton preserves the Boolean expressions on transitions,  providing the flexibility to leverage external tools for their efficient representation and satisfiability solving.
In~\Cref{sec:implementation}, we detailed our implementation which offers orders-of-magnitude performance improvement over SymKAT, in both runtime and memory consumption using  miniSAT~\cite{_BooleworksLogicngrs_2025} and a state-of-the-art BDD solver~\cite{lewis_PclewisCuddsys_2025,somenzi_CUDDCUDecision_2018}.
Notably, our approach enables the user to select the most efficient solver for specific tasks, and to employ high‑performance BDD implementations as black‑boxes---capabilities lacking in prior BDD‑based methods. 


\myparagraph{Roadmap for the rest of the paper} 
In the current section, we illustrated our approaches through concrete examples. 
In the rest of the paper, we will delve deeper into the formalism, including the correctness guarantees of our symbolic and non-symbolic algorithm, their complexity, and the on-the-fly symbolic automata constructions for \cfgkat.
\textbf{\Cref{sec:correctness-and-complexity}} proves the termination, soundness, completeness, and complexity of our decision procedures, and compares it with prior approaches. 
\textbf{\Cref{sec:aut-construction-cfgkat}} provides an on-the-fly algorithm to generate symbolic \gkat automata from \cfgkat programs. 
Despite its complexity, we are able to give an efficient decision procedure for \cfgkat based on our framework.
\textbf{\Cref{sec:implementation}} provides more details on our implementations, and our evaluation results against the current tools to check \CFOrGKAT equivalences. 

\section{Correctness and Complexity}\label{sec:correctness-and-complexity}


In this section, we will first provide some comparison with the previous decision procedure by~\citet{smolka_GuardedKleeneAlgebra_2020} and prove that both of our algorithms in~\Cref{alg:nonsymb-bisim,alg:symb-bisim} are sound and complete.
Additionally, as mentioned in~\Cref{rem:check-inf-trace-equiv}, these algorithms can also be adapted to check for infinite-trace equality~\cite{schmid_GuardedKleeneAlgebra_2021}, thus we prove the correctness of these variants as well.
Details of these correctness proofs can be found 
\ifFull
in~\Cref{sec:non-symb-corr-proof,sec:symb-corr-proof}.
\else   
in the full version~\cite{zhangOutrunningBigKATs2026}.
\fi
Finally, we also discuss the theoretical complexity of both the symbolic and non-symbolic algorithms; whereas the real-world performances of our implementations are presented in~\Cref{sec:imple-evaluation}.

\subsection{Comparison With Previous Decision Procedures}

In this section, we outline the tradeoffs and advantages of our \emph{non-symbolic} decision procedure v.s. the original procedures proposed by~\citet{smolka_GuardedKleeneAlgebra_2020}.

\myparagraph{Differences in complexity}
In the worst case, all calls to \(\algIsDead\) in our algorithm will iterate through the entire automata at most once, thanks to the caching of explored states; and the \(\algEquiv\) function will again iterate through all the states at most once, since explored states are cached via the union-find object~\cite{hopcroft_LinearAlgorithmTesting_1971}.
In the same vein, the original algorithm will also make two passes in the worst case: one for normalization and one for bisimulation. Unfortunately, 
\begin{wrapfigure}{l}{3cm}
  \vspace{-9mm}
  \centering
  \begin{tikzpicture}
    \node (initS) {};
    \node[textstate] (s0) [right=2mm of initS] {\({s_0}\)};
    \node[textstate] (s1) [right=6mm of s0] {\({s_1}\)};
    \node[textstate] (s2) [right=6mm of s1] {\({s_2}\)};
    \draw[->] (initS) edge (s0);
    \draw[->] (s0) edge node[auto,swap] {\(\alpha \mid p\)} (s1);
    \draw[->] (s1) edge node[auto,swap] {\(\alpha \mid p\)} (s2);
    \draw[->] (s2) edge[draw=red, bend right] node[auto,swap] {\(\color{red}\alpha \mid p\)} (s0);

    \node (initU) [below=8mm of initS] {};
    \node[textstate] (u0) [right=2mm of initU] {\({u_0}\)};
    \node[textstate] (u1) [right=6mm of u0] {\({u_1}\)};
    \node[textstate] (u2) [right=6mm of u1] {\({u_2}\)};
    \draw[->] (initU) edge (u0);
    \draw[->] (u0) edge node[auto] {\(\alpha \mid p\)} (u1);
    \draw[->] (u1) edge node[auto] {\(\alpha \mid p\)} (u2);
    \draw[->] (u2) edge[draw=red, bend left] node[auto] {\(\color{red}\alpha \mid q\)} (u0);
  \end{tikzpicture}
  \vspace{-12mm}
\end{wrapfigure}
when the automata have many dead states, the original algorithm can outperform our new algorithm.
Take the pair of \gkat automata on the left, these automata are indeed trace-equivalent, despite \(s_2\) and \(u_2\) executes different actions (transition marked in {\color{red}red}), since \emph{all their states are dead}.
The original algorithm~\cite{smolka_GuardedKleeneAlgebra_2020} will iterate through both automata (6 states) for the normalization, then conclude that all the states are dead; then in the bisimulation phase, only the start states \(s_0\) and \(u_0\) is checked. 
In total, the original algorithm performed 7 checks, including 6 states for normalization and 1 pair of states for bisimulation.
Yet, the new algorithm will first check the transition of the pairs \((s_0, u_0)\) and \((s_1, u_1)\), then because \(s_2\) and \(u_2\) executes different actions, failing the premise on line 12 of~\Cref{alg:nonsymb-bisim} and invoking \(\algIsDead(s_2)\) and \(\algIsDead(u_2)\).
Our new algorithm will proceed to iterate through both automata again, to confirm that \(s_2\) and \(u_2\) are dead.
Thus, the new algorithm performs \emph{9 checks}: 3 pairs of state during bisimulation, and 6 states during liveness checking; exceeding the \emph{7 checks} performed by the original algorithm.

When taking the number of primitive tests into account, our algorithm achieves PSPACE complexity, by storing one atom at a time, instead of storing the entire automaton for normalization.
This complexity represents a theoretical speedup over the original EXPSPACE algorithm given by~\citet[Section 5.3]{smolka_GuardedKleeneAlgebra_2020}.
Although it is known that the complexity of \gkat is within PSPACE using the decision procedure for \kat~\cite{cohen_ComplexityKleeneAlgebra_1996a}, leveraging Savitch's theorem; our algorithm is the \textbf{first concrete PSPACE algorithm} for \gkat equivalence.

However, the worst-case complexity does not necessarily reflect real-world performance. 
For example, our algorithm only explores the reachable states of the mismatched in bisimulation (see~\Cref{sec:on-the-fly-norm-overview}) instead of the entire automata, this often leads to fewer explored states in examples like~\Cref{fig:aut-equiv-cfgkat-prog}.
Additionally, many real-world transformations produce bisimilar programs, since it is a stronger equivalence than trace equivalence; in these cases, our algorithm skips dead state detection entirely, potentially halving the complexity of the original algorithm.

\myparagraph{Differences in implementation}
Despite its simple description, the normalization procedure is comparatively complex to implement.
Consider implementation by \citet{zhang_2024_13938565}: first, the algorithm makes one pass to identify all the accepting states, and produces a backward map, which maps every state \(s\) to all states that transitions to \(s\); then with the backward map, the algorithm performs a reverse search from accepting states, marking the reachable states as live states; finally, the algorithm executes another forward pass to reroute all the transition to dead state, i.e. those are not marked as live, into rejection.

The backward map and additional passes are necessary because the normalization procedure requires the identification of \emph{all the live states}.
And since liveness is a global property (a property that cannot be determined by only inspecting a individual state and its transitions), it is unclear how to identify all the live/dead states without an additional pass of the automata.

Our algorithm, on the other hand, forgoes identifying the liveness of all the states, removing the need for a backward map and additional passes.
Specifically, \(\algIsDead\) only checks the liveness of one state at a time: it continues when detecting a dead state; halts \emph{the entire equivalence check} otherwise. 
Additionally, since \(\algIsDead(s)\) explore and stores all the reachable states from \(s\), if \(s\) is dead, we can mark all its reachable state as dead, allowing \(\algIsDead\) to iterate through the automata at most once, no matter how many times it is called.
The same property doesn't hold for live states: dead states can be reachable from live ones.
These intricate designs allow our simple algorithm to not only outperform the original algorithm in many real-world tasks, but also provide simpler implementations.

\myparagraph{Challenges in proving correctness}
However, this efficiency also introduces significant challenge in the correctness proof: although the intuition behind our algorithm is simple, our algorithm no longer \emph{explicitly} identifies a bisimulation on normalized \gkat automata, which underpins the correctness result of~\citet{smolka_GuardedKleeneAlgebra_2020}.
Indeed, proving our algorithm yields such bisimulation is challenging; instead, we show conditions in our algorithm defines a progression~\cite{pous_CompleteLatticesUpTo_2007}, and that trace equivalence is the maximal invariant of this progression. 
This approach not only provides a sufficiently strong induction hypothesis to establish the desired results, but also enables the use of up-to technique~\cite{pous_CompleteLatticesUpTo_2007}, an important framework to justify the use of union-find data-structure.
Additionally, we also resolved a significant challenge in applying the up-to technique: because its subtlety, the progression defined by our algorithm is no longer compatible with the transitive closure%
\ifFull%
, which we demonstrate in~\Cref{exp:trans-clos-incompatible}.
\else  
.
\fi
To solve this problem, we show that transitive closure satisfy a weaker condition than compatiblity, but nevertheless allows us to utilize up-to technique in our correctness proof.

\subsection{Correctness of the Non-symbolic Algorithm}

In this section, we offer a sketch of the correctness of the algorithm in~\Cref{alg:nonsymb-bisim}, to provide intuition. The complete proof can be found 
\ifFull
in~\Cref{sec:non-symb-corr-proof}. 
\else   
in the full version~\cite{zhangOutrunningBigKATs2026}.
\fi
\begin{definition}[Trace Semantics~\cite{smolka_GuardedKleeneAlgebra_2020}]\label{def:trace-sem}
  Fix a \gkat automaton \(A \triangleq (S, s_0, \zeta)\), the finite trace semantics \(\sem{s}\) for a state \(s \in S\) is defined as follows: 
  if \(s \transOut{\alpha} \retc\), then \(\alpha \in \sem{s}\); if \(s \transvia{\alpha \mid p} s'\) and \(w \in \sem{s'}\), then \(\alpha p \cdot w \in \sem{s}\). 
  Two states \(s\) and \(u\) are \emph{trace equivalent} when \(\sem{s} = \sem{u}\).
\end{definition}

Our proof uses coinduction up-to \cite{pous_CompleteLatticesUpTo_2007, bonchi_GeneralAccountCoinduction_2017a}.
First, we define an instance of progression~\cite{pous_CompleteLatticesUpTo_2007}: for a pair of relations over state sets \(R, R' \subseteq S \times U\), we say that \(R\) \emph{progresses} into \(R'\) when all the conditions from lines 9-13 in~\Cref{alg:nonsymb-bisim} holds for all atoms, except the last condition is replaced by \(s \transvia{\alpha \mid p} s'\) and \(u \transvia{\alpha \mid p} u'\) implies \((s', u') \in R'\).
We write \(R \transCorr{} R'\) when \(R\) progresses into \(R'\); and we say \(R\) is an \emph{invariant} when \(R \transCorr{} R\). 
Then we prove two core theorems (1) trace equivalence is the maximal invariant
\ifFull
in~\Cref{thm:trace-equiv-is-max-invari}
\fi
; (2) the equivalence closure \(e\) is ``sound'' with respect to \(\transCorr{}\)
\ifFull
in~\Cref{thm:equiv-closure-sound-trace-prog}
\fi, i.e. \(R \transCorr{} e(R)\) implies \(R\) is contained in an invariant, and thereby contained in the trace equivalence relation.

Then we show that the algorithm in~\Cref{alg:nonsymb-bisim} indeed identifies an invariant. 
We let \(R\) denote all the pairs of states that have been sent into the \(\algUnion\) function and let \(R_0\) denote the \(R\) at the start of the function \(\algEquiv\).
If \(\algEquiv(s, u)\) returns true, then we can derive \((s, u) \in e(R)\) and \((R \setminus R_0) \transCorr{} e(R)\) both hold after the program terminates.
Finally, we assume that \(\algEquiv\) starts with a fresh union-find object, \(R_0\) will be empty and \(R \transCorr{} e(R)\), implying \(R\) is contained in an invariant.
Then because \((s, u) \in R\), the states \(s\) and \(u\) are trace-equivalent.

\begin{theorem}[Correctness]\label{thm:non-symb-corr}
   \(\algEquiv\) in~\Cref{alg:nonsymb-bisim} always terminates when the input states are from finite \gkat automata; and the algorithm is \emph{sound and complete}, in the sense that
  (1)\ \(\algEquiv(s, u)\) returns true iff \(s\) and \(u\) are trace-equivalent;
  (2) when we replace \(\algIsDead\) and \(\algKnownDead\) with false, \(\algEquiv(s, u)\) returns true if and only if \(s\) and \(u\) are infinite-trace equivalent i.e. bisimilar.
\end{theorem}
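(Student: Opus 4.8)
The plan is to follow the coinductive up-to scheme just sketched, reading the predicate defined by lines~9--13 of \Cref{alg:nonsymb-bisim} as a progression functional and writing \(R \transCorr{} R'\) when \(R\) progresses into \(R'\). Termination I would dispatch first and separately: on finite \gkat automata there are finitely many state pairs, and \algUnion{} only ever merges two distinct classes, so each genuine recursive descent of \algEquiv{} strictly decreases the number of union-find classes (the guard \(\algRep(s) = \algRep(u)\) on line~2 short-circuits every already-merged pair). Since \algIsDead{} is a DFS over a finite automaton that caches its verdicts and hence visits each state at most once across the whole run, the total work is finite.

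For correctness I would first prove the two structural lemmas. That \emph{trace equivalence is the maximal invariant} reduces to checking that \(\sem{\cdot}\)-equality satisfies each clause of \(\transCorr{}\) via \Cref{def:trace-sem}; the only non-routine clauses are the mismatch ones, where I must show that a trace-equivalent partner to a transition \(s \transvia{\alpha \mid p} s'\) that is met by a rejection \(u \transRej{\alpha}\) (or by a differently-labelled transition) forces \(s'\) to be dead, so that consulting \algIsDead{} is exactly right. That the equivalence closure \(e\) is \emph{sound} for \(\transCorr{}\)---\(R \transCorr{} e(R)\) implies \(R\) sits inside a genuine invariant and hence inside trace equivalence---is what licenses caching explored pairs. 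Soundness of the algorithm then follows by tracking the set \(R\) of all pairs ever handed to \algUnion{}: a successful terminating call gives \((s,u) \in e(R)\) together with \((R \setminus R_0) \transCorr{} e(R)\), and a fresh union-find (\(R_0 = \emptyset\)) yields \(R \transCorr{} e(R)\), whence \((s,u) \in R \subseteq {}\) trace equivalence. For completeness I would argue the converse by induction on the recursion: if \(\sem{s} = \sem{u}\) then every clause the algorithm tests holds---the acceptance clauses because the trace sets agree on the empty trace, and the mismatch clauses because the corresponding state is genuinely dead and \algIsDead{} reports so---hence \algEquiv{} never returns false.

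The hard part, flagged in the comparison section, will be reconciling this up-to argument with the union-find caching. The standard development of~\cite{pous_CompleteLatticesUpTo_2007,bonchi_GeneralAccountCoinduction_2017a} asks the closure \(e\) to be \emph{compatible} with \(\transCorr{}\), yet the dead-state side conditions break compatibility of transitive closure: a chain \(s \mathrel{R} t \mathrel{R} u\) of separately-checked pairs need not propagate its ``isDead'' obligations along the composite, so \(\transCorr{}\) does not commute with transitive closure. My plan is therefore to isolate a strictly weaker transfer property that transitive (equivalence) closure does enjoy against \(\transCorr{}\)---one still strong enough to conclude \(R \transCorr{} e(R) \Rightarrow R \subseteq {}\) trace equivalence---and to verify it by a careful case analysis of precisely the delicate clauses (one side transitioning while the other rejects, and both transitioning on distinct actions). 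I expect this weakening, and the proof that it suffices, to be the technical crux of the whole argument.

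Finally, for part~(2) I would observe that replacing \algIsDead{} and \algKnownDead{} by the constant-false function collapses every dead-state obligation on lines~10--12 to \emph{false}, so those clauses can only be met vacuously; the surviving progression is precisely the standard bisimulation progression on \gkat automata, whose maximal invariant is bisimilarity, which coincides with infinite-trace equivalence~\cite{schmid_GuardedKleeneAlgebra_2021}. The termination argument is unchanged, and the soundness-of-\(e\) and invariant-tracking steps apply verbatim, so the modified \algEquiv{} returns true on \((s,u)\) exactly when \(s\) and \(u\) are bisimilar.
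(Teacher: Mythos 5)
Your proposal is correct and follows essentially the same route as the paper: the same progression functional parameterised by \algIsDead{} (resp.\ the constant-false predicate), the same two pillars (trace equivalence as the maximal invariant, and soundness of the equivalence closure despite the failure of compatibility of transitive closure), the same invariant-tracking of the pairs handed to \algUnion{} with \(R_0=\emptyset\), and the same termination measure in disguise. The ``strictly weaker transfer property'' you anticipate needing is exactly the paper's resolution---transitive closure is \emph{sound} (in the sense of Pous) rather than compatible for \(\transCorr{\algIsDead}\), established via a lemma transferring trace-membership equivalence along \(R\)-chains, after which post-composition with the compatible reflexive and symmetric closures yields soundness of \(e\).
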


Our non-symbolic algorithm improves upon the original \gkat decision procedure, while preserving correctness; but like the original algorithm it does not scale well with a large number of primitive tests.
To address this issue, we introduce a symbolic decision procedure, which is outlined in~\cref{sec:symb-aut-overview}.




\subsection{Symbolic Automata and Its Equivalence Algorithm}\label{sec:symb-aut-detail}

Before we introduce the correctness proof, we first formally define \emph{symbolic \gkat automata}, where instead of giving a transition result for every atom, \textbf{we compact atoms that lead to the same results into Boolean expressions}.

\begin{definition}[Symbolic \gkat Automata]
  A \emph{symbolic \gkat automaton} over primitive actions \(\Sigma\) and primitive tests \(T\) consists of four components: a finite state set \(S\), a start state \(s_0 \in S\), an accepting function \( \varepsilon \), and a transition function \(\delta\) with the following types:
  \begin{align*}
     \varepsilon &: S \to \powSet{\BExp_T}; &
     \delta &: S \to \powSet{\BExp_T \times S \times \Sigma},
  \end{align*}
  where \(\powSet{X}\) represents the power set of \(X\).
  In addition, we require a \emph{disjointedness condition}: if we take any two distinct elements \(r_{1}, r_{2} \in \varepsilon(s) \cup \delta(s),\) then Boolean expressions within \(r_{1}, r_{2}\) need to be disjoint i.e. their conjunction is equivalent to falsehood under Boolean Algebra.
\end{definition}
We inherit the notation of non-symbolic \gkat automata: we write \(s \transOut{b} \retc\) when \(b \in \varepsilon(s)\) and \(s \transvia{b \mid p} s'\) when \((b, s', p) \in \delta(s)\).
In~\Cref{fig:symb-gkat-aut-equiv-cfgkat-prog}, we provided an example of symbolic \gkat automata, and the accepting function \(\varepsilon\) and transition function \(\delta\) for the state \({\color{blue}\mathrm{loop}}\) here as an example:
\begin{align*}
  \varepsilon({\color{blue}\mathrm{loop}}) & = \{\overline{t_{1}} \land \overline{t_{2}}\} &
  \delta({\color{blue}\mathrm{loop}}) & = \{
      (t_{2}, {\color{blue}\mathrm{loop}}, p),
      (t_{1} \land \overline{t_{2}}, {\color{red}(t_{1} \land \overline{t_{2}}) \seq \mathrm{loop}}, q).
  \}
\end{align*}
As discussed in~\Cref{sec:symb-aut-overview}, \(s \transvia{b \mid p} s'\) represents that \(s \transvia{\alpha \mid p} s'\) for all \(\alpha\) satisfying \(b\); and \(s \transOut{b} \retc\) represents that \(s \transOut{\alpha} \retc\) for all \(\alpha\) satisfying \(b\), and the rest of the atoms lead to rejection.
We do not include the rejection function \(\rho\) in the signature of the symbolic automata, since it can be computed from \(\varepsilon\) and \(\delta\):
\[
  \rho(s) \triangleq \bigwedge \{\overline{b} \mid b \in \varepsilon(s)\} \cup \{\overline{b} \mid (b, s', p) \in \delta(s)\}.
\]

To give semantics to a symbolic \gkat automaton, we will first unfold it into a \emph{concrete \gkat automaton}, and then compute the semantics of said concrete automaton.
Indeed, the concretization process exactly defines the operation to unfold all Boolean expressions into atoms.

\begin{definition}[Concretization]\label{def:concretization}
  Given a symbolic \gkat automaton, with start state \(s_0 \in S\), accepting function \(\varepsilon\), and a transition function \(\delta\); then we can define a \gkat automaton which is the \emph{concrete} automaton, with the same state set \(S\) and start state \(s_0\) and the following transition function:
  \[\zeta_{\varepsilon, \delta}(s, \alpha) \triangleq \begin{cases}
    \retc & \text{if there exists } b \in \varepsilon(s), \alpha \text{ satisfies } b \\  
    (s', p) & \text{if there exists } (b, s', p) \in \delta(s), \alpha \text{ satisfies } b \\  
    \bot & \text{otherwise}
  \end{cases}\]
\end{definition}
Because of the disjointedness condition, the concretization of symbolic automata is well-defined: for all state \(s \in S\), there exists at most one Boolean expression \(b \in \varepsilon(s)\) or transition \((b, s', p) \in \delta(s)\) s.t. \(\alpha\) satisfies \(b\).

\begin{remark}[Determinacy]
  The concretization of any symbolic \gkat automaton, like every other \gkat automaton, will deterministically accept, reject, or transition, when given a state and an atom.
  Thus, despite the use of power set in its signature, symbolic GKAT automata are still \emph{deterministic}, making our derivative constructions in~\Cref{sec:aut-construction-cfgkat} more akin to the deterministic derivative by~\citet{brzozowski_DerivativesRegularExpressions_1964} than the nondeterministic version by~\citet{antimirov_PartialDerivativesRegular_1996}.
\end{remark}

\begin{remark}[Blocked Transition]\label{rem:blocked-transition}
  Transitions \(s \transvia{b \mid p} s'\) where \(b \equiv 0\) are called \emph{blocked}. 
  For example, \({\color{red}(t_1 \land t_2) \seq \mathrm{loop}} \transvia{0 \mid p} {\color{blue}\mathrm{loop}}\) in~\Cref{fig:symb-gkat-aut-equiv-cfgkat-prog} is blocked.
  Removing such transitions will not change the concrete automata, as there doesn't exist an atom \(\alpha \leq 0\).
  Thus, we do not include blocked transition in future drawings of automata and in our implementation (see \Cref{sec:imple-detail}).
\end{remark}

By definition, the equivalence of symbolic GKAT automata coincides with the equivalence of the concrete automata.
Therefore, the correctness of the symbolic algorithm (\Cref{alg:symb-bisim}) can be reduced into the correctness of the non-symbolic algorithm (\Cref{alg:nonsymb-bisim}).
Indeed, the strategy for soundness of completeness proof is by defining a \emph{symbolic progression}, which corresponds to the conditions checked on line 9-13 in the symbolic algorithm.
Then we show the correspondence between symbolic progression on the symbolic automata and progression on the concrete automata; this allows us to derive the same post-condition as the non-symbolic correctness on the concrete automata, which concludes the correctness.

\begin{theorem}[Correctness]\label{thm:symb-corr}
  The algorithm in~\Cref{alg:symb-bisim} always terminates if the inputs are from finite \gkat automata, and it is \emph{sound and complete}:
  (1) \(\algEquiv\) returns true if and only if the inputs are trace-equivalent in the concretization;
  (2) if \(\algIsDead\) and \(\algKnownDead\) in~\Cref{alg:symb-bisim} is replaced with false, then \(\algEquiv\) returns true if and only if the inputs are bisimilar in the concretization.
\end{theorem}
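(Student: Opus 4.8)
The plan is to reduce \Cref{thm:symb-corr} to the non-symbolic correctness result of \Cref{thm:non-symb-corr} through the concretization of \Cref{def:concretization}. Since trace equivalence (and infinite-trace equivalence) of a symbolic automaton is \emph{defined} as that of its concretization, it suffices to show that the conditions tested by \Cref{alg:symb-bisim} on a pair of symbolic automata \(A, B\) are logically equivalent to those tested by \Cref{alg:nonsymb-bisim} on the concretizations \(A^\circ, B^\circ\). Concretely, I would define a notion of \emph{symbolic progression} mirroring lines 9--13 of \Cref{alg:symb-bisim}, prove it corresponds exactly to the non-symbolic progression \(\transCorr{}\) on the concrete automata, and then re-run the post-condition argument behind \Cref{thm:non-symb-corr}---namely that a successful call yields \((s,u) \in e(R)\) with \((R \setminus R_0) \transCorr{} e(R)\)---verbatim on \(A^\circ, B^\circ\).

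A preliminary step is to show that \algIsDead and \algKnownDead agree in the two settings. A transition \((b, s', p) \in \delta(s)\) contributes an atom-labelled edge to \(A^\circ\) precisely when \(b \not\equiv 0\), i.e. when it is non-blocked (\Cref{rem:blocked-transition}), and a state \(u\) accepts some atom precisely when \(\bigvee \varepsilon(u) \not\equiv 0\). Hence the reachability graph that \algIsDead explores over non-blocking transitions is exactly the graph of \(A^\circ\), and a state is dead in the symbolic automaton if and only if it is dead in its concretization.

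The central step is the \emph{progression correspondence} lemma: for relations \(R, R'\) over states, \(R\) symbolically progresses into \(R'\) on \(A, B\) if and only if \(R \transCorr{} R'\) on \(A^\circ, B^\circ\). Here the \emph{disjointedness condition} carries the argument, since it guarantees that every atom \(\alpha\) satisfies at most one accepting or transition expression per state, so that concretization is single-valued and grouping atoms is lossless. I would check the conditions one at a time: \(\bigvee \varepsilon(s) \equiv \bigvee \varepsilon(u)\) holds iff \(s\) and \(u\) accept the same atoms, because the satisfying atoms of \(\bigvee \varepsilon(s)\) are exactly the atoms on which \(s \transOut{\alpha} \retc\); a test such as ``\(s \transvia{b \mid p} s'\) and \((b \land \rho(u)) \not\equiv 0\)'' holds iff some atom \(\alpha \models b\) has \(u \transRej{\alpha}\) (using \(\alpha \models \rho(u)\) iff \(u \transRej{\alpha}\)), and since \(s'\) is shared by all \(\alpha \models b\), the grouped implication is equivalent to the per-atom implication of line 10 of \Cref{alg:nonsymb-bisim}; the action-mismatch case is analogous; and the satisfiability test \(b \land a \not\equiv 0\) on line 13 singles out exactly the target pairs \((s', u')\) for which some common atom induces matching transitions, so both algorithms issue the same set of recursive \algEquiv calls.

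Because lines 2--7 of the two algorithms are syntactically identical, the correspondence of lines 9--13 together with that of \algIsDead implies that the symbolic run on \(A, B\) performs the same union operations, the same liveness queries with the same answers, and returns the same Boolean as the non-symbolic run on \(A^\circ, B^\circ\). Termination follows from finiteness of the symbolic state set and the union-find caching, exactly as in the non-symbolic case. Claim (1) then follows by instantiating \Cref{thm:non-symb-corr} at \(A^\circ, B^\circ\); claim (2) follows identically once \algIsDead and \algKnownDead are replaced by the constant false, since the correspondence lemma is unaffected by this substitution and \Cref{thm:non-symb-corr} already covers the bisimilarity case. I expect the main obstacle to be the progression correspondence lemma, and within it the careful use of disjointedness to justify that collapsing a universally-quantified per-atom condition into a single satisfiability check is sound---in particular, that one satisfying witness suffices precisely because all atoms labelling a given symbolic edge share the same target state.
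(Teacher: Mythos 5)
Your proposal follows essentially the same route as the paper's own proof: the paper likewise defines a \emph{symbolic progression} mirroring lines 9--13 of \Cref{alg:symb-bisim}, proves a correspondence theorem stating that \(R \transCorrSymb{P} R'\) holds iff \(R \transCorr{P} R'\) holds in the concretizations (with the disjointedness condition doing exactly the work you identify), and then transfers the inductive invariant \((s,u) \in e(R)\) and \((R \setminus R_0) \transCorr{\algIsDead} e(R)\) from \Cref{thm:non-symb-corr} to the concrete automata. The only cosmetic difference is that your claim of a lockstep identity between the two runs (same union operations in the same order) is stronger than needed and than what the paper argues; the invariant-based transfer you also describe is the part that actually carries the proof.
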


\myparagraph{Complexity analysis of symbolic algorithm}
Since the algorithm require solving both SAT and UNSAT problems, our algorithm in~\Cref{alg:symb-bisim} is technically in PSPACE with respect to the number of primitive tests, even though both SAT and UNSAT are reasonably efficient in real-world applications.
Our algorithm achieves a complexity improvement over SymKAT~\cite{pous_SymbolicAlgorithmsLanguage_2015}, which despite its efficiency is technically in EXPSPACE.
This improvement is due to us opt to not completely unfold Boolean expressions into BDDs, which can be exponential in the number of primitive tests in the worst case.
This theoretical improvement is also reflected in real-world performance in our evaluations, where we observed orders-of-magnitude speed up compare to SymKAT, as detailed in~\Cref{sec:imple-evaluation}.

\section{Application: Decision Procedure for \cfgkat}\label{sec:aut-construction-cfgkat}

In~\Cref{sec:overview,sec:correctness-and-complexity}, we presented decision procedures for \gkat automata and symbolic \gkat automata. We now want to apply these algorithms to decide equivalence between programs directly and therefore need algorithms to compile expressions to automata. 
Although the procedure to generate \gkat automata from \gkat and \cfgkat expressions has been presented in previous works~\cite{smolka_GuardedKleeneAlgebra_2020,schmid_GuardedKleeneAlgebra_2021,zhang_CFGKATEfficientValidation_2025}, we need an algorithm to generate \emph{symbolic \gkat automata} on-the-fly in order to take advantage of our efficient algorithm in~\Cref{alg:symb-bisim}.

In this section, we introduce an algorithm to generate symbolic \gkat automata from \cfgkat programs, in the following steps:
\begin{center}
  \begin{tikzpicture}
    \node[flowchartnode, inner sep=5pt] (prog) [align=center] {\cfgkat\\Programs};
    \node[flowchartnode, inner sep=6pt] (cf-aut) [right=1.5cm of prog, align=center] {Symbolic\\\cfgkat\\Automata};
    \node[flowchartnode, inner sep=5pt] (gkat-aut) [right=2cm of cf-aut, align=center] {Symbolic\\\gkat\\Automata};
    \node[flowchartnode, inner sep=3pt] (equiv) [right=1.25cm of gkat-aut] {Equivalence};

    \draw[->, thick] (prog) edge 
      node[above=3pt] {Derivative} 
      node[below, align=center] {\Cref{sec:cf-gkat-symb-deriv}\\\Cref{sec:cfgkat-computing-loops}} 
      (cf-aut);
    \draw[->, thick] (cf-aut) edge 
      node[above=3pt] {Jump Resolution} 
      node[below] {\Cref{sec:cfgkat-label-extraction}} 
      (gkat-aut);
    \draw[->, thick] (gkat-aut) edge 
      node[below] {\Cref{alg:symb-bisim}} 
      (equiv);
  \end{tikzpicture}
\end{center}
Similar to the original algorithm~\cite{zhang_CFGKATEfficientValidation_2025}, we rely on an intermediate representation called symbolic \cfgkat automata, which holds additional information to indicate how the computation can be resumed.
Our algorithm first computes the symbolic \cfgkat automata on-the-fly using derivatives, as detailed in~\Cref{sec:cf-gkat-symb-deriv,sec:cfgkat-computing-loops}; then for each transition, we connect all the jump continuations, generated by \command{goto}s, to the appropriate program location, through \emph{label extraction} and \emph{jump resolution}, which we describe in~\Cref{sec:cfgkat-label-extraction}.

\begin{remark}[Thompson's Construction v.s.\ Derivatives]
  \citet{zhang_CFGKATEfficientValidation_2025} defines the operational semantics of \cfgkat through Thompson's construction, which inductively computes the \emph{entire automaton} from a program.
  While our algorithms can decide the equivalences between fully computed automata, the derivative leverages the on-the-fly nature of our algorithm: it computes \emph{one-step transitions} from a given program, allowing immediate termination of automata generation upon failures. 
  However, transitioning from Thompson's construction to the derivative is not straightforward, as the derivative lacks access to the automata of sub-expressions.
  One challenge is to ensure that \(\comBrk\) and \(\comCont\) resume correctly at their parent loop in the presence of nested loops, which is elegantly resolved using the unfolding operator (see~\Cref{fig:sem-CF-GKAT} and~\Cref{rem:unfolding-operator}).
\end{remark}



One significant application of \cfgkat is validating the control-flow structuring phase of decompilers~\cite{yakdan_NoMoreGotos_2015,erosa_TamingControlFlow_1994}, which we used in some of our experiments. 
Interestingly, in the process of benchmarking our implementations, we identified a bug~\cite{nationalsecurityagency_possbile_2025} in the industry standard decompiler: Ghidra.

\begin{remark}[Equivalence for \gkat Expressions]
  Because \cfgkat is a super system of \gkat, the compilation algorithm for \gkat (
  \ifFull  
  in~\Cref{ap:aut-construction-gkat}
  \else
  in full version~\cite{zhangOutrunningBigKATs2026}
  \fi) can be derived from that of \cfgkat, by ignoring indicator variables and continuations.
\end{remark}

\subsection{Syntax and Symbolic \cfgkat Automata}

In addition to the common control-flow structues like if-states and while-loops, \cfgkat also includes indicator variable and non-local control-flow structures like break, continue, return, and goto's. 
Formally, \cfgkat expressions are defined over a finite set of indicator variables \(x \in X\), a finite set of indicator values \(i \in I\), and a finite set of labels \(l \in L\):
\begin{align*}
  \BExpI_T \ni b, c  \triangleq {}&
  0
  \mid 1
  \mid t \in T
  \mid {x = i}
  \mid b \lor c
  \mid b \land c
  \mid \overline{b} \\
  \cfgkat \ni e, f  \triangleq {}&
    b \in \BExpI
    \mid p \in \Sigma
    \mid {x := i}
    \mid {e \unfold f}
    \mid e \seq f
    \mid e +_b f
    \mid \\
  &
    e^{(b)}
    \mid { \comBrk}
    \mid { \comCont}
    \mid { \comRet}
    \mid { \comGoto{l}}
    \mid { \comLabel{l}}
\end{align*}
\(\BExpI_T\) are Boolean expressions generated by indicator tests $x=i$ and the primitive tests from  \(T\); and we write \(\BExpI\) when \(T\) is irrelevant or clear from context.
For a \cfgkat expression \(e\) to be a \cfgkat \emph{program} (or \emph{well-formed expressions}), we enforce the following constraint: (1) Each \(\comLabel{l}\) appears at most once in the program \(e\). (2) If \(\comGoto{l}\) appears in \(e\), then so does \(\comLabel{l}\). (3) \(\comBrk\) or \(\comCont\) either appears in loop, or the left side of an unfolding \(\unfold\). (4) All programs must end in return, i.e. of the form \(f \seq \comRet\) or \(\comRet\).

In addition to the syntax introduced in the original paper~\cite{zhang_CFGKATEfficientValidation_2025}, we added two additional constructions: \(\comCont\), a common control-flow manipulating structure used in while-loops; and the \emph{unfolding operator} \(e \unfold f\) is an expression replaces sequencing in the loop unfolding.

\begin{remark}[The Unfolding Operator]\label{rem:unfolding-operator}
  It is common to assume that the loop \(e^{(b)}\) can be unfolded into an if-statement: \(e^{(b)} \equiv e \seq e^{(b)} +_b 1\)~\cite{smolka_GuardedKleeneAlgebra_2020}.
  However, this identity no longer holds \emph{in \cfgkat} with the addition of break and continue. 
  Consider \(\comBrk^{(1)}\) (i.e. \(\comWhile{\true}{\comBrk}\)): the behavior of this program is equivalent to \(1\) (i.e. \(\comSkip\)), as the execution always enters the loop, then immediately breaks and halts, without executing any action.
  However, \(\comBrk \seq \comBrk^{(1)} +_b 1\) is equivalent to \(\comBrk +_b 1\), because \(\comBrk\) annihilates the right side of sequencing~\cite{zhang_CFGKATEfficientValidation_2025}.

  Therefore, we introduce the unfolding operator \(\unfold\), which satisfies the equality \(e^{(b)} \equiv e \unfold e^{(b)} +_b 1\), by properly handling the \(\comBrk\) and \(\comCont\) within \(e\).
  Additionally, we can soundly encode do-while loop: \(\comDoWhil{e}{b}\), as \(e \unfold e^{(b)}\), even when \(e\) contains \(\comCont\) or \(\comBrk\).
\end{remark}

As we have noted before, our goal for this section is to construct a symbolic \gkat automaton for each \cfgkat program, enabling the use of symbolic equivalence algorithm to efficiently decide equivalences of \cfgkat terms.
However, to get there, we will first construct symbolic \cfgkat automata, which is the symbolic version of the \cfgkat automata~\cite{zhang_CFGKATEfficientValidation_2025}.
The difference between \gkat and \cfgkat automata is that \cfgkat automata also keep track of \emph{ending state}, by outputting \emph{continuations}, allowing the computation to \emph{resume} under appropriate conditions.
Formally, we define the set of continuations \(C\) with the following elements: let \(\pi: X \to I\) ranges over all the indicator states,
\begin{itemize}[leftmargin=*, nosep]
  \item \(\acc{\pi}\): the program terminates normally with the ending assignment \(\pi\);  
  \item \(\retc\): the program halts by a \(\comRet\) statement, annihilating further actions;  
  \item \(\brkc{\pi}\): the program will break out of the current loop with assignment \(\pi\);  
  \item \(\contc{\pi}\): the program will restart the current loop with assignment \(\pi\);  
  \item \(\jmpc{l, \pi}\): the program jumps to label \(l\) with assignment \(\pi\).
\end{itemize}

\begin{definition}
  A \emph{symbolic \cfgkat automaton} is a tuple $(S, s_0, \varepsilon, \delta)$, with a set of states \(S\), a start state \(s_0\), an accepting function \( \varepsilon \), and a transition function \( \delta \) with the following signature:
  \begin{align*}
   \varepsilon & \colon S \to \mathcal{P}(\BExp \times C), &
   \delta & \colon S \to \mathcal{P}(\BExp \times S \times \Sigma).
  \end{align*}
\end{definition}
We use \(s \transOut{b}{c}\) to denote \((b, c) \in \varepsilon(s)\); \(s \transvia{b \mid p} s'\) to denote \((b, s', p) \in \delta(s)\); and \(s \transRes{b} r\) to denote \((b, c) \in \varepsilon(s)\) if \(r = c \in C\) or \((b, s', p) \in \delta(s)\) if \(r = (s', p) \in S \times \Sigma\).

\subsection{Generating Symbolic \cfgkat Automata With Derivatives}\label{sec:cf-gkat-symb-deriv}

We use derivatives~\cite{schmid_GuardedKleeneAlgebra_2021,kozen_CoalgebraicTheoryKleene_2017,antimirov_PartialDerivativesRegular_1996,brzozowski_DerivativesRegularExpressions_1964} to generate \cfgkat automata, which will enable early termination upon counter-examples in the symbolic algorithm.
We first introduce several operations on indicator states \(\pi: X \to I\).
The first operation reassigns a variable \(x \in X\) of a indicator state \(\pi\) to \(i \in I\):
\[\pi[x \mapsto i](y) \triangleq \begin{cases}
  i & \text{if } x = y \\  
  \pi(y) & \text{otherwise}
\end{cases}\]
The second operation will resolve all the indicator tests in \(b \in \BExpI\) when given an indicator state \(\pi\), giving us a condition \(b[\pi]: \BExp\) to be used for transitions in a \cfgkat automaton:
\begin{align*}
  \false[\pi] & \triangleq \false & 
  \true[\pi] & \triangleq \true \\
  (x = i)[\pi] & \triangleq \begin{cases}
    \true & \pi(x) = i \\  
    \false & \pi(x) \neq i 
  \end{cases} & 
  (b \land c)[\pi] & \triangleq b[\pi] \land c[\pi] \\
  (b \lor c)[\pi] & \triangleq b[\pi] \lor c[\pi] &
  ( \neg  b)[\pi] & \triangleq \neg (b[\pi])
\end{align*}

\begin{figure}[t]
  \centering
  \begin{tabular}{c}
    \toprule
    \multicolumn{1}{l}{\scriptsize\textbf{Primitives Rules}}\\
    \(
    \begin{aligned}
      & (\pi, \comAssert{b}) \transOut{b[\pi]}{\acc{\pi}} &
      & (\pi, \comBrk)  \transOut{\true}{\brkc{\pi}} &  
      & (\pi, \comCont)  \transOut{\true}{\contc{\pi}} \\
      & (\pi, \comGoto{l})  \transOut{\true}{\jmpc{l, \pi}} &
      & (\pi, \comRet)  \transOut{\true}{\retc} &
      & (\pi, x := i)  \transOut{\true}{\acc{\pi[x \mapsto i]}} \\
      && &(\pi, p)  \transvia{\true \mid p} (\pi,\comSkip) 
    \end{aligned} 
    \) 
    \\ \midrule
    \begin{tabular}{c | c}
      \multicolumn{1}{l}{\scriptsize\textbf{Sequencing Rules}} &
      \multicolumn{1}{|l}{\scriptsize\textbf{Unfolding Rules}} \\
      \begin{minipage}{0.4\textwidth}
        \vspace{-15pt}
        \begin{mathpar}\mprset{sep=3mm}
          \inferrule
          {(\pi, e) \transOut{b}{\acc{\pi'}} \\ (\pi', f) \transRes{a} r}
          {(\pi, e \seq f) \transRes{b \land a} r}
          \and 
          \inferrule
          {(\pi, e) \transvia{b \mid p} ( \pi', e')}
          {(\pi, e \seq f) \transvia{b \mid p} ( \pi', e' \seq f)}
          \and 
          {\inferrule
          {(\pi, e) \transOut{b}{c} \\ c \neq \acc{\pi'}}
          {(\pi, e \seq f) \transOut{b}{c}}}
        \end{mathpar}
      \end{minipage}
      &
      \begin{minipage}{0.6\textwidth}
        \vspace{-10pt}
        \begin{gather*}
          {\inferrule
          {(\pi, e) \transOut{b}{\brkc{\pi'}}}
          {(\pi, e \unfold f) \transOut{b}{\acc{\pi'}}}}
          \\
          {\mprset{sep=3mm}
          \inferrule
          {(\pi, e) \transOut{b}{c} \\ c \in \{\contc{\pi'}, \acc{\pi'}\} \\ (\pi', f) \transRes{a} r}
          {(\pi, e \unfold f) \transRes{b \land a}{r}}}
          \\ 
          {\mprset{sep=3mm}
          \inferrule
          {(\pi, e) \transOut{b}{c} \\ c \in \{\retc, \jmpc{(l, \pi')}\}}
          {(\pi, e \unfold f) \transOut{b}{c}}}
          \\
          \inferrule
          {(\pi, e) \transvia{b \mid p} ( \pi', e')}
          {(\pi, e \unfold f) \transvia{b \mid p} ( \pi', e' \unfold f)}
        \end{gather*}
      \end{minipage}
    \end{tabular} 
    \\ \midrule
    \multicolumn{1}{l}{\scriptsize\textbf{Conditional Rules and Loop Rules}} \\
    \begin{minipage}{\textwidth}
    \vspace{-10pt}
    \[
        \inferrule
        {(\pi, e) \transRes{a} r}
        {(\pi, e +_b f) \transRes{b[\pi] \land a} r}
        \quad 
        \inferrule
        {(\pi, f) \transRes{a} r}
        {(\pi, e +_b f) \transRes{ \neg b[\pi] \land a} r}
        \quad 
        \inferrule
        {\\}
        {(\pi, e^{(b)}) \transOut{ \neg b[\pi]}{\acc{\pi}}}
        \quad 
        {\inferrule
        {(\pi, e \unfold e^{(b)}) \transRes{a} r}
        {(\pi, e^{(b)}) \transRes{b[\pi] \land a} r}}
    \]
    \end{minipage}
    \\ \bottomrule
  \end{tabular}
  \caption{Operational rules to derive the continuation semantics of \cfgkat.}\label{fig:sem-CF-GKAT}\vspace{-.4cm}
\end{figure}

In~\Cref{fig:sem-CF-GKAT}, we present the operational semantics of a \cfgkat expression as a symbolic \cfgkat automaton.
The states in the automaton consist of two components: \(\pi: X \to I\) is the current indicator state, and \(e: \cfgkat\) is the remaining expression to execute.
With the rules in~\Cref{fig:sem-CF-GKAT}, we can now construct an automaton for every expression \(e\) with a starting indicator state \(\pi\), which we denote by \(\langle \pi, e \rangle\). The states of \(\langle \pi, e \rangle\) are all the pairs of expressions and indicator-states that are reachable from \((\pi, e)\) by the transition function \(\delta\) defined in~\Cref{fig:sem-CF-GKAT}, the start state is \((\pi, e)\), and the accepting and transition functions are \(\varepsilon\) and \(\delta\) defined in~\Cref{fig:sem-CF-GKAT} restricted to the reachable states.

\begin{example}\label{exp:cf-gkat-loop-body-example}
  We compute automata \(\langle x \mapsto 3, e \rangle\) and \(\langle x \mapsto 4, e \rangle\) for the loop body \(e\) in~\Cref{fig:loop-cfgkat-program-cumu}.
  For the starting indicator state \(x \mapsto 3\), the branches \(x = 1\) and \(x = 0\) are not executed, because \((x=1)[x \mapsto 3] \triangleq \false\) and \((x=0)[x \mapsto 3] = \false\).
  Thus, the automaton can only go into third branch when \(x = 3\).
  On the other hand, when \(x\) is set to \(4\), \((x = 3 \land b)[x \mapsto 4]\) also evaluates to false; therefore, only the last branch will execute. We draw both automata as follows (with blocked transitions removed):
  \begin{center}
    \vspace{-0.2cm}
    \begin{tikzpicture}
      \node (init) {};
      \node[textstate] (e) [right=5mm of init] {\((x \mapsto 3, e)\)};
      \node[textstate] (skip) [right=of e] {\((x \mapsto 3, \comSkip)\)};
      \node (x3Acc) [above=3mm of skip] {\(\acc{(x \mapsto 3)}\)};
      \node (x4Acc) [above=3mm of e] {\(\acc{(x \mapsto 4)}\)};
      \draw[->] (init) edge (e);
      \draw[->] (e) edge[output-edge] node[auto] {\(b\)} (x4Acc);
      \draw[->] (e) edge node[auto] {\(\overline{b} \land a \mid p\)} (skip);
      \draw[->] (skip) edge[output-edge] node[auto,swap] {\(1\)} (x3Acc);

      \node (x3aut) [
        fit=(init) (e) (skip) (x3Acc) (x4Acc), 
        draw=gray, dashed, rounded corners=5pt, thick, draw opacity=0.5,
        label={[anchor=south west]north west:\(\langle x \mapsto 3, e \rangle:\)}] {};
    \end{tikzpicture}
    \quad
    \begin{tikzpicture}
      \node (init) {};
      \node[textstate] (e) [right=5mm of init] {\((x \mapsto 4, e)\)};
      \node[textstate] (skip) [right=of e] {\((x \mapsto 4, \comSkip)\)};    
      \node (x4Acc) [above=3mm of skip] {\(\acc{(x \mapsto 4)}\)};
      \draw[->] (init) edge (e);
      \draw[->] (e) edge node[auto] {\(a \mid p\)} (skip);
      \draw[->] (skip) edge[output-edge] node[auto,swap] {\(1\)} (x4Acc);

      \node (x4aut) [
        fit=(init) (e) (skip) (x4Acc), 
        draw=gray, dashed, rounded corners=5pt, thick, draw opacity=0.5,
        label={[anchor=south west]north west:\(\langle x \mapsto 4, e \rangle:\)}] {};
    \end{tikzpicture}
  \end{center}
\end{example}




\subsection{Computing Loops}\label{sec:cfgkat-computing-loops}

As shown in~\Cref{fig:sem-CF-GKAT}, the transitions of most expressions can be computed from transitions of its sub-expressions.
While loops \(e^{(b)}\) are the only exception, as the premise for the second loop rule relies on the transition of its unfolding \(e \unfold e^{(b)}\), which is a larger expression than \(e^{(b)}\).

\begin{figure}[t]
  \begin{subfigure}[b]{0.3\textwidth}
    \centering
    \begin{align*}
      & \colorWhile{x \neq 2}{\\[-.5ex]  
      & \quad\colorIfThen{x = 1}{\\[-.5ex]  
      & \qquad x := 0;} \\[-.5ex]  
      & \quad\colorElseIf{x = 0}{\\[-.5ex]    
      & \qquad x := 1;} \\[-.5ex]  
      & \quad\colorElseIf{x = 3 \land b}{ \\[-.5ex]  
      & \qquad x := 4;} \\[-.5ex]  
      & \quad \colorElse{ \\[-.5ex]  
      & \qquad (\colorAssert{a}); p;}
      }
    \end{align*}
    \caption{}\label{fig:loop-cfgkat-program-cumu}
  \end{subfigure}
  \hfil
  \begin{subfigure}[b]{0.7\textwidth}
    \centering
    \begin{tikzpicture}
      \node (x1Init) {};
      \node (x2Init) [below=13mm of x1Init] {};
      \node (x3Init) [below=13mm of x2Init] {};
      \node[textstate] (x1) [right=5mm of x1Init] {\((x \mapsto 1, e^{(x \neq 2)})\)};
      \node[textstate] (x2) [right=5mm of x2Init] {\((x \mapsto 2, e^{(x \neq 2)})\)};
      \node[textstate] (x3) [right=5mm of x3Init] {\((x \mapsto 3, e^{(x \neq 2)})\)};
      \node[textstate] (x4) [right=of x3] {\((x \mapsto 4, e^{(x \neq 2)})\)};
      \node (x2acc) [right=5mm of x2] {\(\acc{(x \mapsto 2)}\)};

      \draw[->] (x1Init) edge (x1);
      \draw[->] (x2Init) edge (x2);
      \draw[->] (x3Init) edge (x3);
      \draw[->] (x2) edge[output-edge] node[above] {\(\true\)} (x2acc);
      \draw[->] (x3) edge[out=-125, in=-55, looseness=2] node[below=-1mm] (x3loop) {\(\overline{b} \land a \mid p\)} (x3);
      \draw[->] (x3) edge node[above] {\(b \land a \mid p\)} (x4);
      \draw[->] (x4) edge[out=-125, in=-55, looseness=2] node[below] (x4loop) {\(a \mid p\)} (x4);

      \node (x3aut) [
        fit=(x3Init) (x3) (x4) (x3loop) (x4loop), 
        draw=gray, dashed, rounded corners=5pt, thick, draw opacity=0.5,
        label={[anchor=south west]north west:\(\langle x \mapsto 3, e^{(x\neq2)} \rangle:\)}] {};

      \coordinate (x2boxEnd) at ([xshift=0cm]x4.east |- x2Init);
      \node (x2aut) [
        fit=(x2Init) (x2) (x2acc) (x2boxEnd), 
        draw=gray, dashed, rounded corners=5pt, thick, draw opacity=0.5,
        label={[anchor=south west]north west:\(\langle x \mapsto 2, e^{(x\neq2)} \rangle:\)}] {};

      \coordinate (x1boxEnd) at ([xshift=0cm]x4.east |- x1Init);
      \node (x1aut) [
        fit=(x1Init) (x1) (x1boxEnd), 
        draw=gray, dashed, rounded corners=5pt, thick, draw opacity=0.5,
        label={[anchor=south west]north west:\(\langle x \mapsto 1, e^{(x\neq2)} \rangle:\)}] {};
    \end{tikzpicture}
    \caption{}\label{fig:loop-aut-cfgkat-cumu}
  \end{subfigure}
  \caption{Program \(e^{(x \neq 2)}\), on the left, where \(e\) denotes the loop body. On the right, the automata for different starting indicator assignments,  with blocked transitions removed.}\label{fig:loop-cfgkat-example-cumu}\vspace{-.4cm}
\end{figure}

In this section, we will show that the transitions of \(e^{(b)}\) is computable from the transition of \(e\).
for \cfgkat, computing the transition of \(e^{(b)}\) is particularly difficult, because the first action might be encountered after arbitrarily long iteration of the loop body.
For example, consider the transition of state \((x \mapsto 3, e^{(x \neq 2)})\) in~\Cref{fig:loop-aut-cfgkat-cumu}, its self-loop under condition \(\overline{b} \land a\) is derivable in one iteration of the body; yet the transition to \((x \mapsto 4, e^{(x \neq 2)})\) under condition \(b \land a\) requires two iterations (transitions of the loop body \(e\) is in~\Cref{exp:cf-gkat-loop-body-example}):
\begin{center}

  \scalebox{0.75}{
    \(\inferrule*
    {
      \inferrule*
      {(x \mapsto 3, e) \transvia{\overline{b} \land a \mid p} (x \mapsto 3, e)}
      {\vdots}
    }
    {(x \mapsto 3, e^{(x \neq 2)}) \transvia{\overline{b} \land a \mid p} (x \mapsto 3, e^{(x \neq 2)})}
    \quad
    \inferrule*
    {
      \inferrule*
      {(x \mapsto 3, e) \transOut{b} \acc{(x \mapsto 4)} \\
      (x \mapsto 4, e) \transvia{a \mid p} (x \mapsto 4, \comSkip)}
      {\vdots}
    }
    {(x \mapsto 3, e^{(x \neq 2)}) \transvia{b \land a \mid p} (x \mapsto 4, e^{(x \neq 2)})}
\)
}
\end{center}
\ifFull
Formal derivation of these two results can be found in~\Cref{ap:omitted} (\Cref{fig:deriv-out-going-transition-cumu-loop-example}). 
\fi

To compute the transition of a loop, \textbf{we will need to iterate the transition of the loop body}.
\citet{zhang_CFGKATEfficientValidation_2025} named this process ``iteration lifting'', which iterates the result for a fixed atom, until the program outputs a continuation, executes an action, or goes into an infinite loop without executing any
actions. However, the situation in the symbolic case is more complicated, as we will need to combine the result of the first iteration with the rest. 
\begin{wrapfigure}{l}{5.5cm}
\vspace{-1.2cm}
  \begin{minipage}{6cm}
    \[
      \inferrule*
      {
        \inferrule*{
          (\pi, e) \transOut{a} \acc{\pi'} \\  
          (\pi', e^{(b)}) \transOut{d} c
        }
        {\vdots}
      }
      {(\pi, e^{(b)}) \transOut{b[\pi] \land a \land d} c}
    \]
  \end{minipage}
  \vspace{-.9cm}
\end{wrapfigure}
For example, take the derivation on the left, we need to combine the condition to enter the loop \(b[\pi]\), the condition for the first iteration \(a\), with the condition of the rest \(d\), to get the final condition \(b[\pi] \land a \land d\).
To account for this difference, our accumulation function, defined below, outputs a function \(h\) in the recursive case to transform the result obtained from recursion.

\begin{definition}[Accumulation]\label{def:accumulation}
  We use \(N\) to denote the type of input, and \(R\) the type of output.
  \(\accu\) takes two functions \(\res: N \to \mathcal{P}(R)\), \(\con: N \to \mathcal{P}(N \times (R \to R))\), and an initial input \(n \in N\):
  \begin{itemize}[nosep,leftmargin=*]
    \item \(\res\) takes an input in \(N\) and return a set of results computed in one step.
    \item \(\con\) takes an input \(n  \in N\) and return a set of next iterations \(n'  \in N\) paired with the function \(h: R \to R\) to be applied onto the results generated by \(n'\).
  \end{itemize}
  Formally, \(\accu_{\res, \con}(n)\) is the set of elements derivable by the inference rules:
  \vspace{-.8em}
  \begin{mathpar}
    \inferrule
    {r \in \res(n)}
    {r \in \accu_{\res, \con}(n)}
    \and 
    \inferrule
    {(n', h) \in \con(n) \\ r  \in \accu_{\res, \con}(n')}
    {h(r) \in \accu_{\res, \con}(n)}
    \vspace{-.8em}
  \end{mathpar}
\end{definition}


We then show that \(\accu\) is computable under some side conditions; this result will later give as a way to compute the transitions for while-loops and jump continuations expressing their transitions using \(\accu\).

\begin{theorem}[Computablity]\label{thm:cumulation-computes}
  When the set of input \(N\) is finite and all the \(h\) in \((n', h) \in \con(n)\) is computable, then \(\accu_{\res,\con}(n)  \in \mathcal{P}(R)\) is computable. 
\end{theorem}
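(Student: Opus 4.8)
The plan is to read $\accu_{\res,\con}$ as the least solution of a finite system of set equations indexed by $N$, and then compute that solution by a saturating fixed-point iteration (equivalently, a memoised depth-first traversal of the finite index set). First I would recast the inductive \Cref{def:accumulation} as a fixed point. Define the operator $\Phi$ on the lattice $(\mathcal{P}(R))^{N}$ of functions from $N$ to subsets of $R$, ordered pointwise by inclusion, by
\[
  \Phi(A)(n) \;=\; \res(n)\ \cup\ \{\,h(r)\mid (n',h)\in\con(n),\ r\in A(n')\,\}.
\]
Since $N$ is finite this is a complete lattice and $\Phi$ is monotone (indeed continuous when each $\con(n)$ is finite), so by Knaster--Tarski it has a least fixed point $A^{\star}=\bigsqcup_{k}\Phi^{k}(\bot)$ with $\bot(n)=\emptyset$. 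I would then show $A^{\star}(n)=\accu_{\res,\con}(n)$ for every $n$: the inclusion $A^{\star}(n)\subseteq\accu(n)$ follows by induction on the approximant index $k$ using exactly the two inference rules, and the reverse inclusion by induction on the height of a derivation tree, noting that every derivation of height at most $k$ lands in $\Phi^{k}(\bot)$.

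Next I would argue that one application of $\Phi$ is effective. As $N$ is finite I may range over its elements; I assume $\res(n)$ and $\con(n)$ are finite and effectively presented (they are returned and manipulated by the surrounding algorithm); and each $h$ appearing in $\con(n)$ is computable by hypothesis, so $h(r)$ can be evaluated for every $r$ currently recorded in $A(n')$. Hence, from a finite representation of $A$, a finite representation of $\Phi(A)$ is produced in finite time, and every approximant $\Phi^{k}(\bot)$ is computable.

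The crux is termination: I would iterate $A_{k+1}=\Phi(A_{k})$ from $A_{0}=\bot$ until $A_{k+1}=A_{k}$ and return $A_{k}(n)$, and the only way this can fail is an infinite strictly ascending chain $A_{0}\subsetneq A_{1}\subsetneq\cdots$. This is precisely the delicate point, since a cycle in the ``$\con$-graph'' on $N$ lets a derivation apply the cyclic $h$'s arbitrarily many times. I would close the gap with an ascending-chain condition on the per-node sets: in every instantiation of interest $R$ is finite up to the relevant equivalence (for the loop and jump derivatives of \Cref{sec:cfgkat-computing-loops}, $R$ is the finite set of transitions into the finitely many reachable states, labelled by Boolean expressions taken modulo Boolean equivalence over the fixed finite test set), so each $A_{k}(n)$ ranges over a finite domain and the chain must stabilise after at most $|N|\cdot|R|$ steps. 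More in the on-the-fly spirit, I would implement this as a memoised DFS over $N$: re-entering an already-visited node exhibits a $\con$-cycle, and the side condition guarantees that such a cyclic unfolding contributes nothing new (it corresponds to looping without progress), so the search may prune and visits each of the finitely many nodes only finitely often.

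Projecting the stabilised $A_{k}$ at the input $n$ then yields $\accu_{\res,\con}(n)$ by the fixed-point characterisation, and since every step is effective, $\accu_{\res,\con}(n)\in\mathcal{P}(R)$ is computable. I expect the main obstacle to be exactly the termination argument: monotonicity and effectivity of $\Phi$ are routine, but forcing the fixed-point iteration to halt genuinely requires the finiteness of $N$ together with a bound on how cycles can enrich the accumulated results; pinning down the precise condition under which cyclic re-entry adds nothing is the delicate part, and is what makes the later application to while-loops and jump resolution go through.
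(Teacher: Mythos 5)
Your route is genuinely different from the paper's. You compute \(\accu_{\res,\con}\) as the least fixed point of a monotone operator \(\Phi\) on \((\mathcal{P}(R))^{N}\) by Kleene iteration until saturation; the paper explicitly mentions this fixpoint characterisation but deliberately does not use it, instead defining an auxiliary function \(\accu'_{\res,\con}(n,M)\) that returns \(\emptyset\) as soon as the current node already occurs in the set \(M\) of nodes on the recursion path, and otherwise recurses with \(M\cup\{n\}\). Termination of the paper's algorithm is then immediate --- every recursive call strictly shrinks the finite set \(N\setminus M\) --- with no assumption on \(R\) at all; its correctness argument identifies \(\accu_{\res,\con}(n)\) with the set of elements admitting a derivation of size at most \(|N|\), i.e.\ one whose chain of \(\con\)-steps never revisits a node. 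So the trade is: the paper gets unconditional termination under exactly the stated hypotheses (\(N\) finite, each \(h\) computable), at the cost of leaning on the claim that cyclic derivations contribute nothing beyond acyclic ones; you compute the genuine least fixed point, at the cost of a termination argument that needs more than the stated hypotheses.

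That extra requirement is the one real gap in your proposal as a proof of the literal statement. You are right that a \(\con\)-cycle lets a derivation apply the cyclic \(h\)'s arbitrarily often, and under only ``\(N\) finite and each \(h\) computable'' this can produce an infinite strictly ascending chain (take an injective, non-idempotent \(h\) on the cycle), so \(\accu_{\res,\con}(n)\) itself can be infinite and your iteration \(A_{k+1}=\Phi(A_{k})\) need not stabilise. Your repair --- an ascending-chain condition obtained by taking \(R\) finite up to the relevant equivalence in the intended instantiations --- therefore proves a theorem with a strictly stronger hypothesis than the one stated. To match the paper, replace global saturation by the path-pruned recursion (your memoised-DFS aside is exactly the paper's \(\accu'\)) and then argue that every derivable element has a derivation whose \(\con\)-path is simple; that is the step the paper's own induction also relies on, and it is discharged in the applications because the \(h\)'s there are conjunction guards \(\langle b|\), so re-entering a node only re-guards results already obtained.
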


In the rest of the section, we use the \(\accu\) function to express the \(\delta\) and \(\varepsilon\) for a loop \(e^{(b)}\), then we prove the correctness of this encoding by \(\accu\). 
Combining the computability of \(\accu\) with the following two results gives a correct and computable procedure to compute \(\delta(\pi, e^{b})\) and \(\varepsilon(\pi, e^{(b)})\).

\myparagraph{Expressing \(\delta\) of while-loops}
We first express \(\delta(\pi, e^{(b)})\) as \(\accu(\res_{\delta}, \con_{\delta}, \pi)\), where the elements of \(\res_{\delta}\) and \(\con_{\delta}\) are defined by the following inference rules:
\vspace{-.8em}
\begin{mathpar}
  \inferrule
  {(\pi, e) \transvia{a \mid p} ( \pi', e')}
  {(b[\pi]  \land a, (\pi', e' \unfold e^{(b)}), p)  \in \res_{\delta}(\pi)}
  \and
  \inferrule
  {(\pi, e) \transOut{a}{c} \\ c \in \{\acc{\pi'}, \contc{\pi'}\}}
  {(\pi', \langle b[\pi] \land  a|)  \in \con_\delta (\pi)}
  \vspace{-.8em}
\end{mathpar}
where \(\langle b |\) guards the transition condition with \(b \in \BExp\):
\(
  \langle b |(a, s, p) \triangleq (b \land a, s, p).
\)

Intuitively, \(\res_{\delta}\) corresponds to the transitions that can computed by single iteration of the loop body.
Whereas \(\con_\delta\) describes the way to compute all the transitions that can be formed by more-than-one iterations of the loop body \(e\).
Indeed, the \(\res_{\delta}\) and \(\con_{\delta}\) function is obtained by unfolding the \emph{only two} ways to derive the transitions of \(e \unfold e^{(b)}\) and \(e^{(b)}\):
\begin{center}
  \scalebox{0.8}{\(
    \inferrule
    {
      \inferrule
      {(\pi, e) \transvia{a \mid p} (\pi', e')}
      {(\pi, e \unfold e^{(b)}) \transvia{a \mid p} (\pi', e' \unfold e^{(b)})}
    }
    {(\pi, e^{(b)}) \transvia{b[\pi] \land a \mid p} (\pi', e' \unfold e^{(b)})}
  \)}
  \qquad
  \scalebox{0.8}{\(
    \inferrule
    {
      \inferrule
      {(\pi, e) \transOut{a}{c} \\ c \in \{\acc{\pi'}, \contc{\pi'}\} \\ (\pi', e^{(b)}) \transvia{d \mid p} (\sigma, f)}
      {(\pi, e \unfold e^{(b)}) \transvia{a \land d \mid p} (\sigma, f)}
    }
    {(\pi, e^{(b)}) \transvia{b[\pi] \land a \land d \mid p} (\sigma, f)}
  \)}
\end{center}
The first case is represented by \(\res_{\delta}\): the transition of \(e^{(b)}\) is directly computed from transitions of \(e\); the second case is represented by \(\con_{\delta}\): the transition of \((\pi, e^{(b)})\) is inductively constructed by the transition of \((\pi', e^{(b)})\), then the resulting condition \(d\) is guarded by \(b[\pi] \land a\) to become \(b[\pi] \land a \land d\).


The \(\varepsilon\) of while loop, i.e. elements in \(\varepsilon(\pi, e^{(b)})\) can also be expressed using \(\accu\) by defining \(\res_{\varepsilon}\) and \(\con_{\varepsilon}\), similarly to how we defined \(\res_\delta\) and \(\con_\delta\). 
\ifFull
We defer the precise formulation of this result in~\Cref{ap:omitted}.
\fi

\begin{theorem}[Correctness]\label{thm:loop-comp-cfgkat-correctness}
  For any \gkat expression \(e\), a Boolean expression \(b\), and an indicator assignment \(\pi\), the transition function \(\delta\) and continuation \(\varepsilon\) computed by \(\accu\) is the same as derived from the operational rules in~\Cref{fig:sem-CF-GKAT}:
  \begin{mathpar}
    \delta(\pi, e^{(b)}) = \accu(\res_{\delta}, \con_{\delta}, \pi) \and \text{and} \and
    \varepsilon(\pi, e^{(b)}) = \accu(\res_{\varepsilon}, \con_{\varepsilon}, \pi)
  \end{mathpar}
\end{theorem}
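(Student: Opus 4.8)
The plan is to prove each of the two claimed equalities by mutual set inclusion, exploiting the fact that both sides are inductively defined (least fixpoint) sets: \(\delta(\pi, e^{(b)})\) is the set of transitions derivable from the operational rules of~\Cref{fig:sem-CF-GKAT}, while \(\accu(\res_\delta, \con_\delta, \pi)\) is generated by the two inference rules of~\Cref{def:accumulation}. I focus on the \(\delta\) equality; the \(\varepsilon\) equality follows the same skeleton and is treated at the end.

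For the inclusion \(\accu(\res_\delta, \con_\delta, \pi) \subseteq \delta(\pi, e^{(b)})\), I would induct on the \(\accu\)-derivation. In the base case \(r \in \res_\delta(\pi)\), the element \(r = (b[\pi] \land a, (\pi', e' \unfold e^{(b)}), p)\) arises from some \((\pi, e) \transvia{a \mid p} (\pi', e')\); this is exactly the premise of the first displayed loop schema, which yields \((\pi, e^{(b)}) \transvia{b[\pi] \land a \mid p} (\pi', e' \unfold e^{(b)})\), so \(r \in \delta(\pi, e^{(b)})\). In the step case we have \((\pi', \langle b[\pi] \land a|) \in \con_\delta(\pi)\), coming from some \((\pi, e) \transOut{a}{c}\) with \(c \in \{\acc{\pi'}, \contc{\pi'}\}\), together with \(r \in \accu(\res_\delta, \con_\delta, \pi')\); the induction hypothesis gives \((\pi', e^{(b)}) \transvia{d \mid p} (\sigma, f)\) for \(r = (d, (\sigma, f), p)\), and feeding this into the second displayed schema produces \((\pi, e^{(b)}) \transvia{b[\pi] \land a \land d \mid p} (\sigma, f)\), which is precisely \(\langle b[\pi] \land a|(r) = h(r)\). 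The only bookkeeping to check is that the guard \(h = \langle b[\pi] \land a|\) rewrites the condition \(d\) into \(b[\pi] \land a \land d\) in lockstep with the second operational schema.

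For the converse \(\delta(\pi, e^{(b)}) \subseteq \accu(\res_\delta, \con_\delta, \pi)\), I would induct on the \emph{height} of the operational derivation, the crucial ingredient being an inversion lemma: the only two rule-shapes that can conclude an \emph{action} transition of \(e^{(b)}\) are the two displayed schemas. Indeed, the only loop rule producing a transition rather than an acceptance is the unfolding rule \((\pi, e^{(b)}) \transRes{b[\pi] \land a} r\) from \((\pi, e \unfold e^{(b)}) \transRes{a} r\); and among the unfolding rules only two conclude a transition of \(e \unfold e^{(b)}\), namely the direct rule from \((\pi, e) \transvia{a \mid p} (\pi', e')\) (matching the base case \(\res_\delta\)) and the rule passing through a \(\acc{\cdot}/\contc{\cdot}\) continuation of \(e\) followed by a transition of \(e^{(b)}\) (matching a \(\con_\delta\) step). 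In the latter case the recursive premise is again an action transition of \(e^{(b)}\), at indicator state \(\pi'\) and with strictly smaller height, so the induction hypothesis places it in \(\accu(\res_\delta, \con_\delta, \pi')\), and the \(\accu\) step rule then places \(h(r)\) in \(\accu(\res_\delta, \con_\delta, \pi)\).

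I expect this inversion lemma, and the well-foundedness of the induction, to be the main obstacle: the recursion intertwines \(e^{(b)}\) with the \emph{larger} expression \(e \unfold e^{(b)}\), so the induction cannot be on subexpression size but must be on derivation height, and one must argue that peeling off one unfolding layer strictly decreases the height while leaving a derivation that is again about \(e^{(b)}\) (only the indicator state changing). A conceptually cleaner repackaging I would mention is to phrase both sides as the least fixpoint of the single monotone operator \(\Phi(X)_\pi = \res_\delta(\pi) \cup \bigcup_{(\pi', h) \in \con_\delta(\pi)} h(X_{\pi'})\) on families of transition sets indexed by the finitely many indicator states: \(\accu\) is the least fixpoint of \(\Phi\) by construction, and the inversion lemma shows \((\delta(\pi, e^{(b)}))_\pi\) is the least \(\Phi\)-fixpoint as well, giving the equality uniformly in \(\pi\). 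Finally, the \(\varepsilon\) equality runs identically, the only change being extra base cases in \(\res_\varepsilon\)---the immediate exit \((\pi, e^{(b)}) \transOut{\neg b[\pi]}{\acc{\pi}}\), the break-to-normal-termination rule converting \(\brkc{\pi'}\) into \(\acc{\pi'}\), and the \(\retc\)/\(\jmpc{\cdot}\) pass-through---together with a \(\con_\varepsilon\) capturing the \(\acc{\cdot}/\contc{\cdot}\) recursion, none of which disturbs the inductive argument.
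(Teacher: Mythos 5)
Your proposal is correct and follows essentially the same route as the paper's proof: both directions of each equality are established by induction on the relevant derivation, with the key step being the inversion observation that only two rule shapes (one matching \(\res\), one matching \(\con\)) can conclude an action transition of \(e^{(b)}\), the latter recursing on a strictly smaller derivation about \(e^{(b)}\) at a new indicator state. Your remark about needing induction on derivation height rather than subexpression size, and the least-fixpoint repackaging via the operator \(\Phi\), are sensible refinements (the paper notes the fixpoint characterization separately in its proof of \Cref{thm:cumulation-computes}), but they do not change the substance of the argument.
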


\subsection{Label Extraction And Jump Resolution}
\label{sec:cfgkat-label-extraction}

By an induction and closure argument, we can show that given \(e\), then for all \(\pi: X \to I\), the states in \(\langle \pi, e \rangle\) can only output either \(\jmpc{l, \pi}\) or \(\retc\) as a continuation.
Indeed, because all the \(\comCont\) and \(\comBrk\) can only either appear on the left side of the unfold operation \(\unfold\) or in a loop, then the continuation \(\contc{\pi}\) and \(\brkc{\pi}\) will be resolved when computing the transition for \(\unfold\), as detailed in~\Cref{fig:sem-CF-GKAT}.
Also, since every program ends with \(\comRet\), all the \(\acc{\pi}\) continuation will be replaced by \(\retc\) in the end.
Hence, to obtain a symbolic \gkat automaton from \(\langle \pi, e \rangle\), we will only need to resolve the jump continuation by connecting it to the appropriate location of the program. 

Indeed, the location of each label \(l\) in \(e\) can be extracted syntactically, by recursively diving into the expression until the command \(\comLabel{l}\) is reached.
For readers familiar with the work of~\citet{zhang_CFGKATEfficientValidation_2025}, our definition is a syntactic counterpart to their inductive semantics \(\sem{-}^{l}\).
Formally, given a label \(l \in L\), and an expression \(e \in \cfgkat\) where \(\comLabel{l}\) appears \emph{exactly once}, then the \emph{label extraction} \(e_l\) gives an expression that corresponds to the location of \(l\):
\begin{align*}
  (\comLabel{l})_{l} & \triangleq 1 \quad  
  (e^{(b)})_{l}  \triangleq 
    (e)_l \unfold e^{(b)}&
  (e +_b f)_{l} & \triangleq \begin{cases}
    \mathrlap{(e)_{l}} \hphantom{(e)_{l} \unfold f} & \text{if \(\comLabel{l}\) in \(e\)}\\
    (f)_{l} & \text{if \(\comLabel{l}\) in \(f\)}
  \end{cases} \\  
  (e; f)_{l} & \triangleq \begin{cases}
    (e)_{l}; f & \text{if \(\comLabel{l}\) in \(e\)}\\
    (f)_{l} & \text{if \(\comLabel{l}\) in \(f\)}
  \end{cases} & 
  (e \unfold f)_{l} & \triangleq \begin{cases}
    (e)_{l} \unfold f & \text{if \(\comLabel{l}\) in \(e\)}\\
    (f)_{l} & \text{if \(\comLabel{l}\) in \(f\)}
  \end{cases}
\end{align*}
We ignore other expressions, as they do not contain \(\comLabel{l}\).
Unlike the approach taken by~\citet{zhang_CFGKATEfficientValidation_2025}, which resolves the semantics of \(l\) in an expression without \(\comLabel{l}\) to divergence, we only define label extraction on expression \(e\) that contains a unique \(\comLabel{l}\).
This constraint aligns with the requirement of most real-world languages, and is compatible with a \cfgkat program \(e\): all the \(\jmpc{l, \sigma'}\) continuations are produced by \(\comGoto{l}\); and if \(e\) contains \(\comGoto{l}\), then \(e\) contains a unique \(\comLabel{l}\), by definition of a \cfgkat program.

We formally define the jump resolution process: when given a program \(e\), jump resolution will turn a symbolic \cfgkat automaton \(\langle \pi, e \rangle\) into a \emph{symbolic \gkat automaton} \(\langle \pi, e \rangle\JmpRes{}\).
The transition of state \((\sigma, f)\) in \(\langle \pi, e \rangle\JmpRes{}\) can be computed from the transition of \((\sigma, f)\) in \cfgkat automaton \(\langle \pi, e \rangle\), by connection all the jump continuation \(\jmpc{l, \sigma'}\) with the dynamics of the state \((\sigma', e_l)\).
Formally, take any state \((\sigma, f) \in \langle \pi, e \rangle\), we derive its transition in \(\langle \pi, e \rangle\JmpRes{}\) with the following rules, when \((\sigma, f) \transOut{a}{\jmpc{l, \sigma'}}\) in \(\langle \pi, e \rangle\):
\begin{mathpar}
  \inferrule
  {
    (\sigma, f) \transOut{a}{\jmpc{l, \sigma'}} \text{ in } \langle \pi, e \rangle \\ 
    (\sigma', e_l) \transRes{b} r \text{ in } \langle \pi, e \rangle\JmpRes{}
  }
  {(\sigma, f) \transRes{a \land b} r \text{ in } \langle \pi, e \rangle\JmpRes{}}
\end{mathpar}
All transitions of \((\sigma, f)\) that do not result in a jump remain the same as in \(\langle \pi, e \rangle\).

The inference rule for jump resolution is also recursive: the transitions of \(\langle \pi, e \rangle\JmpRes{}\) appears both in the conclusion and the premise, thus we can compute the transitions in \(\langle \pi, e \rangle\JmpRes{}\) with the \(\accu\) function defined in~\Cref{def:accumulation}.

\begin{theorem}[Computability]\label{thm:cfgkat-jmp-compute}
  Fix a program \(e\), the transitions in the symbolic automata \(\langle \pi, e \rangle\) is computable, by expressing it as a \(\accu\).
\end{theorem}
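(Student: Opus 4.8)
The plan is to present the one-step data of the jump-resolved symbolic \gkat automaton \(\langle \pi, e \rangle\JmpRes{}\) as a single instance of the accumulation operator of \Cref{def:accumulation}, and then read off computability from \Cref{thm:cumulation-computes}. As a preliminary I would record that the \emph{un}resolved \cfgkat transitions \(\transRes{}\) of \(\langle \pi, e \rangle\) are already computable: every rule in \Cref{fig:sem-CF-GKAT} other than the loop reduces a state to its sub-expressions, and the loop case is exactly \Cref{thm:loop-comp-cfgkat-correctness} combined with \Cref{thm:cumulation-computes}. Thus the \(\res\) and \(\con\) below are free to query \(\varepsilon\) and \(\delta\) of \(\langle \pi, e \rangle\) as computable oracles.

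For the encoding I would take the input type \(N\) to be the states \((\sigma, f)\) of \(\langle \pi, e \rangle\) and the output type \(R\) to be guarded results in \(\BExp \times (C + S \times \Sigma)\). Given \((\sigma, f)\), let \(\res((\sigma, f))\) be the set of all \((a, r)\) with \((\sigma, f) \transRes{a} r\) in \(\langle \pi, e \rangle\) whose result \(r\) is \emph{not} a jump continuation, and let \(\con((\sigma, f)) = \{\, ((\sigma', e_l),\, \langle a |) \mid (\sigma, f) \transOut{a}{\jmpc{l, \sigma'}} \text{ in } \langle \pi, e \rangle \,\}\), where \(e_l\) is the label extraction of \(e\) and \(\langle a |\) is the guard operator of \Cref{sec:cfgkat-computing-loops}, extended to continuations by \(\langle a |(b, c) \triangleq (a \land b, c)\). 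With these choices the two inference rules defining \(\accu\) specialize precisely to jump resolution: the base rule copies every non-jump transition verbatim, while the step rule feeds a jump \(\jmpc{l, \sigma'}\) into the already-resolved behaviour \((\sigma', e_l) \transRes{b} r\) of \((\sigma', e_l)\) and conjoins the guards, reproducing the conclusion \((\sigma, f) \transRes{a \land b} r\). This yields the key identity \(\{(a, r) \mid (\sigma, f) \transRes{a} r \text{ in } \langle \pi, e \rangle\JmpRes{}\} = \accu_{\res, \con}((\sigma, f))\), which is what ``expressing it as a \(\accu\)'' amounts to.

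To conclude I would discharge the two hypotheses of \Cref{thm:cumulation-computes}. Computability of the maps \(h = \langle a |\) is immediate, since they merely conjoin a fixed Boolean onto the guard component. For finiteness of \(N\), I would observe that every target produced by \(\con\) has the shape \((\sigma', e_l)\) with \(\sigma'\) among the finitely many indicator states and \(e_l\) among the finitely many label extractions of \(e\); together with the finite reachable-state set of \(\langle \pi, e \rangle\) this bounds \(N\). Applying \Cref{thm:cumulation-computes} then delivers the claim.

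The point I expect to require the most care is exactly this finiteness, because a \cfgkat program may contain jump cycles such as \(\comLabel{l}; \comGoto{l}\), so \(\con\) can route a state back to itself and the syntactic derivation tree of \(\accu\) is infinite. The argument must lean on two facts that keep the accumulation well-behaved: jump-chains never leave the finite reachable-state set, and traversing a cycle either produces no \(\res\)-witness at all (a silent divergence that contributes nothing) or repeatedly conjoins a guard until it becomes \(\equiv 0\) and is discarded as a blocked transition (\Cref{rem:blocked-transition}). Working modulo Boolean equivalence the set of semantically distinct guards over a fixed finite \(T\) is finite, so the fixed-point computation underlying \Cref{thm:cumulation-computes} terminates. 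Making this pruning explicit, rather than leaving it implicit in the accumulation machinery, is the delicate step of the proof.
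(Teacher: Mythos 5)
Your proposal matches the paper's proof in essence: the paper likewise defines \(\res\) and \(\con\) so that non-jump results of \((\sigma,f)\) are emitted directly while each jump \(\jmpc{l,\sigma'}\) recurses into \((\sigma',e_l)\) under the guard operator \(\langle a|\), then verifies the resulting set identity by induction on derivations in both directions and concludes via \Cref{thm:cumulation-computes} (the only cosmetic difference is that the paper keeps two separate accumulations for \(\varepsilon\) and \(\delta\) rather than your single combined \(\transRes{}\) instance). Your closing concern about jump cycles is already discharged by the visited-set algorithm inside \Cref{thm:cumulation-computes}, which terminates because \(|N \setminus M|\) strictly decreases on every recursive call, so no explicit pruning of guards modulo Boolean equivalence is required for this theorem.
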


Finally, given any program \(e\) and a starting indicator state \(\pi\), we obtain an on-the-fly algorithm to construct the symbolic GKAT automata \(\langle \pi, e \rangle\JmpRes{}\).
Thus, to decide the equivalence between the equivalence of two \cfgkat programs \(e\) and \(f\) with its respective starting indicator states \(\pi\) and \(\sigma\), we can construct the automata \(\langle \pi, e \rangle \JmpRes{}\) and \(\langle \sigma, f \rangle \JmpRes{}\), and apply the algorithm in~\Cref{alg:symb-bisim}.

Indeed, this development represents a significant increase in the efficiency of equivalence checking for \cfgkat and \gkat.
In the next section, we outline how we implemented the decision procedure for both \gkat and \cfgkat in Rust, and compared them with the current state-of-the-art on both generated examples and real-world control-flow to demonstrate our efficiency.
Indeed, our implementation have not only exhibited orders-of-magnitude performance improvement over the existing tool, but also able to help us locate a bug in Ghidra~\cite{NationalSecurityAgency_Ghidra_2025}, an industry standard decompiler.

\section{Implementation and Evaluation}\label{sec:implementation}

In this section, we detail our Rust implementation~\cite{fu_EfficientDecisionProcedures_2026} and evaluation of the on-the-fly symbolic decision procedures for \CFOrGKAT.
We implemented the symbolic on-the-fly algorithm in \Cref{alg:symb-bisim}, the symbolic derivatives from~\Cref{sec:aut-construction-cfgkat}, and the \gkat variants 
\ifFull 
in~\Cref{ap:aut-construction-gkat}
\fi. 
Then, we compare our implementation against SymKAT~\cite{pous_SymbolicAlgorithmsLanguage_2015} and artifact of \cfgkat~\cite{zhang_2024_13938565}.

\subsection{Implementation Details}\label{sec:imple-detail}

In our implementations, all Boolean expressions are directly converted into the internal format of the Boolean solvers, which currently include miniSAT~\cite{_BooleworksLogicngrs_2025} and BDD~\cite{somenzi_CUDDCUDecision_2018,lewis_PclewisCuddsys_2025}. 
But, we can easily extend our implementation with more solvers as our algorithm supports the use of arbitrary (UN)SAT solvers.
Both \gkat and \cfgkat expressions are hash-consed for faster syntactical comparison and reduced memory footprint.
Our optimization on the expressions are relatively light compared to previous works~\cite{pous_SymbolicAlgorithmsLanguage_2015,katch_artifact_2024}: we only remove left sequencing and unfoldings of skip: \((1 \seq e) \rightsquigarrow e\) and \((1 \unfold e) \rightsquigarrow e\); yet, we can already outperform existing works by a significant margin.
Besides standard optimizations, the following paragraphs present additional details that are specific to our algorithm.

\myparagraph{\textbf{Eager Pruning of Blocked Transitions}}
When generating a transition \(s \transvia{b \mid p} s'\), we will immediately check whether this transition is blocked, i.e. whether \(b \equiv 0\); if so, this transition will be immediately removed (see~\Cref{rem:blocked-transition} for soundness argument).
This optimization significantly reduces the size of the generated automata, and avoids many unnecessary Boolean inequivalence checks.

\myparagraph{\textbf{Encoding of Indicator Variables}}
When adapting our algorithm to \cfgkat, we need to handle indicator variables within \(\BExpI\). 
While it is possible to naively construct their abstract syntax tree, and instantiate them when given an indicator state, it is better to store these expressions using the native representations within the Boolean solver. 
Since off-the-shelf solvers operate on Boolean formulae without indicator variables (\BExp), we encode constraints on indicator variables using fresh (Boolean) variables in the solver,
\noindent
\begin{wrapfigure}{l}{3.2cm}
  \vspace{-0.8cm}
  \begin{minipage}{3.2cm}
    \centering
    \begin{tikzpicture}
      \node[textstate] (formula) {\((x = 1) \land (y = 2) \lor a\)};
      \node[textstate, draw=gray, dashed, rounded corners=5pt, thick, minimum width=3cm,
        label={[anchor=south west]north west:Encoded Expression}]
        (encFormula) [below=1.3cm of formula] {\(t_1 \land t_2 \lor a\)};
      \node[textstate, draw=gray, dashed, rounded corners=5pt, thick, minimum width=3cm,
        label={[anchor=south west]north west:Indicator Constraint Table}] 
        (map) [below=0.5cm of encFormula, align=left] {
      \((x = 1) \mapsto t_1\)\\\((y = 2) \mapsto t_2\)};
      \coordinate (paddingTop) at ([yshift=1em]encFormula.north);
      \node[fit=(paddingTop)(encFormula)(map)] (encAll) {};
      \draw[->, decorate, decoration={snake, amplitude=2px, segment length=3mm}]
      (formula) -- node[right=1mm]{\small encode} (encAll);
    \end{tikzpicture}
  \end{minipage}
  \vspace{-0.9cm}
\end{wrapfigure}
and use a table to track the correspondence of indicator constraints to these fresh variables. For example, consider the expression on the left, where $x$ and $y$ are indicator variables and $a \in \BExp$.
The encoding produces two fresh variables \(t_1\) and \(t_2\) that represent the constraints 
\(x = 1\) and \(y = 2\), respectively. 
The expression is transformed into \(t_1 \land t_2 \lor a\), which is compatible with solver backends, with an indicator constraint table.
When generating these fresh Boolean variables, we ensure that the same indicator constraint is always mapped to the same variable. 
When resolving the encoded expression $b \in \BExpI$ with some indicator state $\pi$, we iterate through all $(x = i) \mapsto t$ in the table for \(b\): if \(\pi(x)\) is indeed \(i\), then we set the variable \(t\) in \(b\) to be true, otherwise false.

\subsection{Empirical Evaluation}\label{sec:imple-evaluation}
Our implementation can utilize arbitrary (UN)SAT solvers; for evaluation we choose two as backend: BDD (CUDD)~\cite{lewis_PclewisCuddsys_2025,somenzi_CUDDCUDecision_2018} and SAT (MiniSAT)~\cite{_BooleworksLogicngrs_2025}. We then compare against existing implementations: SymKAT~\cite{pous_SymbolicAlgorithmsLanguage_2015} and \cfgkat~\cite{zhang_2024_13938565}.
Our experiments are performed on a laptop with an AMD Ryzen 7 8845HS CPU with a time limit of 1000s and memory limit of 8GB. 
We study the performance of these algorithms on both randomly generated benchmarks and real-world control-flows extracted from GNU Coreutils 9.5~\cite{gnuprojectCoreutilsGNUCore2024}.

\begin{figure}[t]
  \begin{tabular}{l l}
  \hspace{-0.7em}\includegraphics[width=0.5\textwidth]{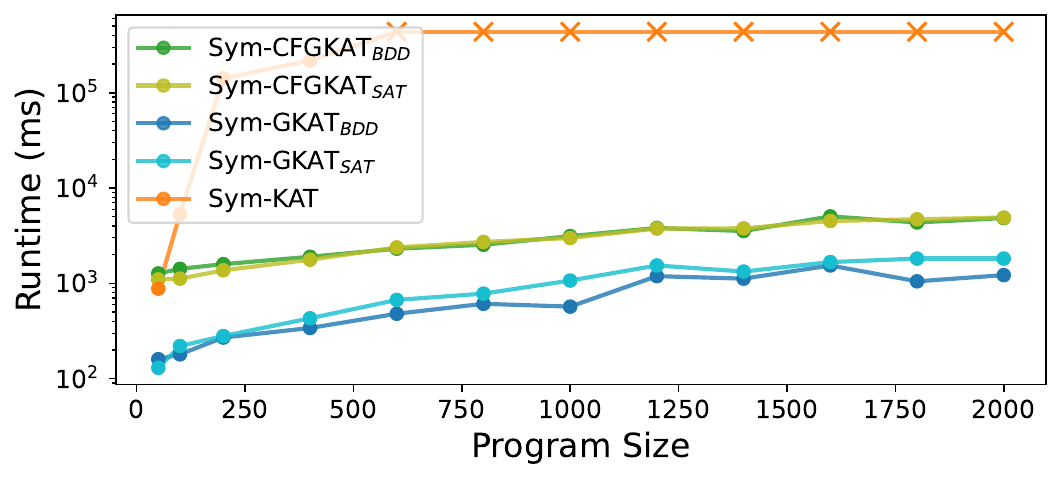} &
  \hspace{-0.7em}\includegraphics[width=0.5\textwidth]{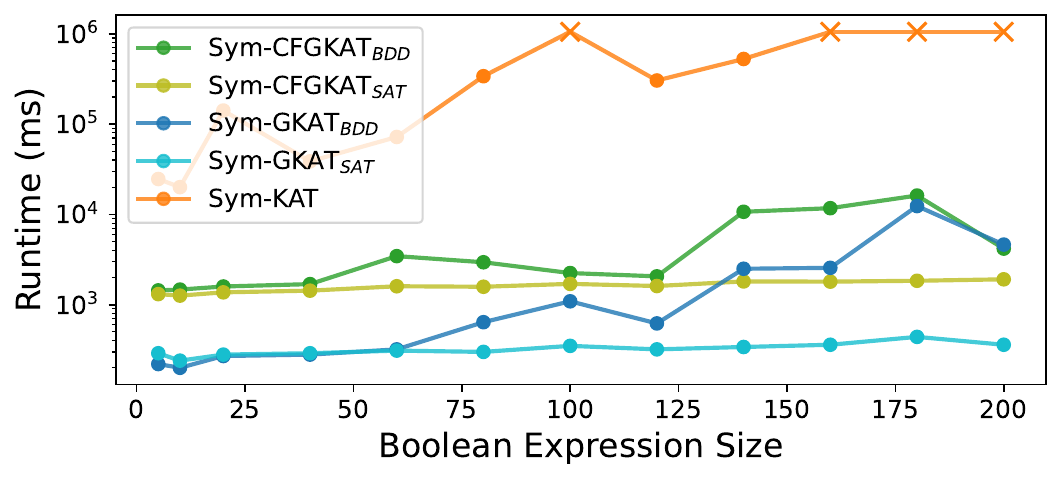} \\
  \hspace{-0.7em}\includegraphics[width=0.5\textwidth]{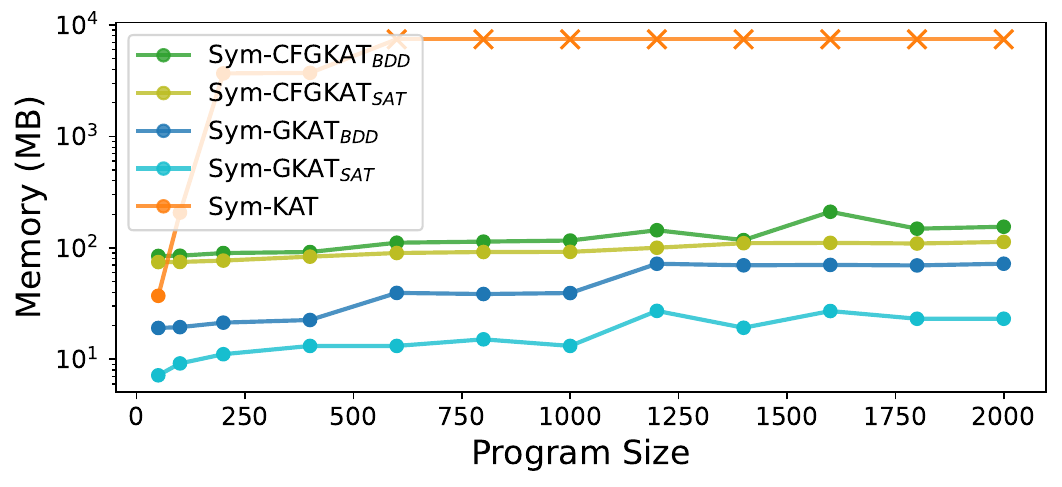} &
  \hspace{-0.7em}\includegraphics[width=0.5\textwidth]{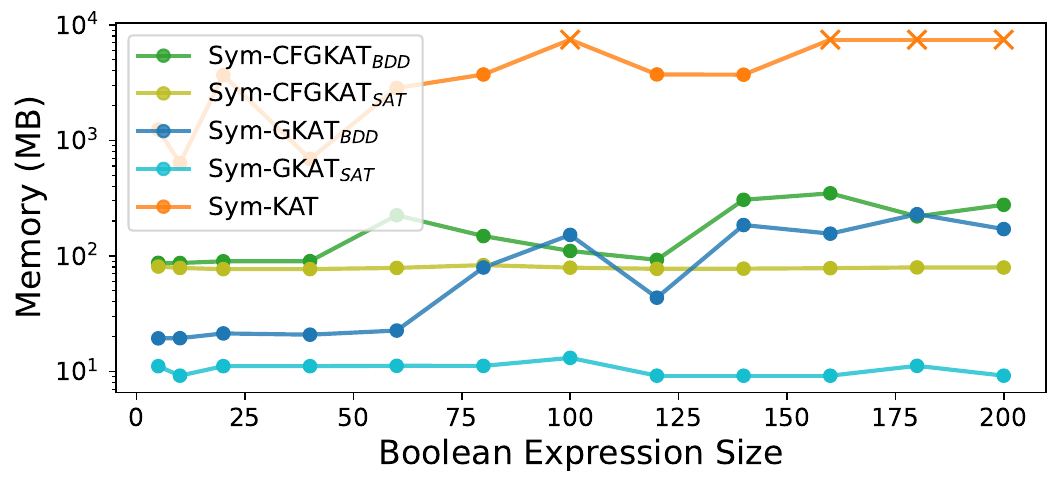}
  \end{tabular}
  \caption{Performance comparison of symbolic \gkat, symbolic \cfgkat, and SymKAT in \underline{lo}g\underline{arithm scale} on the vertical axis.}
  \vspace{-.4cm}
  \label{fig:performance-comparison}
\end{figure}

\myparagraph{\textbf{Synthetic Benchmarks}}
Using the equational rules of GKAT~\cite{smolka_GuardedKleeneAlgebra_2020}, we randomly generate equivalent \gkat-program pairs of varying sizes and Boolean expression sizes. 
We compare our symbolic algorithms against a patched version of SymKAT that handles a larger number of primitive tests on these equivalent pairs, and present the runtime and memory footprint of these experiments in \Cref{fig:performance-comparison}. 
Each point in these graphs represents an average performance on 100 randomly generated program pairs, and the {\color{orange}`X'} markers on the SymKAT line indicate that a dataset failed to be checked within the time limit or memory limit. 
We do not include the implementation of \cfgkat~\cite{zhang_2024_13938565} in these comparisons, as it failed to complete any benchmark within the allotted time and memory.

Our symbolic algorithms exhibit orders-of-magnitude performance increase over SymKAT in both runtime and memory usage, especially for larger programs. 
Notably, even the symbolic \cfgkat implementations, with the overheads of handling indicator variables and continuations, are able to outperform SymKAT. 
We also note that the performance of BDD-based algorithms, including SymKAT, becomes more erratic as the size of the Boolean expressions grows; whereas MiniSAT-based algorithms exhibit more stable performance behaviors.

\myparagraph{\textbf{GNU Coreutils Decompilation Validation}}
We evaluate our symbolic \cfgkat algorithm by compiling and decompiling real-world control flows extracted from GNU Coreutils 9.5~\cite{gnuprojectCoreutilsGNUCore2024}; then test the trace equivalence of the original and decompiled control flow via \cfgkat. 
Similarly to the original \cfgkat evaluation~\cite{zhang_CFGKATEfficientValidation_2025,zhang_2024_13938565}, we extract the control-flow of C source code through \emph{blinding}, which replaces all the commands and conditions with undefined functions.
This process abstracts away the data flow, and ensures that the compiler and decompiler only transforms control flows.
Compared to the original blinder~\cite{zhang_2024_13938565}, our new blinder supports more control-flow constructs such as $\comCont$, simple switch statements, do-while loops, and for-loops with nested $\comCont$s.
We also make fewer assumptions about the conditions in C in this iteration of the blinder. 
In particular, the semantics of \cfgkat assumes the same primitive test always has the same truth value within a Boolean expression.
However, in the condition \texttt{(x = !x) \&\!\& (x = !x)}, the first test \texttt{(x = !x)} will have different truth value than the second one, even though they are syntactically the same.
Thus, our blinder always generates fresh primitive tests, even for syntactically equal conditions.
As a trade-off, our blinder produces a larger number primitive tests, which increases the complexity of equivalence checking. 
Through blinding, we obtain the abstract control flow of a function, which we compile using Clang 20.1.2~\cite{llvm_LLVMLlvmLlvmproject_2025}, and then decompile using Ghidra 11.4~\cite{NationalSecurityAgency_Ghidra_2025}. 
Additionally, some manual cleanup is performed to remove artifacts generated by Ghidra, adjusting it into a fragment of C that can be read by our implementation of \cfgkat.

\begin{figure}[t]
  \begin{tabular}{l l l}
  \hspace{-0.7em}\includegraphics[width=0.33\textwidth]{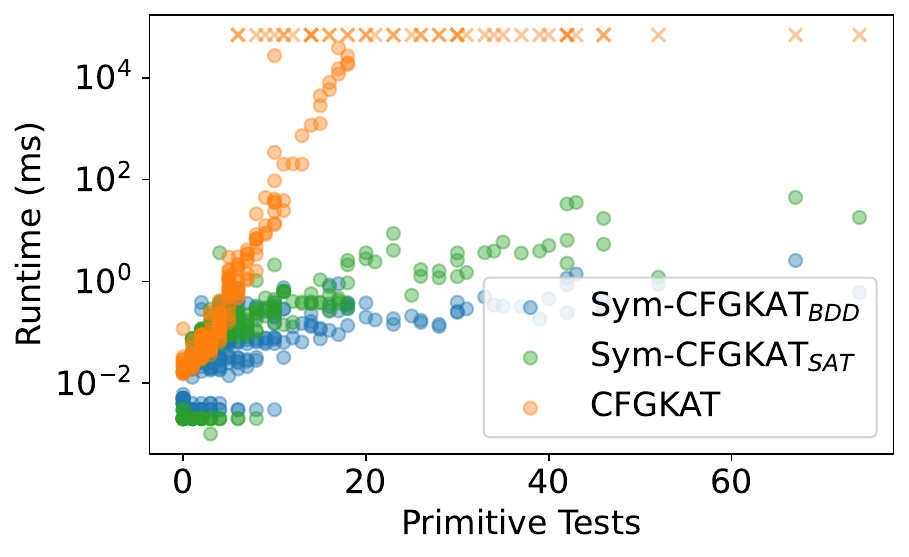} &
  \hspace{-0.7em}\includegraphics[width=0.33\textwidth]{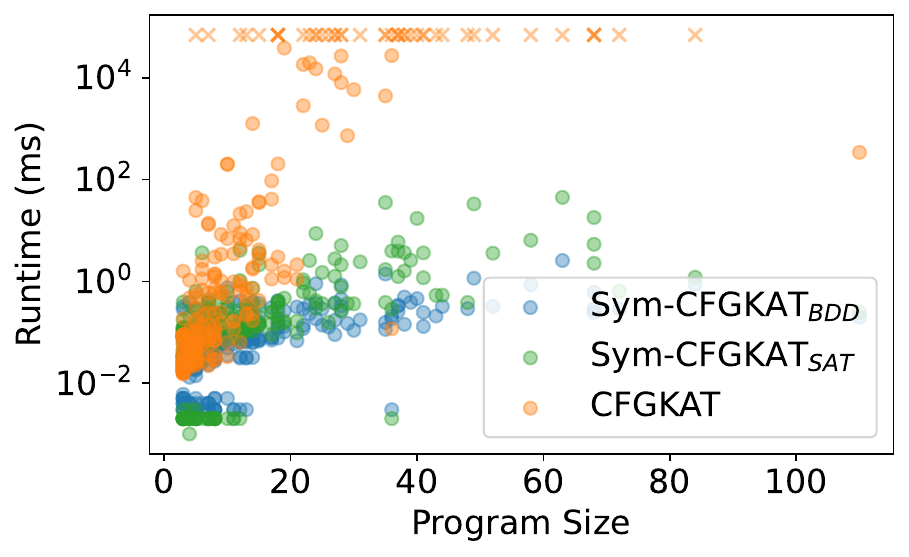} &
  \hspace{-0.7em}\includegraphics[width=0.33\textwidth]{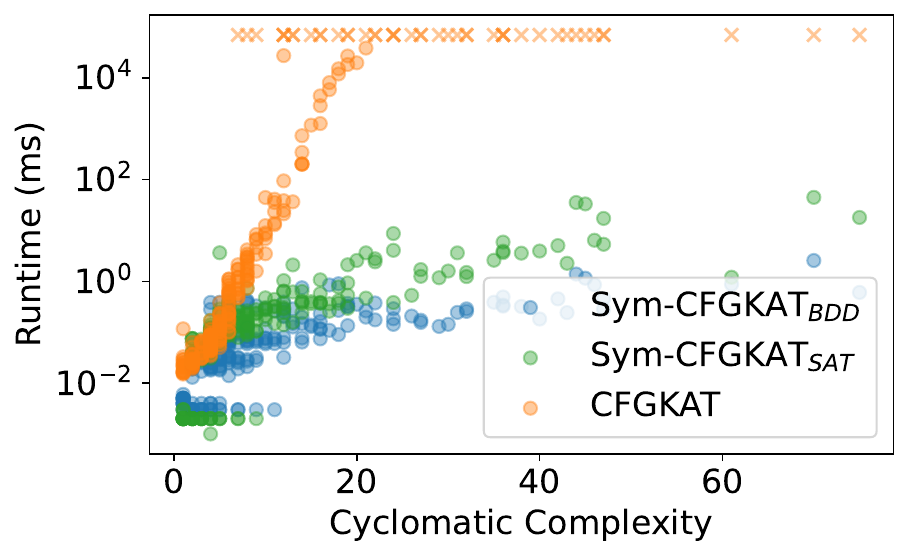}
  \end{tabular}
  \vspace{-1em}
  \begin{tabular}{l l l}
  \hspace{-0.7em}\includegraphics[width=0.33\textwidth]{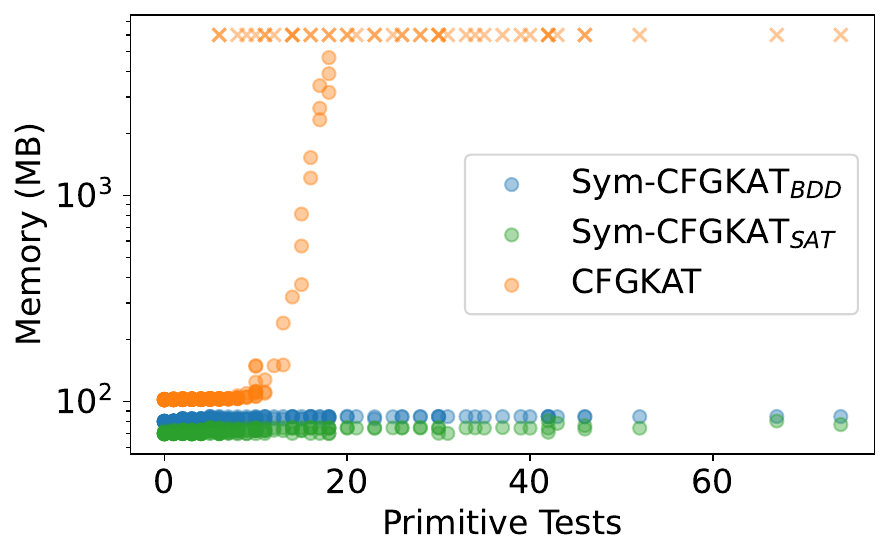} &
  \hspace{-0.7em}\includegraphics[width=0.33\textwidth]{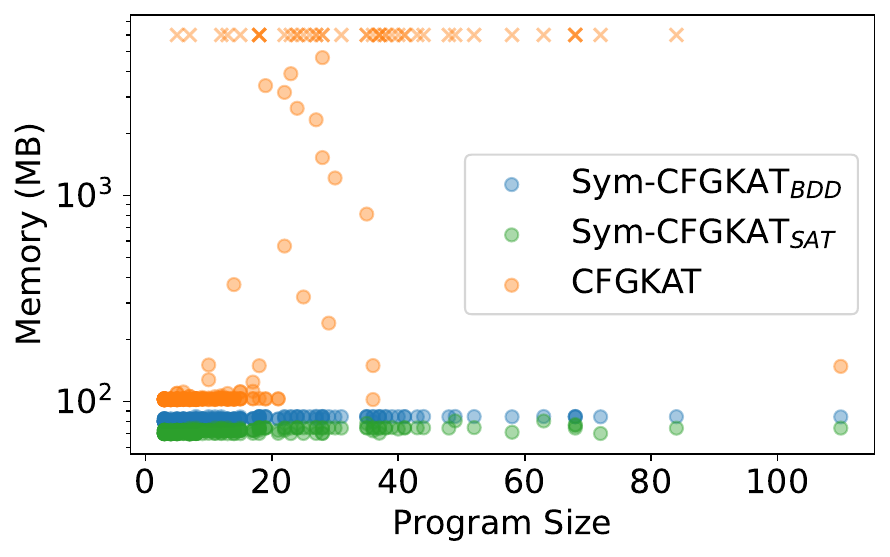} &
  \hspace{-0.7em}\includegraphics[width=0.33\textwidth]{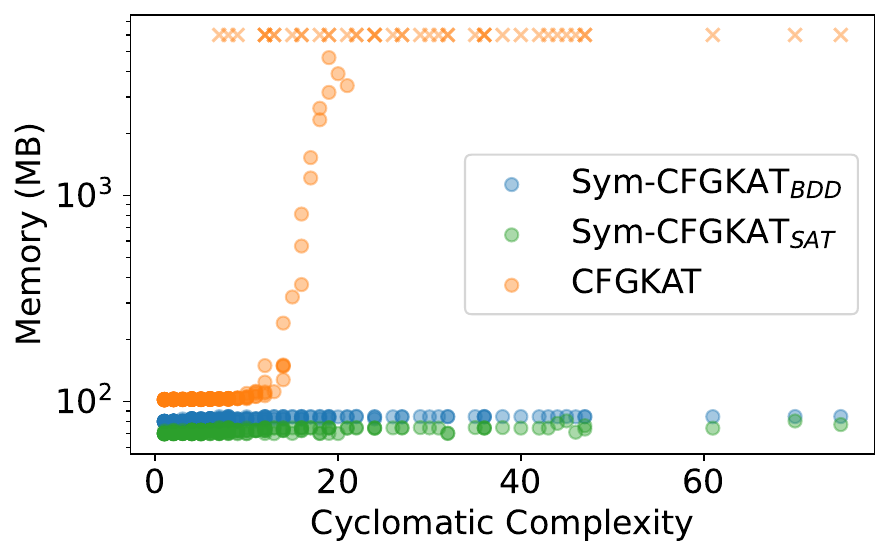}\\[2.8ex]
  \end{tabular}
  \vspace{-5pt}
  \caption{Performance comparison of Symbolic \cfgkat and \cfgkat on Coreutils.}
  \label{fig:coreutils-performance-comparison}
  \vspace{-.4cm}
\end{figure}

\Cref{fig:coreutils-performance-comparison} shows the performance of our symbolic \cfgkat algorithm against the original \cfgkat implementation on 237 Coreutils functions. The {\color{orange}`X'} markers in orange indicate that either the original \cfgkat implementation failed to check these test cases within the time or memory limit or encountered unsupported control-flow constructs, like \(\comCont\). 
SymKAT is not included in this comparison as it does not support the control-flow structures in \cfgkat. 

Our symbolic \cfgkat implementation outperforms the original one on all metrics by significant margins, regardless of the Boolean solver. 
In fact, both variants of symbolic \cfgkat are able to check the almost all of the 237 functions in approximately 7 seconds and within a memory footprint of less than 100MB. 
In this particular application, the BDD-based symbolic \cfgkat algorithm performs slightly better than miniSAT in runtime, as the size of the conditions in Coreutils are usually below where SAT overtakes BDD in efficiency. 

\myparagraph{Discovered Bug in Ghidra}
During benchmarking of the implementation, we identified a bug in Ghidra~\cite{nationalsecurityagency_possbile_2025}, which is confirmed and fixed a month later.

This bug is identified by a mismatch of control‑flow between the blinded code for the \texttt{copy\_internal} function in Coreutils 9.5, and its compiled and decompiled output. 
The original \texttt{copy\_internal} function spans 1195 lines, and the blinded one spans 559 lines.
Without the efficiency of our algorithm, testing control‑flow equivalence for functions of this size would be difficult.
Moreover, the efficiency of our tool provides rapid feedback when shrinking the code, allowing us to quickly isolate a minimal counter‑example for submission to the developer.
The minimal input code that triggers the bug contains multiple \command{goto} statements and indicator variables. 
Thus, supports for these complex control-flow structures are required to identify bugs like these. 
The fix for this bug takes roughly 100 lines of code changes, including modifications to the underlying logic.

This experience confirms that our approach can uncover complex bugs in widely used decompilers, even when we didn't target bug-finding in our experiments. 
It demonstrates the practicality of our method and highlights the challenges in implementing correct decompilers, which our tools are well‑equipped to facilitate.
In the future, we plan to design more comprehensive fuzzing frameworks to evaluate additional decompilers across a broader set of input programs.

\section{Discussion}
We presented new symbolic decision procedures for \gkat, leveraging SAT solvers for equivalence checking. We evaluated our procedures  on a series of benchmarks that show significant performance improvement over existing implementations for \CFOrGKAT. 
Notably, our experiments also revealed a bug in Ghidra, an industry-standard decompiler, highlighting the practicality of our algorithms.

Symbolic techniques have been successfully applied to various problems surrounding (extended) regular expressions, with a wide range of real-world applications, including but not limited to low level program analysis~\cite{dallapreda_AbstractSymbolicAutomata_2015a}, list comprehension~\cite{saarikivi_FusingEffectfulComprehensions_2017}, constraint solving~\cite{stanford_SymbolicBooleanDerivatives_2021}, HTML decoding, malware fingerprinting, image blurring, location privacy~\cite{veanes_SymbolicFiniteState_2012}, regex processing, and string sanitizer~\cite{veanes_ApplicationsSymbolicFinite_2013}. Our paper deviates from this line of research in the following ways: first, the problem of normalization is unique to \gkat and not present in other works; second, \gkat automata have a unique shape, where accepting states are replaced by functions on atoms, similar to automata on guarded strings~\cite{kozen_CoalgebraicTheoryKleene_2017}.
The closest symbolic work to ours is by~\citet{pous_SymbolicAlgorithmsLanguage_2015} and~\citet{moeller_KATchFastSymbolic_2024}, which provide symbolic algorithm for \kat and NetKAT, respectively, using BDDs. 

Our work demonstrates the efficiency and flexibility of Boolean formulae, and it would be interesting to explore whether our approach can be applied to extensions of \kat and not just \gkat. Further research is also needed to apply our method in other extensions of \gkat other than \cfgkat.
%
%
%
%
For example, the decision procedures for weighted and probabilistic \gkat~\cite{ro.zowski_ProbabilisticGuardedKAT_2023,koevering_WeightedGKATCompleteness_2025} both rely on coalgebraic partition refinement~\cite{deifel_GenericPartitionRefinement_2019}. 
It would be interesting to see if we can generalize our symbolic techniques to provide support for (coalgebraic) partition refinement.
Another challenging extension is (Guarded) \textsf{NetKAT}~\cite{anderson_NetKATSemanticFoundations_2014,wasserstein_GUARDEDNETKATSOUNDNESS_2023}, which allows packet field assignments and tests, similar to indicator assignments and tests in \cfgkat. 
However, the indicator state in \cfgkat is not symbolic, and hence our current algorithm cannot be easily adapted to improve the performance of (Guarded) \textsf{NetKAT}. 
This inefficiency is mostly inconsequential when testing Ghidra, as indicator variables are seldom produced by its decompilation.
Nevertheless, it would be interesting to see if we can develop a symbolic version of \textsf{NetKAT} automata~\cite{foster_CoalgebraicDecisionProcedure_2015} with Boolean formulae, and evaluate it against KATch~\cite{moeller_KATchFastSymbolic_2024}, an equivalence checker based on extended BDDs.

\ifFull
\section*{Acknowledgement}

We are grateful to our POPL and ESOP reviewers who provided valuable comments, and helped us improve our paper significantly.
The symbolic derivative for \cfgkat is based on a not-yet-published non-symbolic version, developed in collaboration with Wojciech Różowski from Lean FRO (work performed at University College London) and Spencer Van Koevering from Cornell University.
We gratefully acknowledge their significant contribution to this paper.

C. Zhang's work was partially supported by the ERC grant Autoprobe (no. 101002697) and National Science Foundation Award No. 1845803 and No. 2040249.
M. Gaboardi’s work was partially supported by the National Science Foundation under Grant No. 2314324
\fi

\section*{Data Availability Statement}

The code for our implementations and test suites are included in our artifact~\cite{fu_EfficientDecisionProcedures_2026}.
The software and source code used in evaluating our project, like Ghidra~\cite{NationalSecurityAgency_Ghidra_2025}, Clang~\cite{llvm_LLVMLlvmLlvmproject_2025}, SymKAT~\cite{pous_SymbolicAlgorithmsLanguage_2015}, previous \cfgkat implementation~\cite{zhang_CFGKATEfficientValidation_2025,zhang_2024_13938565}, and CoreUtils~\cite{gnuprojectCoreutilsGNUCore2024} are all publically available and open-source.

\renewcommand{\refname}{References}
\makeatletter
\renewcommand{\bibsection}{%
   \section*{\refname%
            \@mkboth{\MakeUppercase{\refname}}{\MakeUppercase{\refname}}%
   }
}
\makeatother
\bibliographystyle{splncs04nat}
\bibliography{refs}

@article{smolka_GuardedKleeneAlgebra_2020,
  title = {Guarded {{Kleene}} Algebra with Tests: Verification of Uninterpreted Programs in Nearly Linear Time},
  shorttitle = {Guarded {{Kleene}} Algebra with Tests},
  author = {Smolka, Steffen and Foster, Nate and Hsu, Justin and Kapp{\'e}, Tobias and Kozen, Dexter and Silva, Alexandra},
  year = {2020},
  month = jan,
  journal = {Proceedings of the ACM on Programming Languages},
  volume = {4},
  number = {POPL},
  pages = {1--28},
  issn = {2475-1421},
  doi = {10.1145/3371129},
  langid = {english}
}

@inproceedings{schmid_GuardedKleeneAlgebra_2021,
  title = {Guarded {{Kleene Algebra}} with {{Tests}}: {{Coequations}}, {{Coinduction}}, and {{Completeness}}},
  shorttitle = {Guarded {{Kleene Algebra}} with {{Tests}}},
  booktitle = {48th {{International Colloquium}} on {{Automata}}, {{Languages}}, and {{Programming}} ({{ICALP}} 2021)},
  author = {Schmid, Todd and Kapp{\'e}, Tobias and Kozen, Dexter and Silva, Alexandra},
  editor = {Bansal, Nikhil and Merelli, Emanuela and Worrell, James},
  year = {2021},
  series = {Leibniz {{International Proceedings}} in {{Informatics}} ({{LIPIcs}})},
  volume = {198},
  pages = {142:1--142:14},
  publisher = {Schloss Dagstuhl -- Leibniz-Zentrum f{\"u}r Informatik},
  address = {Dagstuhl, Germany},
  issn = {1868-8969},
  doi = {10.4230/LIPIcs.ICALP.2021.142},
  urldate = {2025-03-18},
  isbn = {978-3-95977-195-5}
}

@incollection{kozen_CoalgebraicTheoryKleene_2017,
  title = {On the {{Coalgebraic Theory}} of {{Kleene Algebra}} with {{Tests}}},
  booktitle = {Rohit {{Parikh}} on {{Logic}}, {{Language}} and {{Society}}},
  author = {Kozen, Dexter},
  editor = {Ba{\c s}kent, Can and Moss, Lawrence S. and Ramanujam, Ramaswamy},
  year = {2017},
  series = {Outstanding {{Contributions}} to {{Logic}}},
  pages = {279--298},
  publisher = {Springer International Publishing},
  address = {Cham},
  doi = {10.1007/978-3-319-47843-2_15},
  urldate = {2024-01-17},
  isbn = {978-3-319-47843-2},
  langid = {english}
}

@inproceedings{pous_SymbolicAlgorithmsLanguage_2015,
  title = {Symbolic {{Algorithms}} for {{Language Equivalence}} and {{Kleene Algebra}} with {{Tests}}},
  booktitle = {Proceedings of the 42nd {{Annual ACM SIGPLAN-SIGACT Symposium}} on {{Principles}} of {{Programming Languages}}},
  author = {Pous, Damien},
  year = {2015},
  month = jan,
  series = {{{POPL}} '15},
  pages = {357--368},
  publisher = {Association for Computing Machinery},
  address = {New York, NY, USA},
  doi = {10.1145/2676726.2677007},
  urldate = {2023-12-07},
  isbn = {978-1-4503-3300-9}
}

@article{almeida_DecidingKATHoare_2012,
  title = {Deciding {{KAT}} and {{Hoare Logic}} with {{Derivatives}}},
  author = {Almeida, Ricardo and Broda, Sabine and Moreira, Nelma},
  year = {2012},
  month = oct,
  journal = {Electronic Proceedings in Theoretical Computer Science},
  volume = {96},
  pages = {127--140},
  issn = {2075-2180},
  doi = {10.4204/EPTCS.96.10},
  urldate = {2023-12-08},
  langid = {english}
}

@incollection{kozen_KleeneAlgebraTests_1997c,
  title = {Kleene Algebra with Tests: {{Completeness}} and Decidability},
  shorttitle = {Kleene Algebra with Tests},
  booktitle = {Computer {{Science Logic}}},
  author = {Kozen, Dexter and Smith, Frederick},
  editor = {Goos, Gerhard and Hartmanis, Juris and Leeuwen, Jan and Dalen, Dirk and Bezem, Marc},
  year = {1997},
  volume = {1258},
  pages = {244--259},
  publisher = {Springer Berlin Heidelberg},
  address = {Berlin, Heidelberg},
  doi = {10.1007/3-540-63172-0_43},
  urldate = {2021-03-16},
  isbn = {978-3-540-63172-9 978-3-540-69201-0},
  langid = {english}
}

@article{bonchi_GeneralAccountCoinduction_2017a,
  title = {A General Account of Coinduction Up-To},
  author = {Bonchi, Filippo and Petri{\c s}an, Daniela and Pous, Damien and Rot, Jurriaan},
  year = {2017},
  month = mar,
  journal = {Acta Informatica},
  volume = {54},
  number = {2},
  pages = {127--190},
  issn = {0001-5903, 1432-0525},
  doi = {10.1007/s00236-016-0271-4},
  urldate = {2022-01-22},
  langid = {english}
}

@inproceedings{dallapreda_AbstractSymbolicAutomata_2015a,
  title = {Abstract {{Symbolic Automata}}: {{Mixed}} Syntactic/Semantic Similarity Analysis of Executables},
  shorttitle = {Abstract {{Symbolic Automata}}},
  booktitle = {Proceedings of the 42nd {{Annual ACM SIGPLAN-SIGACT Symposium}} on {{Principles}} of {{Programming Languages}}},
  author = {Dalla Preda, Mila and Giacobazzi, Roberto and Lakhotia, Arun and Mastroeni, Isabella},
  year = {2015},
  month = jan,
  pages = {329--341},
  publisher = {ACM},
  address = {Mumbai India},
  doi = {10.1145/2676726.2676986},
  urldate = {2025-01-08},
  isbn = {978-1-4503-3300-9},
  langid = {english}
}

@inproceedings{saarikivi_FusingEffectfulComprehensions_2017,
  title = {Fusing Effectful Comprehensions},
  booktitle = {Proceedings of the 38th {{ACM SIGPLAN Conference}} on {{Programming Language Design}} and {{Implementation}}},
  author = {Saarikivi, Olli and Veanes, Margus and Mytkowicz, Todd and Musuvathi, Madan},
  year = {2017},
  month = jun,
  pages = {17--32},
  publisher = {ACM},
  address = {Barcelona Spain},
  doi = {10.1145/3062341.3062362},
  urldate = {2025-01-08},
  isbn = {978-1-4503-4988-8},
  langid = {english}
}

@inproceedings{stanford_SymbolicBooleanDerivatives_2021,
  title = {Symbolic {{Boolean}} Derivatives for Efficiently Solving Extended Regular Expression Constraints},
  booktitle = {Proceedings of the 42nd {{ACM SIGPLAN International Conference}} on {{Programming Language Design}} and {{Implementation}}},
  author = {Stanford, Caleb and Veanes, Margus and Bj{\o}rner, Nikolaj},
  year = {2021},
  month = jun,
  series = {{{PLDI}} 2021},
  pages = {620--635},
  publisher = {Association for Computing Machinery},
  address = {New York, NY, USA},
  doi = {10.1145/3453483.3454066},
  urldate = {2023-12-15},
  isbn = {978-1-4503-8391-2}
}

@article{veanes_SymbolicFiniteState_2012,
  title = {Symbolic Finite State Transducers: Algorithms and Applications},
  shorttitle = {Symbolic Finite State Transducers},
  author = {Veanes, Margus and Hooimeijer, Pieter and Livshits, Benjamin and Molnar, David and Bjorner, Nikolaj},
  year = {2012},
  month = jan,
  journal = {SIGPLAN Not.},
  volume = {47},
  number = {1},
  pages = {137--150},
  issn = {0362-1340},
  doi = {10.1145/2103621.2103674},
  urldate = {2025-01-08}
}

@incollection{veanes_ApplicationsSymbolicFinite_2013,
  title = {Applications of {{Symbolic Finite Automata}}},
  booktitle = {Implementation and {{Application}} of {{Automata}}},
  author = {Veanes, Margus},
  editor = {Hutchison, David and Kanade, Takeo and Kittler, Josef and Kleinberg, Jon M. and Mattern, Friedemann and Mitchell, John C. and Naor, Moni and Nierstrasz, Oscar and Pandu Rangan, C. and Steffen, Bernhard and Sudan, Madhu and Terzopoulos, Demetri and Tygar, Doug and Vardi, Moshe Y. and Weikum, Gerhard and Konstantinidis, Stavros},
  year = {2013},
  volume = {7982},
  pages = {16--23},
  publisher = {Springer Berlin Heidelberg},
  address = {Berlin, Heidelberg},
  doi = {10.1007/978-3-642-39274-0_3},
  urldate = {2023-12-15},
  isbn = {978-3-642-39273-3 978-3-642-39274-0},
  langid = {english}
}

@article{moeller_KATchFastSymbolic_2024,
  title = {{{KATch}}: {{A Fast Symbolic Verifier}} for {{NetKAT}}},
  shorttitle = {{{KATch}}},
  author = {Moeller, Mark and Jacobs, Jules and Belanger, Olivier Savary and Darais, David and Schlesinger, Cole and Smolka, Steffen and Foster, Nate and Silva, Alexandra},
  year = {2024},
  month = jun,
  journal = {KATch: A Fast Symbolic Verifier for NetKAT},
  volume = {8},
  number = {PLDI},
  pages = {224:1905--224:1928},
  doi = {10.1145/3656454},
  urldate = {2025-01-08}
}

@article{anderson_NetKATSemanticFoundations_2014,
  title = {{{NetKAT}}: Semantic Foundations for Networks},
  shorttitle = {{{NetKAT}}},
  author = {Anderson, Carolyn Jane and Foster, Nate and Guha, Arjun and Jeannin, Jean-Baptiste and Kozen, Dexter and Schlesinger, Cole and Walker, David},
  year = {2014},
  month = jan,
  journal = {ACM SIGPLAN Notices},
  volume = {49},
  number = {1},
  pages = {113--126},
  issn = {0362-1340},
  doi = {10.1145/2578855.2535862},
  urldate = {2021-07-07}
}

@inproceedings{smolka_ScalableVerificationProbabilistic_2019,
  title = {Scalable Verification of Probabilistic Networks},
  booktitle = {Proceedings of the 40th {{ACM SIGPLAN Conference}} on {{Programming Language Design}} and {{Implementation}}},
  author = {Smolka, Steffen and Kumar, Praveen and Kahn, David M. and Foster, Nate and Hsu, Justin and Kozen, Dexter and Silva, Alexandra},
  year = {2019},
  month = jun,
  pages = {190--203},
  publisher = {ACM},
  address = {Phoenix AZ USA},
  doi = {10.1145/3314221.3314639},
  urldate = {2021-08-31},
  isbn = {978-1-4503-6712-7},
  langid = {english}
}

@inproceedings{ro.zowski_ProbabilisticGuardedKAT_2023,
  title = {Probabilistic {{Guarded KAT Modulo Bisimilarity}}: {{Completeness}} and {{Complexity}}},
  shorttitle = {Probabilistic {{Guarded KAT Modulo Bisimilarity}}},
  booktitle = {50th {{International Colloquium}} on {{Automata}}, {{Languages}}, and {{Programming}} ({{ICALP}} 2023)},
  author = {R{\'o}{\.z}owski, Wojciech and Kapp{\'e}, Tobias and Kozen, Dexter and Schmid, Todd and Silva, Alexandra},
  editor = {Etessami, Kousha and Feige, Uriel and Puppis, Gabriele},
  year = {2023},
  series = {Leibniz {{International Proceedings}} in {{Informatics}} ({{LIPIcs}})},
  volume = {261},
  pages = {136:1--136:20},
  publisher = {Schloss Dagstuhl -- Leibniz-Zentrum f{\"u}r Informatik},
  address = {Dagstuhl, Germany},
  issn = {1868-8969},
  doi = {10.4230/LIPIcs.ICALP.2023.136},
  urldate = {2025-01-08},
  isbn = {978-3-95977-278-5}
}

@unpublished{gomes_KleeneAlgebraTests_2024,
  title = {A {{Kleene}} Algebra with Tests for Union Bound Reasoning about Probabilistic Programs},
  author = {Gomes, Leandro and Baillot, Patrick and Gaboardi, Marco},
  year = {2024},
  month = jul,
  urldate = {2025-01-08}
}

@phdthesis{wasserstein_GUARDEDNETKATSOUNDNESS_2023,
  type = {Master of {{Science}}},
  title = {{{GUARDED NETKAT}}: {{SOUNDNESS}}, {{PARTIAL-COMPLETENESS}}, {{DECIDABILITY}}},
  shorttitle = {{{GUARDED NETKAT}}},
  author = {Wasserstein, Jacob},
  year = {2023},
  month = may,
  address = {Ithaca, NY},
  urldate = {2025-03-18},
  copyright = {Attribution 4.0 International},
  langid = {english},
  school = {Cornell Univerisity}
}

@article{zhang_CFGKATEfficientValidation_2025,
  title = {{{CF-GKAT}}: {{Efficient Validation}} of {{Control-Flow Transformations}}},
  author = {Zhang, Cheng and Kapp{\'e}, Tobias and Narv{\'a}ez, David E and Naus, Nico},
  year = {2025},
  journal = {Proceedings of the ACM on Programming Languages},
  number = {POPL},
  langid = {english}
}

@ARTICLE{bryant_BDD_1986,
  author={Bryant},
  journal={IEEE Transactions on Computers}, 
  title={Graph-Based Algorithms for Boolean Function Manipulation}, 
  year={1986},
  volume={C-35},
  number={8},
  pages={677-691},
  keywords={Boolean functions;binary decision diagrams;logic design verification;symbolic manipulation},
  doi={10.1109/TC.1986.1676819}
}

@inproceedings{kozen_CertificationCompilerOptimizations_2000,
  title = {Certification of {{Compiler Optimizations Using Kleene Algebra}} with {{Tests}}},
  booktitle = {Computational {{Logic}} --- {{CL}} 2000},
  author = {Kozen, Dexter and Patron, Maria-Cristina},
  editor = {Lloyd, John and Dahl, Veronica and Furbach, Ulrich and Kerber, Manfred and Lau, Kung-Kiu and Palamidessi, Catuscia and Pereira, Lu{\'i}s Moniz and Sagiv, Yehoshua and Stuckey, Peter J.},
  year = {2000},
  series = {Lecture {{Notes}} in {{Computer Science}}},
  pages = {568--582},
  publisher = {Springer},
  address = {Berlin, Heidelberg},
  doi = {10.1007/3-540-44957-4_38},
  isbn = {978-3-540-44957-7},
  langid = {english}
}

@misc{zhang_DomainReasoningTopKAT_2024,
  title = {Domain {{Reasoning}} in {{TopKAT}}},
  author = {Zhang, Cheng and {de Amorim}, Arthur Azevedo and Gaboardi, Marco},
  year = {2024},
  month = apr,
  eprint = {2404.18417},
  primaryclass = {cs},
  doi = {10.4230/LIPIcs.ICALP.2024.133},
  urldate = {2024-05-02},
  archiveprefix = {arXiv}
}

@misc{zhang_IncorrectnessLogicKleene_2022c,
  title = {On {{Incorrectness Logic}} and {{Kleene Algebra}} with {{Top}} and {{Tests}}},
  author = {Zhang, Cheng and {de Amorim}, Arthur Azevedo and Gaboardi, Marco},
  year = {2022},
  month = aug,
  number = {arXiv:2108.07707},
  eprint = {2108.07707},
  primaryclass = {cs},
  publisher = {arXiv},
  doi = {10.48550/arXiv.2108.07707},
  urldate = {2023-07-14},
  archiveprefix = {arXiv}
}

@inproceedings{desharnais_ModalKleeneAlgebra_2004,
  title = {Modal {{Kleene Algebra}} and {{Applications}} -- {{A Survey}}},
  booktitle = {Journal on {{Relational Methods}} in {{Computer Science}}},
  author = {Desharnais, Jules and M{\"o}ller, Bernhard and Struth, Georg},
  year = {2004},
  pages = {93--131}
}

@article{smolka_CantorMeetsScott_2017,
  title = {Cantor Meets {{Scott}}: Semantic Foundations for Probabilistic Networks},
  shorttitle = {Cantor Meets {{Scott}}},
  author = {Smolka, Steffen and Kumar, Praveen and Foster, Nate and Kozen, Dexter and Silva, Alexandra},
  year = {2017},
  month = jan,
  journal = {ACM SIGPLAN Notices},
  volume = {52},
  number = {1},
  pages = {557--571},
  issn = {0362-1340},
  doi = {10.1145/3093333.3009843},
  urldate = {2021-07-07}
}

@incollection{kozen_BohmJacopiniTheorem_2008,
  title = {The {{B{\"o}hm}}--{{Jacopini Theorem Is False}}, {{Propositionally}}},
  booktitle = {Mathematics of {{Program Construction}}},
  author = {Kozen, Dexter and Tseng, Wei-Lung Dustin},
  editor = {Audebaud, Philippe and {Paulin-Mohring}, Christine},
  year = {2008},
  volume = {5133},
  pages = {177--192},
  publisher = {Springer Berlin Heidelberg},
  address = {Berlin, Heidelberg},
  issn = {0302-9743, 1611-3349},
  doi = {10.1007/978-3-540-70594-9_11},
  urldate = {2023-06-06},
  isbn = {978-3-540-70593-2 978-3-540-70594-9},
  langid = {english}
}

@techreport{angus_KleeneAlgebraTests_2001,
  type = {Technical {{Report}}},
  title = {Kleene {{Algebra}} with {{Tests}} and {{Program Schematology}}},
  author = {Angus, Allegra and Kozen, Dexter},
  year = {2001},
  month = jun,
  address = {USA},
  institution = {Cornell University}
}

@techreport{hopcroft_LinearAlgorithmTesting_1971,
  title = {A {{Linear Algorithm}} for {{Testing Equivalence}} of {{Finite Automata}}},
  author = {Hopcroft, John E. and Karp, R. M.},
  year = {1971},
  month = dec,
  institution = {Cornell University},
  urldate = {2024-01-17},
  langid = {american}
}

@inproceedings{foster_CoalgebraicDecisionProcedure_2015,
  title = {A {{Coalgebraic Decision Procedure}} for {{NetKAT}}},
  booktitle = {Proceedings of the 42nd {{Annual ACM SIGPLAN-SIGACT Symposium}} on {{Principles}} of {{Programming Languages}}},
  author = {Foster, Nate and Kozen, Dexter and Milano, Mae and Silva, Alexandra and Thompson, Laure},
  year = {2015},
  month = jan,
  series = {{{POPL}} '15},
  pages = {343--355},
  publisher = {Association for Computing Machinery},
  address = {New York, NY, USA},
  doi = {10.1145/2676726.2677011},
  urldate = {2023-07-28},
  isbn = {978-1-4503-3300-9}
}

@article{brzozowski_DerivativesRegularExpressions_1964,
  title = {Derivatives of {{Regular Expressions}}},
  author = {Brzozowski, Janusz A.},
  year = {1964},
  month = oct,
  journal = {Journal of the ACM},
  volume = {11},
  number = {4},
  pages = {481--494},
  issn = {0004-5411},
  doi = {10.1145/321239.321249},
  urldate = {2023-12-21}
}

@article{antimirov_PartialDerivativesRegular_1996,
  title = {Partial Derivatives of Regular Expressions and Finite Automaton Constructions},
  author = {Antimirov, Valentin},
  year = {1996},
  month = mar,
  journal = {Theoretical Computer Science},
  volume = {155},
  number = {2},
  pages = {291--319},
  issn = {03043975},
  doi = {10.1016/0304-3975(95)00182-4},
  urldate = {2025-05-14},
  copyright = {https://www.elsevier.com/tdm/userlicense/1.0/},
  langid = {english}
}

@misc{_BooleworksLogicngrs_2025,
  title = {Booleworks/Logicng-Rs},
  year = {2025},
  month = jan,
  urldate = {2025-05-20},
  copyright = {Apache-2.0},
  author = {{BooleWorks GmbH}},
  url = {https://github.com/booleworks/logicng-rs}
}

@misc{somenzi_CUDDCUDecision_2018,
  title = {{{CUDD}}: {{CU Decision Diagram Package Release}} 2.7.0},
  author = {Somenzi, Fabio},
  year = {2018},
  month = jun, 
  url = {https://add-lib.scce.info/assets/doxygen-cudd-documentation/}
}

@misc{lewis_PclewisCuddsys_2025,
  title = {Pclewis/Cudd-Sys},
  author = {Lewis, Philip},
  year = {2025},
  month = apr,
  urldate = {2025-05-20},
  url = {https://github.com/pclewis/cudd-sys}
}

@article{koevering_WeightedGKATCompleteness_2025,
  title = {Weighted {{GKAT}}: {{Completeness}} and {{Complexity}}},
  shorttitle = {Weighted {{GKAT}}},
  author = {Van Koevering, Spencer and Różowski, Wojciech and Silva, Alexandra},
  editor = {Censor-Hillel, Keren and Grandoni, Fabrizio and Ouaknine, Joel and Puppis, Gabriele},
  date = {2025},
  year={2025},
  journaltitle = {LIPIcs, Volume 334, ICALP 2025},
  volume = {334},
  pages = {172:1-172:18},
  publisher = {Schloss Dagstuhl – Leibniz-Zentrum für Informatik},
  issn = {1868-8969},
  doi = {10.4230/LIPICS.ICALP.2025.172},
  url = {https://drops.dagstuhl.de/entities/document/10.4230/LIPIcs.ICALP.2025.172},
  urldate = {2026-01-05},
  isbn = {9783959773720},
  langid = {english},
  keywords = {Automata,Axiomatization,Decision Procedure,Theory of computation → Models of computation,Weighted Programming}
}

@inproceedings{erosa_TamingControlFlow_1994,
  title = {Taming Control Flow: A Structured Approach to Eliminating Goto Statements},
  shorttitle = {Taming Control Flow},
  booktitle = {Proceedings of 1994 {{IEEE International Conference}} on {{Computer Languages}} ({{ICCL}}'94)},
  author = {Erosa, A.M. and Hendren, L.J.},
  year = {1994},
  month = may,
  pages = {229--240},
  doi = {10.1109/ICCL.1994.288377}
}

@article{fernandez_OntheflyVerificationFinite_1992,
  title = {On-the-Fly Verification of Finite Transition Systems},
  author = {Fernandez, Jean-Claude and Mounier, Laurent and Jard, Claude and J{\'e}ron, Thierry},
  year = {1992},
  month = oct,
  journal = {Formal Methods in System Design},
  volume = {1},
  number = {2},
  pages = {251--273},
  issn = {1572-8102},
  doi = {10.1007/BF00121127},
  urldate = {2025-06-09},
  langid = {english}
}

@phdthesis{schmid_CoalgebraicCompletenessTheorems_2024,
  type = {Doctoral},
  title = {Coalgebraic {{Completeness Theorems}} for {{Effectful Process Algebras}}},
  author = {Schmid, Todd Junior Wayne},
  year = {2024},
  month = jan,
  journal = {Doctoral thesis, UCL (University College London).},
  urldate = {2024-11-04},
  copyright = {open},
  langid = {english},
  school = {UCL (University College London)}
}

@inproceedings{bonchi_CoinductionUptoFibrational_2014a,
  title = {Coinduction Up-to in a Fibrational Setting},
  booktitle = {Proceedings of the {{Joint Meeting}} of the {{Twenty-Third EACSL Annual Conference}} on {{Computer Science Logic}} ({{CSL}}) and the {{Twenty-Ninth Annual ACM}}/{{IEEE Symposium}} on {{Logic}} in {{Computer Science}} ({{LICS}})},
  author = {Bonchi, Filippo and Petri{\c s}an, Daniela and Pous, Damien and Rot, Jurriaan},
  year = {2014},
  month = jul,
  pages = {1--9},
  publisher = {ACM},
  address = {Vienna Austria},
  doi = {10.1145/2603088.2603149},
  urldate = {2022-01-22},
  isbn = {978-1-4503-2886-9},
  langid = {english}
}

@article{wagemaker_PartiallyObservableConcurrent_2020,
  title = {Partially {{Observable Concurrent Kleene Algebra}}},
  author = {Wagemaker, Jana and Brunet, Paul and Docherty, Simon and Kapp{\'e}, Tobias and Rot, Jurriaan and Silva, Alexandra},
  year = {2020},
  journal = {LIPIcs, Volume 171, CONCUR 2020},
  volume = {171},
  eprint = {2007.07593},
  primaryclass = {cs},
  pages = {20:1-20:22},
  issn = {1868-8969},
  doi = {10.4230/LIPIcs.CONCUR.2020.20},
  urldate = {2024-11-08},
  archiveprefix = {arXiv},
  langid = {english}
}

@inproceedings{hoare_ConcurrentKleeneAlgebra_2009a,
  title = {Concurrent {{Kleene Algebra}}},
  author = {Hoare, C. and M{\"o}ller, Bernhard and Struth, Georg and Wehrman, Ian},
  year = {2009},
  month = sep,
  volume = {5710},
  pages = {399--414},
  doi = {10.1007/978-3-642-04081-8_27},
  isbn = {978-3-642-04080-1}
}

@inproceedings{cohen_ComplexityKleeneAlgebra_1996a,
  title = {The {{Complexity}} of {{Kleene Algebra}} with {{Tests}}},
  author = {Cohen, Ernie and Kozen, D. and Smith, F.},
  year = {1996},
  month = jul,
  urldate = {2024-04-26}
}

@software{gnuprojectCoreutilsGNUCore2024,
  title = {Coreutils - {{GNU}} Core Utilities},
  author = {{GNU Project}},
  date = {2024-03-28},
  year={2024},
  url = {https://www.gnu.org/software/coreutils/},
  urldate = {2025-08-17},
  organization = {GNU Project},
  version = {9.5}
}

@incollection{pous_CompleteLatticesUpTo_2007,
  title = {Complete {{Lattices}} and {{Up-To Techniques}}},
  booktitle = {Programming {{Languages}} and {{Systems}}},
  author = {Pous, Damien},
  editor = {Hutchison, David and Kanade, Takeo and Kittler, Josef and Kleinberg, Jon M. and Mattern, Friedemann and Mitchell, John C. and Naor, Moni and Nierstrasz, Oscar and Pandu Rangan, C. and Steffen, Bernhard and Sudan, Madhu and Terzopoulos, Demetri and Tygar, Doug and Vardi, Moshe Y. and Weikum, Gerhard and Shao, Zhong},
  year = {2007},
  volume = {4807},
  pages = {351--366},
  publisher = {Springer Berlin Heidelberg},
  address = {Berlin, Heidelberg},
  doi = {10.1007/978-3-540-76637-7_24},
  urldate = {2025-06-30},
  isbn = {978-3-540-76636-0 978-3-540-76637-7},
  langid = {english}
}

@inproceedings{yakdan_NoMoreGotos_2015,
  title = {No {{More Gotos}}: {{Decompilation Using Pattern-Independent Control-Flow Structuring}} and {{Semantics-Preserving Transformations}}},
  shorttitle = {No {{More Gotos}}},
  booktitle = {Proceedings 2015 {{Network}} and {{Distributed System Security Symposium}}},
  author = {Yakdan, Khaled and Eschweiler, Sebastian and {Gerhards-Padilla}, Elmar and Smith, Matthew},
  year = {2015},
  publisher = {Internet Society},
  address = {San Diego, CA},
  doi = {10.14722/ndss.2015.23185},
  urldate = {2023-06-07},
  isbn = {978-1-891562-38-9},
  langid = {english}
}

@misc{NationalSecurityAgency_Ghidra_2025,
  title = {Ghidra},
  author = {{NationalSecurityAgency}},
  year = {2025},
  copyright = {Apache-2.0},
  howpublished = {GitHub},
  url = {https://github.com/NationalSecurityAgency/ghidra}
}

@online{nationalsecurityagency_possbile_2025,
	title = {Possible bug in decompiler "join" label gotos? · Issue \#8310 of {NationalSecurityAgency}/ghidra},
	url = {https://github.com/NationalSecurityAgency/ghidra/issues/8310},
	shorttitle = {Possbile bug in decompiler "join" label gotos?},
	titleaddon = {{GitHub}},
	author = {{Qiancheng Fu}},
	urldate = {2025-08-16},
  year={2025},
	date = {2025-07},
	langid = {english},
}

@software{zhang_2024_13938565,
  author       = {Zhang, Cheng and
                  Kappé, Tobias and
                  Narvaéz, David E. and
                  Naus, Nico},
  title        = {CF-GKAT: Efficient Validation of Control-Flow
                   Transformations (Artifact)
                  },
  month        = oct,
  year         = 2024,
  publisher    = {Zenodo},
  doi          = {10.5281/zenodo.13938565},
  url          = {https://doi.org/10.5281/zenodo.13938565},
}

@inproceedings{deifel_GenericPartitionRefinement_2019,
  title = {Generic {{Partition Refinement}} and {{Weighted Tree Automata}}},
  booktitle = {Formal {{Methods}} -- {{The Next}} 30 {{Years}}},
  author = {Deifel, Hans-Peter and Milius, Stefan and Schr{\"o}der, Lutz and Wi{\ss}mann, Thorsten},
  editor = {{ter Beek}, Maurice H. and McIver, Annabelle and Oliveira, Jos{\'e} N.},
  year = {2019},
  pages = {280--297},
  publisher = {Springer International Publishing},
  address = {Cham},
  doi = {10.1007/978-3-030-30942-8_18},
  isbn = {978-3-030-30942-8},
  langid = {english}
}

@software{katch_artifact_2024,
author = {Moeller, Mark and Jacobs, Jules and Belanger, Olivier Savary and Darais, David and Schlesinger, Cole and Smolka, Steffen and Foster, Nate and Silva, Alexandra},
title = {KATch: A Fast Symbolic Verifier for NetKAT},
year = {2024},
publisher = {Association for Computing Machinery},
url = {https://doi.org/10.5281/zenodo.10961123},
keywords = {Automata equivalence, Kleene Algebra with Tests, NetKAT Verifier, Network Verification}
}

@article{owens_RegularexpressionDerivativesReexamined_2009,
  title = {Regular-Expression Derivatives Re-Examined},
  author = {Owens, Scott and Reppy, John and Turon, Aaron},
  date = {2009-03},
  year={2009},
  journaltitle = {Journal of Functional Programming},
  volume = {19},
  number = {2},
  pages = {173--190},
  issn = {1469-7653, 0956-7968},
  doi = {10.1017/S0956796808007090},
  url = {https://www.cambridge.org/core/journals/journal-of-functional-programming/article/regularexpression-derivatives-reexamined/E5734B86DEB96C61C69E5CF3C4FB0AFA},
  urldate = {2026-01-04},
  langid = {english}
}

@online{zhangOutrunningBigKATs2026,
  title = {Outrunning {{Big KATs}}: {{Efficient Decision Procedures}} for {{Variants}} of {{GKAT}}},
  shorttitle = {Outrunning {{Big KATs}}},
  author = {Zhang, Cheng and Fu, Qiancheng and Ji, Hang and Valle, Ines Santacruz Del and Silva, Alexandra and Gaboardi, Marco},
  date = {2026-01-15},
  year={2026},
  eprint = {2601.09986},
  eprinttype = {arXiv},
  eprintclass = {cs},
  doi = {10.48550/arXiv.2601.09986},
  pubstate = {prepublished},
  keywords = {Computer Science - Logic in Computer Science,Computer Science - Programming Languages}
}

@software{fu_EfficientDecisionProcedures_2026,
  title = {Efficient {{Decision Procedures}} for {{Variants}} of {{GKAT}} ({{Artifact}})},
  author = {Fu, Qiancheng and Zhang, Cheng and Ji, Hang and Santacruz Del Valle, Ines and Silva, Alexandra and Gaboardi, Marco},
  date = {2026-01-19},
  year={2026},
  doi = {10.5281/zenodo.18304183},
  url = {https://zenodo.org/records/18304183},
  urldate = {2026-01-21},
  organization = {Zenodo}
}

@software{llvm_LLVMLlvmLlvmproject_2025,
  title = {{{LLVM}} · Llvm/Llvm-Project},
  author = {{llvm}},
  date = {2025-04-02},
  year={2025},
  url = {https://github.com/llvm/llvm-project/releases/tag/llvmorg-20.1.2},
  urldate = {2026-01-16},
  version = {20.1.2}
}

\ifFull

\newpage
\appendix

\section{Correctness of Non-symbolic Algorithm And Up-to Technique}\label[apx]{sec:non-symb-corr-proof}

\subsection{Progression and Compatiblity}

Before we prove the correctness of the non-symbolic version, we will introduce a more generic notation for the checks performed on in algorithm~\Cref{alg:nonsymb-bisim}.
In this definition we range over a generic predicate \(P\) instead of just \(\algIsDead\) as in~\Cref{alg:nonsymb-bisim}; in this way, we have a uniform notation for the trace equivalence algorithm and bisimulation equivalence algorithm, where \(\algIsDead\) is replaced by the constant false function.

\begin{definition}[Progression]\label{def:alg-progression}
  Given two automata over state sets \(S\) and \(U\), and two relations \(R, R' \subseteq S \times U\). 
  We write \(R \transCorr{P}_\alpha R'\) when all the following conditions hold:
  \begin{enumerate}
    \item \(s \transOut{\alpha} \retc\) if and only if \(u \transOut{\alpha} \retc\);  
    \item if \(s \transvia{\alpha \mid p} s'\) and \(u \transRej{\alpha}\), then \(P(s')\);  
    \item if \(u \transvia{\alpha \mid p} u'\) and \(s \transRej{\alpha}\), then \(P(u')\); 
    \item if \(s \transvia{\alpha \mid p} s'\) and \(u \transvia{\alpha \mid q} u'\) and \(p \neq q\), then both \(P(s')\) and \(P(u')\);
    \item if \(s \transvia{\alpha \mid p} s'\) and \(u \transvia{\alpha \mid p} u'\) then \((s', u') \in R'\).
  \end{enumerate}
  We write \(R \transCorr{P} R'\) when \(R \transCorr{P}_\alpha R'\) for all \(\alpha \in \At\), and we call the relation \(R\) an \emph{invariant} if \(R \transCorr{P} R\).
\end{definition}

The definition of \(R \transCorr{P} R'\) parameterize over a predicate \(P\).
In the case of trace equivalence, the predicate \(P\) is instantiated to \(\algIsDead\), then \(R \transCorr{\algIsDead} R'\) exactly represents the condition checked by the non-symbolic algorithm~\Cref{alg:nonsymb-bisim};  
when \(P\) is instantiated to the constant false function, then \(R \transCorr{\false} R'\) will represent the conditions checked by the version of the algorithm that checks for bisimulation equivalence, where all the \(\algIsDead\) and \(\algKnownDead\) is replaced by the constant false function.

The name \emph{progression} is not chosen at random, \citet{pous_CompleteLatticesUpTo_2007} give a general criteria for progressions that supports the up-to techniques in a complete lattice.
Indeed, as \(R\) and \(R'\) are relations, they live in a complete lattice ordered by set inclusion.

\begin{theorem}[Progression Soundness]\label{thm:prog-sound}
  \emph{Progression} as defined in~\Cref{def:alg-progression} indeed satisfies the two requirements set by~\citet{pous_CompleteLatticesUpTo_2007}, which are listed below:
  \begin{mathpar}
    \inferrule
    {R_1 \subseteq R_0 \\ R_0 \transCorr{P} R_0' \\ R_0' \subseteq R_1'}
    {R_1 \transCorr{P} R_1'}
    \and  
    \inferrule
    {\forall i \in I,~ R_i \transCorr{P} R'}
    {(\textstyle{\bigcup_{i \in I} R_i}) \transCorr{P} R'}
  \end{mathpar}
\end{theorem}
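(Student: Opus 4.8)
The plan is to unfold the definition of progression and make its monotonicity structure explicit. Recall that \(R \transCorr{P} R'\) abbreviates the statement: for every pair \((s,u) \in R\) and every atom \(\alpha \in \At\), the five conditions of~\Cref{def:alg-progression} hold. The crucial observation is that conditions (1)--(4) refer only to the source relation (through the quantification over \((s,u) \in R\)), the predicate \(P\), and the acceptance/transition structure of the two automata; they make \emph{no} reference to the target relation. Only condition (5)---``if \(s \transvia{\alpha \mid p} s'\) and \(u \transvia{\alpha \mid p} u'\) then \((s',u') \in R'\)''---mentions the target relation, and it does so \emph{upward}: enlarging \(R'\) can only make it easier to satisfy. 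Conceptually, this says the progression is generated by the monotone function \(b(R') \triangleq \{(s,u) \mid \forall \alpha,\ \text{(1)--(5) hold for } (s,u) \text{ at } \alpha \text{ w.r.t. } R'\}\), so that \(R \transCorr{P} R'\) iff \(R \subseteq b(R')\); the two Pous requirements are exactly ``\(b\) is well-defined on joins'' and ``\(b\) is monotone.'' I would not need to construct \(b\) explicitly, however, since both requirements follow directly by elementary set reasoning.

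For the first requirement, assume \(R_1 \subseteq R_0\), \(R_0 \transCorr{P} R_0'\), and \(R_0' \subseteq R_1'\); I would show \(R_1 \transCorr{P} R_1'\). Fix any \((s,u) \in R_1\) and any \(\alpha \in \At\). Since \(R_1 \subseteq R_0\), the pair \((s,u)\) also lies in \(R_0\), so the hypothesis \(R_0 \transCorr{P} R_0'\) supplies all five conditions for this pair at this atom. Conditions (1)--(4) are independent of the target relation, hence hold verbatim for target \(R_1'\). For condition (5), any witnessing pair \((s',u')\) satisfies \((s',u') \in R_0'\), and \(R_0' \subseteq R_1'\) gives \((s',u') \in R_1'\). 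As \((s,u)\) and \(\alpha\) were arbitrary, \(R_1 \transCorr{P} R_1'\).

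For the second requirement, assume \(R_i \transCorr{P} R'\) for all \(i \in I\); I would show \((\bigcup_{i \in I} R_i) \transCorr{P} R'\). Fix any \((s,u) \in \bigcup_{i \in I} R_i\) and any \(\alpha \in \At\). By definition of union, \((s,u) \in R_j\) for some \(j \in I\), and the hypothesis \(R_j \transCorr{P} R'\) directly yields conditions (1)--(5) for this pair at \(\alpha\) with target \(R'\). Since the pair and atom were arbitrary, the union progresses into \(R'\).

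There is no genuine mathematical obstacle here: the statement is precisely the standard requirement that the underlying function be compatible with the lattice structure in Pous's framework, and the proof is bookkeeping. The one point demanding care is correctly classifying the five conditions by their dependence on the target relation---in particular, noticing that conditions (2)--(4) conclude facts about the predicate \(P\) (deadness of a witness state) rather than membership in any relation, so they are inert under shrinking/growing \(R\), \(R'\) and under unions; only condition (5) interacts with the target relation, and only monotonically. Making this classification explicit is what reduces each requirement to a one-line verification.
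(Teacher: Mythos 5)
Your proof is correct and follows essentially the same route as the paper's: unfold the five conditions, use $R_1 \subseteq R_0$ to transfer them pointwise, and pick a witnessing index $j$ for the union case. If anything, you are more explicit than the paper about the one place where $R_0' \subseteq R_1'$ is actually needed (condition (5)), which the paper glosses over with ``the rest of the cases follows exactly the same pattern.''
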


\begin{proof}
  The proof is simply by unfolding the five conditions in~\Cref{def:alg-progression}.

  To prove \(R_1 \transCorr{P} R_1'\) in the first requirement, we will need to show all five conditions in~\Cref{def:alg-progression} holds, take any pair of states \((s, u) \in R_1 \subseteq R_0\), we will need to show the transition of \((s, u)\) follows all the five conditions required for \(R_1 \transCorr{P} R_1'\):
  \begin{enumerate}
    \item \(s \transOut{\alpha} \retc\) if and only if \(u \transOut{\alpha} \retc\), because \((s, u) \in R_0\) and \(R_0 \transCorr{P} R_0'\).
    \item If \(s \transvia{\alpha \mid p} s'\) and \(u \transRej{\alpha}\), then \(P(s')\), because \((s, u) \in R_0\) and \(R_0 \transCorr{P} R_0'\).
  \end{enumerate}
  And the rest of the cases follows exactly the same pattern.

  Then we show \(\bigcup_{i \in I} R_i \transCorr{P} R'\), take any pair of states \((s, u) \in \bigcup_{i \in I} R_i\), then there exists \(j \in I\), s.t. \((s, u) \in R_j\).
  Thus, all the five condition indeed hold for \((s, u)\), because \((s, u) \in R_j \transCorr{P} R'\).
\end{proof}

We first prove that the equivalence closure \(e\) is a compatible function with the progression we have developed, this allows us to use the up-to technique in the correctness theorem of the algorithms (\Cref{thm:inf-trace-corr,thm:fin-trace-corr}).

\begin{theorem}
  Reflexive closure \(\refClo\) and symmetric closure \(\symClo\) are compatible with the progression \(R \transCorr{P} R'\) for \(R, R' \subseteq S \times S\).
\end{theorem}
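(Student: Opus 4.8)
The notion of \emph{compatibility} I will use is the progression-level reformulation of Pous's criterion~\cite{pous_CompleteLatticesUpTo_2007}: writing $b(R')$ for the greatest relation progressing into $R'$ (well defined since \Cref{thm:prog-sound} makes $\{R \mid R \transCorr{P} R'\}$ closed under unions), one has $R \transCorr{P} R' \iff R \subseteq b(R')$, and a monotone function $f$ is compatible with $b$ exactly when $R \transCorr{P} R'$ implies $f(R) \transCorr{P} f(R')$. Both $\refClo$ and $\symClo$ are evidently monotone, so the whole task reduces to verifying this one implication for each. The plan is identical in both cases: decompose the closure as a union, dispatch the ``old'' part $R$ by monotonicity, dispatch the ``new'' part by a direct check of the five conditions of~\Cref{def:alg-progression}, and recombine via the union rule of~\Cref{thm:prog-sound}. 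I will lean throughout on the fact that \Cref{thm:prog-sound} makes $\transCorr{P}$ downward-closed in its first argument and \emph{upward-closed in its second}; the latter is what lets me aim the freshly added pairs at the enlarged target $\refClo(R')$ (resp.\ $\symClo(R')$) instead of at $R'$.

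For the reflexive closure, write $\refClo(R) = R \cup \Delta$ with $\Delta = \{(s,s) \mid s \in S\}$. From $R \transCorr{P} R'$ and $R' \subseteq \refClo(R')$, the monotonicity rule of~\Cref{thm:prog-sound} gives $R \transCorr{P} \refClo(R')$, so it remains to show $\Delta \transCorr{P} \refClo(R')$ and combine. For a diagonal pair $(s,s)$ and any atom $\alpha$, condition~1 is the trivial iff $s \transOut{\alpha}\retc \Leftrightarrow s \transOut{\alpha}\retc$; conditions~2--4 have \emph{vacuously false} premises, because a \gkat automaton is deterministic, its transition function $\zeta$ assigning each state--atom pair a single outcome, so a single state $s$ can neither transition and reject on the same atom nor transition with two distinct actions; and condition~5 forces both sides to the same transition $s \transvia{\alpha \mid p} s'$, requiring only $(s',s') \in \refClo(R')$, which holds since $\Delta \subseteq \refClo(R')$. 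Determinacy is the sole genuine ingredient here.

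For the symmetric closure, write $\symClo(R) = R \cup R^{-1}$; again $R \transCorr{P} \symClo(R')$ by monotonicity, so I only check $R^{-1} \transCorr{P} \symClo(R')$. Take $(u,s) \in R^{-1}$, i.e.\ $(s,u) \in R$, and read off the five conditions for $(u,s)$ against $\symClo(R')$ from those already known for $(s,u)$ against $R'$. The key observation is that the list in~\Cref{def:alg-progression} is \emph{structurally symmetric under swapping the two states}: condition~1 is an iff and hence unchanged; condition~2 for $(u,s)$ is precisely condition~3 for $(s,u)$ and vice versa; condition~4 is symmetric in its conclusion $P(s') \wedge P(u')$; and in condition~5, $(s',u') \in R'$ yields $(u',s') \in R'^{-1} \subseteq \symClo(R')$. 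Hence $R^{-1} \transCorr{P} \symClo(R')$, and the union rule concludes.

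I expect no deep obstacle: once the progression-level restatement of compatibility is fixed, both arguments are bookkeeping. The two points demanding care are (i) invoking the \emph{second-argument} monotonicity of~\Cref{thm:prog-sound} so that the new pairs progress into the enlarged relation $\refClo(R')$ / $\symClo(R')$ rather than merely $R'$, and (ii) in the reflexive case, discharging conditions~2--4 through determinacy rather than through any property of $R'$. The one preliminary sanity check worth doing is confirming that the targeted definition of compatibility coincides with Pous's $f \circ b \leq b \circ f$ under the dictionary $R \transCorr{P} R' \iff R \subseteq b(R')$, which \Cref{thm:prog-sound} makes legitimate.
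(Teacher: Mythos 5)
Your proposal is correct and follows essentially the same route as the paper's proof: decompose each closure as \(R\) plus the new pairs, handle \(R\) via the monotonicity clauses of \Cref{thm:prog-sound} together with \(R' \subseteq f(R')\), verify the five conditions of \Cref{def:alg-progression} directly on the diagonal (respectively the swapped) pairs, and recombine with the union rule. The only difference is cosmetic: you explicitly invoke determinacy of \gkat automata to discharge the vacuous premises of conditions 2--4 on diagonal pairs, where the paper simply asserts those premises are always false.
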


\begin{proof}
  Recall that a function \(f\) is compatible with the progression \(\transCorr{P}\) if 
  \[R \transCorr{P} R' \implies f(R) \transCorr{P} f(R')\]
  which because our progression is sound (\Cref{thm:prog-sound}), we only need to show for all \((s, u) \in f(R)\), then \(\{(s, u)\} \transCorr{P} f(R')\),
  then the conclusion \(f(R) \transCorr{P} f(R')\) can be reached by unioning all of these singleton set \(\{(s,u)\}\).

  \noindent
  \textbf{\(\refClo\) is compatible:}
  since \(\refClo(R) = R \cup \{(s, s) \mid s \in S\}\),
  \begin{itemize}[nosep]
    \item If \((s, u) \in R\), then \(\{(s, u)\} \transCorr{P} R'\), and because \(R' \subseteq \refClo(R')\), by definition of progression, we get \(\{(s, u)\} \transCorr{P} \refClo(R')\).
    \item If \((s, u)\) for \(s = u\), then there transitions needs will satisfy all the rules of~\Cref{def:alg-progression}: rule 1 and rule 5 is trivial, the premise of rule 2, 3, 4 will always be false.
  \end{itemize}

  \noindent
  \textbf{\(\symClo\) is compatible:}
  since \(\symClo(R) = R \cup \{(u, s) \mid (s, u) \in R\}\),
  \begin{itemize}[nosep]
    \item If \((s, u) \in R\), then \(\{(s, u)\} \transCorr{P} R'\), and because \(R' \subseteq \symClo(R')\), by definition of progression, we get \(\{(s, u)\} \transCorr{P} \symClo(R')\).
    \item If \((s, u) \in R\), then \(\{(u, s)\} \transCorr{P} \symClo(R')\), because all the conditions in our progression are symmetric: condition 1 is true for \((u, s)\) because condition 1 is true for \((s, u)\); condition 2 is true for \((u, s)\) because condition 3 is true for \((s, u)\); similarly for the rest of the conditions.
  \end{itemize}
\end{proof}

However, the situation for transitive closure is a bit more convoluted, as the transitive closure \(\transClo\) is \emph{not} compatible with our instance of progression \(\transCorr{P}\), even in the special case where \(P = \algIsDead\).

\begin{example}[Transitive Closure is not Compatible]\label{exp:trans-clos-incompatible}
  Consider a \gkat automaton with only one atom \(1\) (i.e. have no primitive tests), and the following states:
  \[s_0 \transRej{1} \qquad 
  s_1 \transvia{1 \mid p} s_1 \qquad
  s_2 \transvia{1 \mid p} s_3 \qquad
  s_3 \transOut{1} \retc\]
  then let \(R \triangleq \{(s_0, s_1), (s_1, s_2)\}\) and \(R' \triangleq \{(s_1, s_3)\}\): even though \(R \transCorr{\algIsDead} R'\), the transitive closure \(\transClo(R) = \{(s_0, s_1), (s_1, s_2), (s_0, s_2)\}\) does not progress to \(\transClo(R') = \{(s_1, s_3)\}\), as \(s_0 \transRej{1}\), yet \(s_2\) transitions to a live state \(s_3\).
  Thus,
  \[R \transCorr{\algIsDead} R' 
  \quad\text{does \emph{not} imply}\quad
  \transClo(R) \transCorr{\algIsDead} \transClo(R')\ .\]
  For people familiar with the work of~\citet{pous_CompleteLatticesUpTo_2007}, this example also shows that the progression \(\transCorr{\algIsDead}\) does not preserve the internal monoid of relations, where the monoidal operation is relation composition.
\end{example}

\subsection{Correctness of the Algorithm for Infinite-trace Equivalence}\label{ap:inf-trace-equiv-alg-correct}

In fact, the core difficulty behind the incompatibility of transitive closure is that the progression \(\transCorr{\algIsDead}\) does not preserve relation composition \(\fatsemi\), as demonstrated by~\Cref{exp:trans-clos-incompatible}:
\[R_1 \transCorr{\algIsDead} R_1' 
\text{ and }
R_2 \transCorr{\algIsDead} R_2'
\quad\text{does \emph{not} imply}\quad
R_1 \fatsemi R_2 \transCorr{\algIsDead} R_1' \fatsemi R_2'\ .\]
However, the progression \(\transCorr{\false}\), used for infinite-trace equivalence, does preserve composition, which we will demonstrate below, and show how the preservation of composition leads to compatiblity of transitive closure.

\begin{lemma}[Preservation of Composition]
  The progression \(\transCorr{\false}\) preserves relation composition \(\fatsemi\):
  \[R_1 \transCorr{\false} R_1'
  \quad\text{and}\quad
  R_2 \transCorr{\false} R_2'
  \quad\text{implies}\quad
  R_1 \fatsemi R_2 \transCorr{\false} R_1' \fatsemi R_2'\ .
  \]
\end{lemma}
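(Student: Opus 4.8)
The plan is to unfold the definition of $\transCorr{\false}$ from \Cref{def:alg-progression} directly on an arbitrary pair of the composite relation, exploiting the fact that with $P = \false$ the progression conditions become rigid agreement constraints. Suppose $R_1, R_1'$ relate the states of automata over $S_1$ and $S_2$, while $R_2, R_2'$ relate states over $S_2$ and $S_3$, so that $R_1 \fatsemi R_2$ and $R_1' \fatsemi R_2'$ relate $S_1$ and $S_3$. I would fix an arbitrary $(s,t) \in R_1 \fatsemi R_2$; by definition of composition there is a witness middle state $m \in S_2$ with $(s,m) \in R_1$ and $(m,t) \in R_2$. The five conditions for $R_1 \transCorr{\false} R_1'$ then hold in particular for $(s,m)$, and those for $R_2 \transCorr{\false} R_2'$ hold for $(m,t)$. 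It remains to verify the five conditions for $(s,t)$ against $R_1' \fatsemi R_2'$ at an arbitrary atom $\alpha$; since $(s,t)$ is arbitrary this yields the whole progression (equivalently, one reassembles the singletons via the union rule of \Cref{thm:prog-sound}, exactly as in the compatibility proofs above).

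The heart of the argument is that, because $P = \false$, conditions 2--4 become \emph{impossibility} constraints, and combined with determinacy (on a fixed $\alpha$ each state does exactly one of accept, reject, or transition) they force $s$, $m$, and $t$ to move in lock-step. I would case-split on the behaviour of the middle state $m$ at $\alpha$. If $m \transOut{\alpha} \retc$, then condition 1 on $(s,m)$ and on $(m,t)$ propagates acceptance to give $s \transOut{\alpha}\retc$ and $t \transOut{\alpha}\retc$; if $m \transRej{\alpha}$, then conditions 1--2 on $(s,m)$ force $s \transRej{\alpha}$ and conditions 1, 3 on $(m,t)$ force $t \transRej{\alpha}$. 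In both of these cases the transition-premises of conditions 2--5 for $(s,t)$ are vacuous and condition 1 holds symmetrically, so nothing further is needed. The only substantive case is $m \transvia{\alpha\mid p} m'$: here condition 1 rules out $s$ accepting and condition 3 on $(s,m)$ rules out $s$ rejecting, so $s \transvia{\alpha \mid p_s} s'$, whence condition 4 forces $p_s = p$ and condition 5 yields $(s',m') \in R_1'$; the symmetric reasoning on $(m,t)$ (conditions 1, 2, 4, 5) gives $t \transvia{\alpha\mid p} t'$ with $(m',t') \in R_2'$. Chaining the two memberships through the shared witness $m'$ delivers $(s',t') \in R_1' \fatsemi R_2'$, which is precisely condition 5 for $(s,t)$, while conditions 2--4 hold vacuously since both $s$ and $t$ fire the common action $p$.

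I expect the mechanics to be routine once the case split is set up; the point requiring care is \emph{why composition is preserved by $\transCorr{\false}$ but not by $\transCorr{\algIsDead}$}. The chaining step is sound exactly because $m'$ is a genuine witness relating $s'$ and $t'$ on both sides, and this is guaranteed only when the impossibility constraints force $s$, $m$, $t$ through the same action. Under $P = \algIsDead$ those constraints weaken: a transition/reject mismatch is excused whenever the target is dead, and a dead-state excuse on the $(s,m)$ side provides no middle state to compose with the $(m,t)$ side, so conditions 2--4 no longer chain. This is the very failure exhibited in \Cref{exp:trans-clos-incompatible}, and flagging the contrast is the purpose of this lemma, since preservation of composition is what will promote the transitive closure to a function compatible with $\transCorr{\false}$ for infinite-trace equivalence.
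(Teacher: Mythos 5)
Your proof is correct and takes essentially the same approach as the paper's: both exhibit the witness middle state $m$, exploit the fact that with $P=\false$ conditions 2--4 become impossibility constraints forcing $s$, $m$, and the right-hand state into lock-step at each atom, and chain condition 5 through $m'$ to land in $R_1'\fatsemi R_2'$. The only difference is presentational --- you case-split on $m$'s behaviour first and then derive the outer states' behaviour, whereas the paper verifies the five conditions one by one with the case analysis on $m$ embedded inside each --- and your closing remark correctly identifies why the argument breaks for $\transCorr{\algIsDead}$.
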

\begin{proof}
  We show that for all \((s, u) \in R_1 \fatsemi R_2\) and \(\alpha \in \At\), \(\{(s, u)\} \transCorr{\false} R_1' \fatsemi R_2'\).
  Recall that \((s, u) \in R_1 \fatsemi R_2\) if and only if there exists an \(m\) s.t. \((s, m) \in R_1\) and \((m, u) \in R_2\): 
  \begin{itemize}[nosep]
    \item Condition 1 holds because \(s \transOut{\alpha} \retc\) iff \(m \transOut{\alpha} \retc\) iff \(u \transOut{\alpha} \retc\).
    \item Condition 2: assume \(s \transRej{\alpha}\) and \(u \transvia{\alpha \mid p} u'\), we will derive a contradiction by analyzing the transition of \(m\):
    \begin{itemize}[nosep]
      \item if \(m \transOut{\alpha} \retc\), then there is a contradiction because \(s \transRej{\alpha}\) and \((s, m) \in R_1\).
      \item if \(m \transRej{\alpha}\), then there is a contradiction because \(u \transvia{\alpha \mid p} u'\) and \((m, u) \in R_2\);
      \item if \(m \transvia{\alpha \mid q} m'\), then there is a contradiction because \(s \transRej{\alpha}\) and \((s, m) \in R_1\);
    \end{itemize}
    \item The proof of condition 3 is symmetric to condition 2.
    \item Condition 4: assume \(s \transvia{\alpha \mid p} s'\) and \(u \transvia{\alpha \mid q} u'\) and \(p \neq q\), we can derive a contradiction by case analysis on the transition of \(m\), similar to condition 2.
    \item Condition 5: when \(s \transvia{\alpha \mid p} s'\) and \(u \transvia{\alpha \mid p} u'\), we will show \(m \transvia{\alpha \mid p} m'\), by case analysis on the transition of \(m\):
    \begin{itemize}[nosep]
       \item if \(m \transOut{\alpha} \retc\), then because \(s \transvia{\alpha \mid p} s'\) and \((s, m) \in R_1\), contradicting condition 1 for \(R_1\);
      \item if \(m \transRej{\alpha}\), then because \(s \transvia{\alpha \mid p} s'\) and \((s, m) \in R_1\), a contradiction can be derived via condition 2;
      \item if \(m \transvia{\alpha \mid q} m'\), and \(p \neq q\), then because \(s \transvia{\alpha \mid p} s'\) and \((s, m) \in R_1\), a contradiction cna be derived via condition 4;
    \end{itemize}
    Therefore the only possible transition is \(m \transvia{\alpha \mid p} m'\), and by condition 5 for both \(R_1\) and \(R_2\), we conclude \((s', m') \in R_1'\) and \((m', u') \in R_2'\).
    Then, by definition of relation composition, \((s', u') \in R_1' \fatsemi R_2'\).
  \end{itemize}
\end{proof}

\begin{theorem}
  The transitive closure \(\transClo\) is compatible with \(\transCorr{\false}\):
  \[
    R \transCorr{\false} R' 
    \qquad\text{implies}\qquad
    \transClo(R) \transCorr{\false} \transClo(R')
  \]
\end{theorem}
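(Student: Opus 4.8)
The plan is to reduce the statement to the two facts already in hand: the Preservation of Composition lemma and the soundness of the progression (\Cref{thm:prog-sound}). The key observation is that the transitive closure decomposes as a union of finite compositions, \(\transClo(R) = \bigcup_{n \geq 1} R^{(n)}\), where \(R^{(n)}\) denotes the \(n\)-fold relation composition \(R \fatsemi \cdots \fatsemi R\) (with \(R^{(1)} = R\)). Since progression soundness lets us both combine progressions out of a union and weaken the target relation upward, it suffices to control each power \(R^{(n)}\) separately.

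First I would establish, by induction on \(n \geq 1\), that \(R^{(n)} \transCorr{\false} R'^{(n)}\). The base case \(n = 1\) is exactly the hypothesis \(R \transCorr{\false} R'\). For the inductive step, I combine the inductive hypothesis \(R^{(n)} \transCorr{\false} R'^{(n)}\) with the hypothesis \(R \transCorr{\false} R'\) and apply the Preservation of Composition lemma to obtain
\[
  R^{(n)} \fatsemi R \transCorr{\false} R'^{(n)} \fatsemi R',
\]
which is precisely \(R^{(n+1)} \transCorr{\false} R'^{(n+1)}\). This is the step that genuinely uses the \(\false\) instance: composition preservation is exactly what fails for \(\transCorr{\algIsDead}\), as witnessed by~\Cref{exp:trans-clos-incompatible}, so the argument is necessarily confined to infinite-trace equivalence.

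Next, since \(R'^{(n)} \subseteq \transClo(R')\) for every \(n\), the first requirement of \Cref{thm:prog-sound} (taking \(R_1 = R_0 = R^{(n)}\), \(R_0' = R'^{(n)}\), and \(R_1' = \transClo(R')\)) upgrades each progression to \(R^{(n)} \transCorr{\false} \transClo(R')\) for all \(n \geq 1\). Finally, I apply the second requirement of \Cref{thm:prog-sound} to the family \(\{R^{(n)}\}_{n \geq 1}\), concluding
\[
  \transClo(R) \;=\; \bigcup_{n \geq 1} R^{(n)} \;\transCorr{\false}\; \transClo(R'),
\]
which is the desired compatibility.

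I do not expect a serious obstacle here: the heavy lifting was already done in the Preservation of Composition lemma, and the remaining work is purely structural bookkeeping against the lattice-theoretic requirements of \Cref{thm:prog-sound}. The only point demanding care is to invoke the \(\false\) version throughout and never silently appeal to the general \(\transCorr{P}\) form, since the incompatibility example shows the very same chain of reasoning breaks down when \(P = \algIsDead\).
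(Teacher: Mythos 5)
Your proposal is correct and follows essentially the same route as the paper: decompose \(\transClo(R)\) into the union of the finite powers \(R^{(n)}\), establish the progression for each power by induction using the Preservation of Composition lemma, and then assemble the result via the two lattice-theoretic requirements of \Cref{thm:prog-sound}. If anything, your bookkeeping is slightly more faithful to the stated theorem than the paper's own writeup, which proves \(R^{n} \transCorr{\false} R^{n}\) and concludes \(\transClo(R) \transCorr{\false} \transClo(R)\) rather than explicitly tracking \(R'\) to reach \(\transClo(R) \transCorr{\false} \transClo(R')\).
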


\begin{proof}
  Recall that \(\transClo(R)\) is defined as \(\bigcup \{R^n \mid n \in \mathbb{N}, n \geq 1\}\).
  where \(R^1 \triangleq R\) and \(R^{k+1} \triangleq R \fatsemi R^{k}\).
 
  Because \(\transCorr{\false}\) preserves relation composition, we can show that \(R^{n} \transCorr{\false} R^{n}\) for all \(n \geq 1\) by a simple induction.
  And we can derive the desired result by the following chain of implications:
  \begin{align*}
    & \forall n \geq 1, R^{n} \transCorr{\false} R^{n} 
      & \text{simple induction}\\
    \text{implies}\quad &
      \forall n \geq 1, R^{n} \transCorr{\false} \transClo(R)
      & \text{\Cref{thm:prog-sound}, \(R^{n} \subseteq \transClo(R)\)} \\
    \text{implies}\quad &
      \bigcup_{n \geq 1} R^{n} \transCorr{\false} \transClo(R)
      & \text{\Cref{thm:prog-sound}} \\
    \text{implies}\quad &
      \transClo(R) \transCorr{\false} \transClo(R) 
      & \text{definition of \(\transClo\)}
  \end{align*}
\end{proof}

\begin{corollary}[Compatiblity of Equivalence Closure]\label{thm:equiv-clo-compatible-bisim}
  Because the refexive closure \(\refClo\), symmetric closure \(\symClo\), and transitive closure \(\transClo\) are compatible with the progression \(\transCorr{\false}\), then the equivalence closure \(e = \refClo \circ \symClo \circ \symClo\) is also compatible.
  More concretely \(R \transCorr{\false} e(R)\) implies \(R\) is contained in an invariant of \(\transCorr{\false}\).
\end{corollary}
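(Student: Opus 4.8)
The plan is to deduce the corollary from two ingredients: that compatibility with a progression is preserved under composition, and that $e$ is a closure operator. Crucially, the whole argument lives in the infinite-trace setting $\transCorr{\false}$, where---unlike the finite-trace progression $\transCorr{\algIsDead}$ of \Cref{exp:trans-clos-incompatible}---all three of $\refClo$, $\symClo$, and $\transClo$ are already known to be compatible, so nothing here would transfer verbatim to the $\algIsDead$ case.

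First I would record the composition lemma: if $f$ and $g$ are each compatible with $\transCorr{\false}$, then so is $f \circ g$. This is immediate by unfolding the definition---given $R \transCorr{\false} R'$, compatibility of $g$ yields $g(R) \transCorr{\false} g(R')$, and compatibility of $f$ then yields $f(g(R)) \transCorr{\false} f(g(R'))$. Applying this twice to the already-established compatibility of $\refClo$, $\symClo$, and $\transClo$ shows that $e = \refClo \circ \symClo \circ \transClo$ is compatible with $\transCorr{\false}$, which settles the first sentence of the statement.

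For the second sentence, I would exploit that $e$ is a closure operator, i.e. monotone, extensive ($R \subseteq e(R)$), and idempotent ($e(e(R)) = e(R)$), being the equivalence closure. Now suppose $R \transCorr{\false} e(R)$. Compatibility of $e$ gives $e(R) \transCorr{\false} e(e(R))$, and idempotence rewrites the target as $e(e(R)) = e(R)$; invoking the upward-closure-in-the-target rule from \Cref{thm:prog-sound} yields $e(R) \transCorr{\false} e(R)$, so $e(R)$ is an invariant. Extensivity then gives $R \subseteq e(R)$, so $R$ is contained in the invariant $e(R)$, exactly as required.

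The main obstacle is conceptual rather than computational. One must keep track of the fact that the whole argument depends on the specific predicate $P = \false$: the analogous claim fails for $\transCorr{\algIsDead}$ precisely because $\transClo$ is not compatible there, so this corollary is genuinely about bisimilarity and not about finite-trace equivalence. The only other delicate point is that $e$ must be a genuine idempotent and extensive closure---these are exactly the two properties consumed in passing from the hypothesis $R \transCorr{\false} e(R)$ to the invariant condition $e(R) \transCorr{\false} e(R)$, together with the monotonicity of the progression in its target supplied by \Cref{thm:prog-sound}.
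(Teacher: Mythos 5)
Your proposal is correct and matches the route the paper intends for this corollary: compatibility is closed under composition (so $e$ is compatible), and a compatible function is sound, i.e.\ $R \transCorr{\false} e(R)$ places $R$ inside an invariant. Your use of idempotence of $e$ to obtain the invariant $e(R) \transCorr{\false} e(R)$ in a single step is a slight streamlining of the standard $\bigcup_n f^n(R)$ argument from the up-to framework, but it is the same idea; you also correctly read the composite as $\refClo \circ \symClo \circ \transClo$ (the $\symClo \circ \symClo$ in the statement is a typo) and correctly flag that none of this transfers to $\transCorr{\algIsDead}$.
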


The first proof we give is bisimulation version of the algorithm, where the checked condition can be represented as \(R \transCorr{\false} R'\).
This version is easier to prove because, by definition, \(R\) is a bisimulation relation if and only if \(R \transCorr{\false} R\), that is when \(R\) is an invariant of \(\transCorr{\false}\).
Indeed, if \(P\) is false, then condition (2)-(4) will be false when their premises are true; thus the pair \(s, u\) either both return (condition (1)) or both transition and execute the same action (condition (5)).
Additionally, the optimization from lines 4-7 will never be invoked.

\begin{theorem}[Correctness of Infinite-trace Equivalence]\label{thm:inf-trace-corr}
  When the \(\algKnownDead\) and \(\algIsDead\) is replaced by constant false function, then \(\algEquiv(s, u)\) will always terminate if the input automata are finite, and \(\algEquiv(s, u)\) will return true if and only if \(s\) and \(u\) are bisimilar.
\end{theorem}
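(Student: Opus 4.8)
The plan is to split the statement into termination and the soundness/completeness biconditional, exploiting the observation made just before the theorem: replacing \(\algIsDead\) and \(\algKnownDead\) by the constant-false function collapses \Cref{alg:nonsymb-bisim} into a pure bisimulation search. The shortcut on lines 4--7 is never taken, and in \Cref{def:alg-progression} the conditions (2)--(4) can hold only when their premises are \emph{false}, so that an invariant of \(\transCorr{\false}\) is exactly a bisimulation. The whole argument then reduces to connecting the operational behaviour of \(\algEquiv\) to invariants of \(\transCorr{\false}\) via the up-to machinery already in place (\Cref{thm:prog-sound}, \Cref{thm:equiv-clo-compatible-bisim}). For termination I would use the number of equivalence classes of the union-find as a decreasing measure: every call passing line 2 performs \(\algUnion\) on line 3 and strictly decreases this count, which is bounded below for finite automata; hence only finitely many non-trivial calls occur, each spawning at most \(|\At|\) recursive calls on line 13, so the recursion tree is finite and the algorithm halts.

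For soundness (true \(\Rightarrow\) bisimilar) I would follow the schema sketched for the non-symbolic proof. Assuming \(\algEquiv(s,u)\) starts from a fresh union-find and returns true, let \(R\) be the set of all pairs ever passed to \(\algUnion\). The key lemma is the post-condition \(R \transCorr{\false} e(R)\), where \(e=\refClo\circ\symClo\circ\transClo\) is the equivalence closure realised by \(\algRep\). To establish it I would note that each pair in \(R\) was unioned inside a call that then evaluated line 8 to true: conditions 1--4 of \Cref{def:alg-progression} are verified literally by lines 9--12 (with \(P=\false\) forcing the premises of 2--4 to fail), while condition 5 holds because every recursive \(\algEquiv(s',u')\) returning true leaves \(s'\) and \(u'\) in the same \(\algRep\)-class, i.e. \((s',u')\in e(R)\). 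Then \Cref{thm:equiv-clo-compatible-bisim} upgrades \(R\transCorr{\false}e(R)\) to: \(R\) is contained in an invariant \(I\) of \(\transCorr{\false}\). Since \(I\) is a bisimulation and \((s,u)\in R\subseteq I\), the states \(s\) and \(u\) are bisimilar.

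For completeness (bisimilar \(\Rightarrow\) true) I would run the induction in the opposite direction, using termination. Let \(\sim\) be the greatest bisimulation, so \(\sim\,\transCorr{\false}\,\sim\). I would first argue that every pair on which \(\algEquiv\) is invoked during a run started from a bisimilar pair is itself bisimilar: the only recursive calls are to \((s',u')\) arising from matched transitions \(s\transvia{\alpha\mid p}s'\), \(u\transvia{\alpha\mid p}u'\) of a bisimilar pair \((s,u)\), and bisimilarity yields \(s'\sim u'\). Given this invariant, for any bisimilar pair condition 1 holds and the premises of conditions 2--4 are unsatisfiable, so line 8 can fail only if some recursive call returns false. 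Since termination gives a finite recursion tree, I would induct from its leaves (calls returning at line 2) upward: each internal call has all children returning true by the induction hypothesis and satisfies conditions 1--4, hence returns true; in particular the root \(\algEquiv(s,u)\) returns true.

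The main obstacle I anticipate is the soundness post-condition \(R\transCorr{\false}e(R)\): it requires reasoning about a stateful, recursively-invoked union-find and, crucially, justifying that the algorithm may discharge condition 5 by consulting \(\algRep\) (landing in \(e(R)\)) rather than in \(R\) itself. This is exactly where the up-to technique is indispensable, and it hinges on the compatibility of \(\transClo\) with \(\transCorr{\false}\) established in \Cref{thm:equiv-clo-compatible-bisim} --- a property that, as \Cref{exp:trans-clos-incompatible} warns, fails for the trace progression \(\transCorr{\algIsDead}\) and therefore makes this bisimulation case genuinely cleaner than its finite-trace counterpart.
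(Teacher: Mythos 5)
Your proposal is correct and follows essentially the same route as the paper's proof: termination via a strictly decreasing union-find-based measure, soundness by establishing the post-condition \(R \transCorr{\false} e(R)\) for the set \(R\) of unioned pairs and then invoking the compatibility of the equivalence closure (\Cref{thm:equiv-clo-compatible-bisim}) to place \(R\) inside an invariant, i.e.\ a bisimulation. The only cosmetic difference is in the completeness direction, where the paper argues contrapositively (a false return forces a witness that \(s \nsim u\)) while you induct upward through the recursion tree on the bisimilar pairs directly --- the two are logically the same argument.
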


\begin{proof}
  We denote all pairs of the inputs for function \algUnion{}  during the program execution as \(R \subseteq S \times U\).
  Recall that union-find generates the least equivalence relation from all the inputs of \algUnion{} , which we denote as \(e(R)\).
  We note that \(e(R)\) is an equivalence relation on the disjoint union \(S + U\), and this relation coincide with the equivalence of representatives:
  \[e(R) = \{(s, u) \mid \algRep(s) = \algRep(u)\}.\]

  We first obtain termination: 
  because the state sets \(S\) and \(U\) are finite, thus so is \(S \times U\).
  Then notice if \((s, u) \notin e(R)\), then \(s, u\) is sent into \(\algUnion{} \) on line 3, making \((s, u) \in R \subseteq e(R)\) before the recursive call on line 13 in~\Cref{alg:nonsymb-bisim}.
  Therefore, at each recursive call of \algEquiv, the size of the finite set \((S \times U) \setminus e(R)\) will decrease.

  As for the correctness, we will construct an inductive invariant for the program, i.e. a triple that holds for the program if it holds for all recursive calls.
  For readability, we will split the invariant into two cases (formally, the two invariants translates to the conjunction of two implications): the first case is when the return is true; the second is when it is false.

  First we focus on the case when \(\algEquiv(s, u) = \true\).
  We claim the following post condition holds under the precondition \(R_0 = R\), where \(R_0\) is the auxiliary variable to record the starting \(R\):
  \[(s, u) \in R  \text{ and }  (R \setminus R_{0}) \transCorr{\false} e(R).\]
  Thus, if the start \(R_{0}\) is empty, i.e. starting from an empty union-find object, then we obtain \(R \transCorr{\false} e(R)\) at the end of the execution.
  Because \(R\) is an invariant of \(\transCorr{\false}\) if and only if \(R\) is a bisimulation; and equivalence \(e\) is compatible with \(\transCorr{\false}\) (\Cref{thm:equiv-clo-compatible-bisim}), then the ending condition \(R \transCorr{\false} e(R)\) implies \((s, u) \in R\) is in a bisimulation between \(S\) and \(U\).

  We proceed to sketch the proof of the above inductive invariant when \(\algEquiv(s, u) = \true\).
  We let \(R_{1} = R_{0} \cup \{(s, u)\}\), then the following condition is a loop invariant for the for-all loop on lines 8-13:
  \[(s, u) \in R \text{ and } (R \setminus R_{1}) \transCorr{\false} e(R) \text{ and } \forall  \alpha \in \Lambda , (s, u)  \rightarrowtail_\alpha  e(R),\]
  where \(\Lambda\) is the explored atoms in the loop.
  After the loop finishes, \(\Lambda\) will just be \(\At\), thus we can obtain the following conclusion:
  \[(s, u) \in R \text{ and } (R \setminus R_{1}) \transCorr{\false} e(R) \text{ and } (s, u) \transCorr{\false} e(R).\]
  Because \(R_{1} = R_{0} \cup \{(s, u)\}\), we obtain our conclusion: \((s, u) \in R \text{ and } (R \setminus R_{0}) \transCorr{\false} e(R).\)

  Second, we consider the case when \(\algEquiv(s, u) = \false\).
  We show the invariant that \(s \nsim u\) in the post-condition, which implies \(s\) and \(u\) are not bisimilar.
  We note that since \algKnownDead{} and \algIsDead{} are replaced with constant false function, therefore in order for \(\algEquiv(s, u) = \false\), there exists an atom s.t. at least one of the following is true:
  \begin{itemize}[nosep]
    \item one of \(s\) or \(u\) accepts, but not the other;
    \item one of \(s\) or \(u\) transitions, but the other transitions; 
    \item \(s \transvia{\alpha \mid p} s'\) and \(u \transvia{\alpha \mid q} u'\), but either \(p \neq q\) or \(\algEquiv(s', u') = \false\).
  \end{itemize}
  By the inductive invariant, \(\algEquiv(s', u') = \false\) implies \(s' \nsim u'\). 
  It is not hard to see if any of the condition above is violated, then \(s, u\) violates the condition of the bisimulation. 
  In other word, if \(s \sim u\) then the relation \(\sim\) cannot be a bisimulation, concluding the desired post-condition \(s \nsim u\).
\end{proof}

\subsection{Correctness of the Algorithm for Finite-trace Equivalence}\label{ap:trace-equiv-alg-correct}

Unfortunately, the correctness of the finite-trace semantics case is not as easy as the bisimulation (or infinite-trace) case, mainly for the following two reasons.
First, since we no longer explicitly construct a bisimulation with our algorithm, it is hard to demonstrate every invariant \(R\) identifies a bisimulation on normalized coalgebra, which is at the core of the correctness proof of~\citet{smolka_GuardedKleeneAlgebra_2020}. 
In particular, to show every invariant of \(\transCorr{\algIsDead}\) is a bisimulation on the normalized coalgebra, we will need to show \(s'\) and \(u'\) in the last condition have the same liveness; yet, we only know \((s', u') \in R\).
Second, we will need to use the up-to technique to show that if \(R \transCorr{\algIsDead} e(R)\) implies \(R\) is in an invariant.
Yet, as we have shown in~\Cref{exp:trans-clos-incompatible}, transitive closure, and thus equivalence closure, is \emph{not} compatible with \(\transCorr{\algIsDead}\).

Perhaps quite surprisingly, the easiest way to mitigate both of these problems is to work directly with trace equivalence instead of bisimulation on normalized coalgebra.
Indeed, by reasoning about trace, it gives us a strong enough induction hypothesis to prove both that trace equivalence is the maximal invaraint, and the transitive closure \(\transClo\) is sound (called ``correct'' by~\citet{pous_CompleteLatticesUpTo_2007}) with respect to \(\transCorr{\algIsDead}\).

Before we prove our core result, we will first recall some common lemma around traces in GKAT automata.

\begin{lemma}\label{thm:live-iff-have-trace}
  A state \(s\) is live if and only if it has at least one trace.
\end{lemma}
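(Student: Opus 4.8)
The plan is to reduce the statement to the inductive structure of the trace semantics $\sem{\cdot}$ from \Cref{def:trace-sem}. First I would fix the working definition of liveness: a state $s$ is \emph{live} precisely when there is a finite transition path
\[
  s = s_0 \transvia{\alpha_1 \mid p_1} s_1 \transvia{\alpha_2 \mid p_2} \cdots \transvia{\alpha_n \mid p_n} s_n
\]
with $n \geq 0$ that ends in an accepting state, i.e.\ $s_n \transOut{\beta} \retc$ for some $\beta \in \At$ (the case $n = 0$ being $s$ itself accepting some atom). Since ``$s$ has at least one trace'' means exactly $\sem{s} \neq \emptyset$, the lemma amounts to showing that such a witnessing path exists if and only if $\sem{s}$ is inhabited. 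I would prove the two implications separately.

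For the forward direction (liveness $\Rightarrow$ a trace exists), I would take a witnessing path as above and read a trace off it starting from the accepting end. The first rule of \Cref{def:trace-sem} gives $\beta \in \sem{s_n}$; applying the second rule to the transition $s_{n-1} \transvia{\alpha_n \mid p_n} s_n$ yields $\alpha_n p_n \cdot \beta \in \sem{s_{n-1}}$, and iterating backwards along the path produces the guarded string $\alpha_1 p_1 \alpha_2 p_2 \cdots \alpha_n p_n \beta \in \sem{s}$, so $\sem{s} \neq \emptyset$. For the converse (a trace exists $\Rightarrow$ liveness), I would take any $w \in \sem{s}$ and induct on its derivation in the inductive definition of $\sem{\cdot}$. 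In the base case $w = \alpha$ arises from $s \transOut{\alpha} \retc$, so $s$ is accepting and hence live via the trivial path. In the step case $w = \alpha p \cdot w'$ arises from $s \transvia{\alpha \mid p} s'$ together with $w' \in \sem{s'}$; the induction hypothesis gives a path from $s'$ to an accepting state, and prepending the transition $s \transvia{\alpha \mid p} s'$ extends it to a witness that $s$ is live.

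The only point requiring care — and the one I would make explicit — is the well-foundedness of the induction in the converse direction: since $\sem{\cdot}$ is the \emph{least} relation closed under the two rules of \Cref{def:trace-sem}, every trace is a finite guarded string equipped with a finite derivation, so induction on that derivation (equivalently, on the length of $w$) is legitimate and no infinite, non-accepting run can spuriously certify a trace. There is no substantive obstacle beyond this; the lemma is foundational and both constructions are mechanical once liveness is phrased as reachability of an accepting state. I would close by noting that the statement dualizes immediately to the form actually used by the decision procedure, namely that $s$ is dead (the constant-\false\ regime underlying \algIsDead) exactly when $\sem{s} = \emptyset$.
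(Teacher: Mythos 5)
Your proposal is correct and follows essentially the same route as the paper's proof: the forward direction reads a trace off a witnessing path to an accepting state, and the converse recovers such a path by induction on the derivation (equivalently, the length) of the trace. Your version is merely more explicit about the definition of liveness and the well-foundedness of the induction, which the paper leaves implicit.
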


\begin{proof}
  If \(s\) is live, then there exists a chain to a accepting state \(s_n\):
  \[s \transvia{\alpha_1 \mid p_1} s_1 \transvia{\alpha_2 \mid p_2} \dots \transvia{\alpha_n \mid p_n} s_n \transOut{\alpha} \retc,\]
  Then by an induction argument, we can show that the trace \(\alpha_1 p_1 \alpha_2 p_2 \dots \alpha_n p_n \alpha\) is a trace in \(\sem{s}\).

  In the reverse direction, if we have a trace \(\alpha_1 p_1 \alpha_2 p_2 \dots \alpha_n p_n \alpha \in \sem{s}\), then a path to accepting state can be recovered by computing the transitions on \(\alpha_1, \alpha_2, \alpha_3\) and so on:
  \[s \transvia{\alpha_1 \mid p_1} s_1 \transvia{\alpha_2 \mid p_2} \dots \transvia{\alpha_n \mid p_n} s_n \transOut{\alpha} \retc,\]
  which implies that \(s\) is live.
\end{proof}



\begin{lemma}[Trace Determinacy~\cite{smolka_CantorMeetsScott_2017}]\label{thm:trace-deter}
  Given a state \(s\) and a starting atom \(\alpha\), there exists at most one trace that starts with \(\alpha\) in \(\sem{s}\).
\end{lemma}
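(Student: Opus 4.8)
The plan is to reduce the claim to the determinism of the transition function $\zeta$: for every pair $(s,\alpha)$ the value $\zeta(s)_\alpha \in \{\bot,\retc\} + S\times\Sigma$ is uniquely determined, so at a given state on a given atom exactly one of the three clauses (reject, accept, transition) of \Cref{def:trace-sem} can fire. Everything else is bookkeeping over that definition, carried out by induction on the length of the trace.

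First I would unfold the first step. Given $s$ and the leading atom $\alpha$, there are three mutually exclusive cases. If $\zeta(s)_\alpha = \bot$ there is no trace beginning with $\alpha$ and the claim is vacuous. If $s \transOut{\alpha}\retc$, then by determinism $\zeta(s)_\alpha \neq (s',p)$, so the transition clause cannot apply and the \emph{unique} trace beginning with $\alpha$ is the singleton $\alpha$. If $s \transvia{\alpha \mid p} s'$, then $\alpha$ cannot also be accepting, so every trace beginning with $\alpha$ has the shape $\alpha\, p \cdot w'$ with $w' \in \sem{s'}$, and here both the action $p$ and the successor $s'$ are forced by $\zeta$.

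The proof then proceeds by induction, and the step I expect to be the \textbf{main obstacle} is choosing the right induction invariant. Inducting on the single starting atom is too weak: in the transition case the continuations $w'$ of two candidate traces may themselves begin with \emph{different} atoms, so nothing yet forces $w_1' = w_2'$. The fix is to carry the entire sequence of atoms through the induction, establishing the sharper invariant that a state $s$ together with a finite atom sequence $\alpha_0\alpha_1\cdots\alpha_n$ admits at most one trace exhibiting exactly those atoms in order. With this invariant the induction on $n$ closes cleanly: $\alpha_0$ fixes the first step via $\zeta$ (acceptance forcing $n=0$, or a unique transition to some $(s',p)$), and the induction hypothesis applied at $s'$ to the tail $\alpha_1\cdots\alpha_n$ pins down the remainder, so the actions, intermediate states, and final acceptance are all determined. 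This determinism-driven induction over the atom sequence is precisely the content of trace determinacy, and it is where the deterministic shape of $\zeta$ does all the work.
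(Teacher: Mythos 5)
The paper does not actually prove this lemma: it is imported from prior work~\cite{smolka_CantorMeetsScott_2017}, so there is no in-paper argument to compare yours against. Your core idea --- everything follows from the determinism of \(\zeta\), made precise by an induction that carries the whole atom sequence --- is the standard argument for the determinacy property of \gkat trace languages, and the induction you describe does go through for the invariant you state.

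The gap is between that invariant and the literal statement of the lemma. Your strengthened claim (``for each finite atom sequence \(\alpha_0\cdots\alpha_n\) there is at most one trace of \(s\) exhibiting exactly those atoms'') does not imply ``there is at most one trace of \(s\) whose first atom is \(\alpha\)'', and the latter is false as written: take \(s \transvia{\alpha \mid p} s'\) with \(s' \transOut{\beta_1} \retc\) and \(s' \transOut{\beta_2} \retc\) for \(\beta_1 \neq \beta_2\); then \(\alpha p \beta_1\) and \(\alpha p \beta_2\) are two distinct traces in \(\sem{s}\) starting with \(\alpha\). Your own observation that ``the continuations may begin with different atoms'' is exactly why the stated claim fails, not merely why the naive induction is too weak --- so your ``fix'' quietly replaces the theorem rather than strengthening the induction hypothesis for it. What you have proved is the correct determinacy property (uniqueness per atom sequence; equivalently, two traces sharing an atom-prefix agree on whether they terminate there and on the next action), and that is also all the paper uses downstream, e.g.\ the mutual exclusivity of the cases in its invariance arguments: for a fixed \(\alpha\), exactly one of \(s \transOut{\alpha} \retc\), \(s \transvia{\alpha \mid p} s'\), \(s \transRej{\alpha}\) holds, so \(\alpha \in \sem{s}\) and \(\alpha p \cdot w \in \sem{s}\) cannot both occur. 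You should either restate the lemma in that prefix-sharing form or flag explicitly that the published phrasing is shorthand for it; as it stands, the last line of your argument (``therefore at most one trace starts with \(\alpha\)'') does not follow from your invariant.
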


\begin{remark}
  The general construction outlined by~\citet{bonchi_CoinductionUptoFibrational_2014a} does not easily apply in this situation, because (1) we work with GKAT automata, instead of a bialgebra; (2) as we have mentioned in the beginning of this sub-section, it is hard to show that every invariant is a bisimulation.
  Finally, equivalence closure is compatible with progressions defined by~\citet{bonchi_CoinductionUptoFibrational_2014a}, yet we have shown in~\Cref{exp:trans-clos-incompatible}, transitive closure, and thus equivalence closure, is not compatible with our progression \(\transCorr{\algIsDead}\).
\end{remark}

We will then prove several core theorems that powers our correctness proof. 
Specifically, we show that \textbf{trace equivalence is the largest invariant of \(\transCorr{\algIsDead}\)} and \textbf{transitive closure \(\transClo\)is sound\footnote{named ``correct'' by~\citet{pous_CompleteLatticesUpTo_2007}, we call it ``sound'' to avoid confusion with the ``correctness'' of our algorithm} with respect to \(\transCorr{\algIsDead}\)}.
Interestingly, both of these results can be derived by the following lemma.

\begin{lemma}\label{thm:trace-equiv-trans-clo-invariant-core-lemma}
  Consider two relation \(R, R' \subseteq S \times U\) on two \gkat automata \(S\) and \(U\), if for all traces \(w\):
  \[
    \forall (s, u) \in R, w \in \sem{s} \text{ iff } w \in \sem{u}
    \qquad\text{implies}\qquad
    \forall (s, u) \in R', w \in \sem{s} \text{ iff } w \in \sem{u}
  \]
  then \(R \transCorr{\algIsDead} R'\) and \((s, u) \in R\) implies \(\sem{s} = \sem{u}\).
\end{lemma}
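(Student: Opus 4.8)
The plan is to prove the stronger statement that $P_R(w) \triangleq \bigl(\forall (s,u) \in R,\ w \in \sem{s} \iff w \in \sem{u}\bigr)$ holds for \emph{every} finite trace $w$, by strong induction on the length of $w$. Quantifying $P_R(w)$ over all $w$ is exactly the desired conclusion $\sem{s} = \sem{u}$ for all $(s,u) \in R$. The crucial observation is that the lemma's hypothesis lets me convert trace-preservation along $R$ into trace-preservation along $R'$: assuming $P_R(w')$ for every $w'$ strictly shorter than $w$ (the strong induction hypothesis), the hypothesis yields $P_{R'}(w')$ for all those same shorter traces. This $R'$-preservation of shorter traces is precisely what I will feed into condition (5) of $R \transCorr{\algIsDead} R'$.

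For the base case, a trace of length one is a single atom $\alpha$, and $\alpha \in \sem{s} \iff s \transOut{\alpha} \retc$; condition (1) of \Cref{def:alg-progression} gives $s \transOut{\alpha} \retc \iff u \transOut{\alpha} \retc$, so $P_R(\alpha)$ holds. For the inductive step write $w = \alpha p \cdot w'$ with $w'$ shorter. Fix $(s,u) \in R$ and suppose $w \in \sem{s}$; by \Cref{def:trace-sem} together with determinism of \gkat automata this means $s \transvia{\alpha \mid p} s'$ with $w' \in \sem{s'}$, so $s'$ is live and hence $\neg\algIsDead(s')$ by \Cref{thm:live-iff-have-trace}. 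I then case-split on the behaviour of $u$ at $\alpha$: the case $u \transOut{\alpha} \retc$ is excluded by condition (1), since $s$ transitions rather than accepts; the case $u \transRej{\alpha}$ is excluded by condition (2), which would force $\algIsDead(s')$; and the case $u \transvia{\alpha \mid q} u'$ with $q \neq p$ is excluded by condition (4), which again forces $\algIsDead(s')$. The only surviving case is $u \transvia{\alpha \mid p} u'$, whereupon condition (5) gives $(s',u') \in R'$; since $P_{R'}(w')$ holds, $w' \in \sem{s'}$ transfers to $w' \in \sem{u'}$, and therefore $w \in \sem{u}$. The converse direction is entirely symmetric, using conditions (1), (3), (4), (5) in place of (1), (2), (4), (5).

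The main obstacle, and the reason the lemma is stated in its indirect implicational form, is that the progression $R \transCorr{\algIsDead} R'$ only ever relates a transitioned pair $(s',u')$ back to $R'$, never to $R$, so a naive coinductive argument on $R$ alone does not close. Routing through the hypothesis at the level of shorter traces is what bridges this gap: it is exactly the mechanism that licenses condition (5)'s conclusion $(s',u') \in R'$ while still guaranteeing $w'$-membership is preserved across that pair. The remaining care is bookkeeping: confirming that a transitioning state cannot simultaneously accept or reject on the same atom (determinism of the transition function), and invoking \Cref{thm:live-iff-have-trace} to turn ``$s'$ has the trace $w'$'' into ``$\neg\algIsDead(s')$'' so that the progression's dead-state side conditions collapse the unwanted cases; \Cref{thm:trace-deter} additionally ensures the decomposition $w = \alpha p \cdot w'$ picks out a unique continuation, keeping the induction well-defined.
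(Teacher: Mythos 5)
Your proposal is correct and follows essentially the same route as the paper's proof: induction on the length of the trace, using condition (1) for the single-atom base case, then in the inductive step using \Cref{thm:live-iff-have-trace} to establish liveness of $s'$, eliminating the accepting/rejecting/mismatched-action cases via conditions (1), (2)/(3), and (4), and finally routing the induction hypothesis $P_R(w')$ through the lemma's implicational premise to obtain $P_{R'}(w')$ so that condition (5) closes the argument. The only cosmetic differences are your explicit appeal to strong induction and to \Cref{thm:trace-deter}, neither of which changes the substance.
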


\begin{proof}
  We will show that for any pair of states \((s, u) \in R\) and any trace \(w\), \(w \in \sem{s}\) if and only if \(w \in \sem{u}\), by induction on the length of \(w\).
  \begin{itemize}
    \item 
    If \(w\) is a single atom \(\alpha\), then by the first condition, we got the following chain of implications:
    \[\alpha \in \sem{s} 
    \quad\text{iff}\quad s \transOut{\alpha} \retc 
    \quad\text{iff}\quad u \transOut{\alpha} \retc
    \quad\text{iff}\quad \alpha \in \sem{u}.\]
    \item
    If \(w\) is \(\alpha p \cdot w' \in \sem{s}\), then we know that \(s \transvia{\alpha \mid p} s'\) for some \(p \in \Sigma\), \(s' \in S\) and \(w' \in \sem{s'}\).
    By~\Cref{thm:live-iff-have-trace}, \(s'\) is a live state.
    By induction hypothesis, for all \((s', u') \in R\), \(w' \in \sem{s'}\) iff \(w' \in \sem{u'}\).
    Finally, by the implication in the premise, for all \((s', u') \in R'\), \(w' \in \sem{s'}\) iff \(w' \in \sem{u'}\).

    We show \(\alpha p \cdot w \in \sem{u}\) by case analysis on the transition of \(u\) with atom \(\alpha\): 
    \begin{itemize}
      \item if \(u \transOut{\alpha} \retc\), then by the condition 1: \(s \transOut{\alpha} \retc\), violating \(s \transvia{\alpha \mid p} s'\);   
      \item if \(u \transRej{\alpha}\), then by the condition 2, \(s'\) is dead, violating our conclusion that \(s'\) is live;
      \item if \(u \transvia{\alpha \mid q} u'\) and \(p \neq q\), then by condition 4, \(s'\) is dead; again violating our conclusion that \(s'\) is live.
    \end{itemize}
    Thus, the only possible transition \(u\) can take under \(\alpha\) is \(u \transvia{\alpha \mid p} u'\).
    By condition 5, \((s', u') \in R'\); then because \(w' \in \sem{s'}\) iff \(w' \in \sem{u'}\), concluding that \(\alpha p \cdot w \in \sem{u}\).
    The fact that \(\alpha p \cdot w \in \sem{u}\) implies \(\alpha p \cdot w \in \sem{s}\) can be shown by symmetry.
  \end{itemize}
\end{proof}

\begin{lemma}[Invariant Implies Traces Equivalence]\label{thm:invariant-perserve-trace}
  If \(R \transCorr{\algIsDead} R\) then every pair of states in \(R\) are trace-equivalent: for all \((s, u) \in R, \sem{s} = \sem{u}\).
\end{lemma}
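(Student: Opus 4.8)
The plan is to obtain this lemma as an immediate specialization of the core lemma (\Cref{thm:trace-equiv-trans-clo-invariant-core-lemma}), instantiated at the diagonal, i.e.\ taking $R' = R$. The core lemma was deliberately phrased for a \emph{pair} of relations $R, R'$ precisely so that it can be reused in more than one situation; the ``soundness of transitive closure'' argument uses it with $R \neq R'$, whereas the present ``$R$ is an invariant'' statement is exactly the case $R' = R$. So rather than redo any induction, I would simply quote the core lemma at this instance.

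First I would set $R' = R$ and discharge the two hypotheses of \Cref{thm:trace-equiv-trans-clo-invariant-core-lemma}. The progression hypothesis $R \transCorr{\algIsDead} R'$ becomes $R \transCorr{\algIsDead} R$, which is literally the assumption that $R$ is an invariant, so it holds by hypothesis. The trace-transfer hypothesis becomes, for every trace $w$,
\[
  \bigl(\forall (s,u) \in R,\ w \in \sem{s} \text{ iff } w \in \sem{u}\bigr)
  \implies
  \bigl(\forall (s,u) \in R,\ w \in \sem{s} \text{ iff } w \in \sem{u}\bigr),
\]
which is the tautology $P \implies P$ and therefore holds trivially. With both premises satisfied, the conclusion of the core lemma is that $(s,u) \in R$ implies $\sem{s} = \sem{u}$, which is precisely the statement to be proved.

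Consequently I do not expect a genuine obstacle here: all the real work is already internal to \Cref{thm:trace-equiv-trans-clo-invariant-core-lemma} — the induction on the length of $w$, the case analysis on the transition of $u$ under each atom $\alpha$ (rejecting, accepting, or a differently-labelled action), and the appeals to \Cref{thm:live-iff-have-trace} (a state is live iff it has a trace) and \Cref{thm:trace-deter} (trace determinacy) to discard the branches that would contradict $s'$ being live. The only point requiring care is conceptual rather than technical: one must notice that the core lemma's independence of $R$ and $R'$ is exactly what licenses the diagonal instantiation $R' = R$ and collapses its trace-transfer premise to a tautology. Had the core lemma been stated for a single relation only, this clean packaging would not have been available and the inductive argument would have to be repeated in place.
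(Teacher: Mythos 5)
Your proposal is correct and matches the paper's own proof exactly: the paper likewise obtains this lemma by instantiating \Cref{thm:trace-equiv-trans-clo-invariant-core-lemma} at $R' = R$, observing that the trace-transfer premise collapses to the tautology $P \implies P$ and that the progression premise is the invariant hypothesis itself. (Your aside that the core lemma's internal induction appeals to \Cref{thm:trace-deter} is a slight overstatement---it only needs \Cref{thm:live-iff-have-trace}---but that does not affect the argument here.)
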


\begin{proof}
  We first note the implication in~\Cref{thm:trace-equiv-trans-clo-invariant-core-lemma} holds.
  Indeed,
   \[
    \forall (s, u) \in R, w \in \sem{s} \text{ iff } w \in \sem{u}
    \qquad\text{implies}\qquad
    \forall (s, u) \in R, w \in \sem{s} \text{ iff } w \in \sem{u}
  \]
  Then by~\Cref{thm:trace-equiv-trans-clo-invariant-core-lemma}, we conclude that \(\sem{s} = \sem{u}\)
\end{proof}






\begin{lemma}[Trace Equivalence is an Invariant]\label{thm:trace-equiv-is-invariant}
  Let \(R_\equiv\) be the trace equivalence relation:
  \[R_{\equiv} \triangleq \{(s, u) \mid (s, u) \in S \times U, \sem{s} = \sem{u}\}\ , 
  \qquad\text{then}\qquad
  R_{\equiv} \transCorr{\algIsDead} R_{\equiv}.\]
\end{lemma}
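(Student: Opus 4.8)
The plan is to verify the five clauses of \Cref{def:alg-progression} (instantiated with $P = \algIsDead$) directly for the relation $R_\equiv$, rather than routing through the core \Cref{thm:trace-equiv-trans-clo-invariant-core-lemma}, which runs in the opposite direction (from progression to trace equivalence). So I would fix a pair $(s,u) \in R_\equiv$, that is $\sem{s} = \sem{u}$, together with an arbitrary atom $\alpha \in \At$, and establish conditions (1)--(5) using only the trace semantics of \Cref{def:trace-sem} and the determinism of \gkat automata.

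For condition (1), I would observe that the single-atom string $\alpha$ lies in $\sem{s}$ precisely when $s \transOut{\alpha} \retc$, since the acceptance rule is the only trace-formation rule that can produce a string consisting of one atom (a transition always yields a string of the shape $\alpha p \cdot w$, which is strictly longer). Hence $s \transOut{\alpha} \retc \iff \alpha \in \sem{s} = \sem{u} \iff u \transOut{\alpha} \retc$, which is exactly clause (1).

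The substantive clauses are (2)--(4), whose conclusions are deadness assertions $\algIsDead(-)$; here I would argue by contradiction using \Cref{thm:live-iff-have-trace}. In clause (2), suppose $s \transvia{\alpha \mid p} s'$, $u \transRej{\alpha}$, and, toward a contradiction, that $s'$ is live. Then $s'$ carries some trace $w' \in \sem{s'}$, so $\alpha p \cdot w' \in \sem{s} = \sem{u}$; but $u \transRej{\alpha}$ means $u$ neither accepts nor transitions on $\alpha$, so no trace of $\sem{u}$ begins with $\alpha$ at all, a contradiction. Thus $s'$ is dead. Clause (3) is symmetric. For clause (4), with $s \transvia{\alpha \mid p} s'$, $u \transvia{\alpha \mid q} u'$ and $p \neq q$, the same liveness assumption on $s'$ would force $\alpha p \cdot w' \in \sem{u}$; yet determinism of $u$ on $\alpha$ makes every trace of $\sem{u}$ starting with $\alpha$ have second symbol $q \neq p$, so $\alpha p \cdot w'$ cannot be one of them---a contradiction---giving $\algIsDead(s')$, and symmetrically $\algIsDead(u')$.

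Finally, for clause (5), with $s \transvia{\alpha \mid p} s'$ and $u \transvia{\alpha \mid p} u'$, I would peel off the common prefix $\alpha p$: by determinism, $\alpha p \cdot w' \in \sem{s}$ holds iff $w' \in \sem{s'}$, and likewise $\alpha p \cdot w' \in \sem{u}$ holds iff $w' \in \sem{u'}$; combining these with $\sem{s} = \sem{u}$ yields $\sem{s'} = \sem{u'}$, i.e.\ $(s',u') \in R_\equiv$, as required. The main obstacle is not a deep one but a point of care: every step hinges on translating between the automaton's one-step transition behaviour and its trace set, which relies on determinism of the \gkat transition function (so $u$'s action on $\alpha$ is uniquely pinned down) and on \Cref{thm:live-iff-have-trace} to convert ``has no trace with the offending prefix'' into ``is dead''. \Cref{thm:trace-deter} can streamline clauses (4) and (5), but bare determinism of the transition function already suffices.
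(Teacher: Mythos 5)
Your proof is correct and takes essentially the same route as the paper's: both arguments verify the progression conditions directly from \(\sem{s}=\sem{u}\), using determinism of the transition function and the fact that a state is dead iff it has no trace (\Cref{thm:live-iff-have-trace}). The only difference is organizational --- you proceed clause by clause while the paper case-splits on whether a trace beginning with \(\alpha\) exists and what form it takes --- which does not change the substance of the argument.
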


\begin{proof}
  To show all the conditions for \(R_{\equiv} \transCorr{\algIsDead} R_{\equiv}\) are satisfied, we take any pair of trace equivalence states \((s, u) \in R\) and atom \(\alpha \in \At\), we case analysis whether there exists a trace starting with \(\alpha\) in \(\sem{s}\) and \(\sem{u}\):
  \begin{itemize}
    \item If there is no trace that starts with \(\alpha\) in both \(\sem{s}\) and \(\sem{u}\), then \(s\) and \(u\) either reject \(\alpha\) or transition to a dead state via \(\alpha\):
    \begin{itemize}
      \item if both rejects, then it satisfies all the condition;
      \item if one reject and the other transitions to a dead state, then it also satisfies all the conditions; in particular, this case is mentioned in conditions 2 and 3.
      \item if both transition to dead states \(s'\) and \(u'\), then because \(s'\) and \(u'\) are dead, hence have no trace (\Cref{thm:live-iff-have-trace}), thus trace equivalent \((s', u') \in R_{\equiv}\).
    \end{itemize}
    \item If there is a trace \(\alpha\) in \(\sem{s}\) and \(\sem{u}\), then by definition \(s \transOut{\alpha} \retc\) and \(u \transOut{\alpha} \retc\), satisfying all the five conditions.
    \item If there is a trace \(\alpha p \cdot w\) in \(\sem{s}\) and \(\sem{u}\), then there exists states \(s'\) and \(u'\):
    \[s \transvia{\alpha \mid p} s' \qquad\text{and}\qquad
    u \transvia{\alpha \mid p} u'\ .\]
    By definition of the trace semantics, we can obtain the following equivalences:
    \[w \in \sem{s'} \iff \alpha p \cdot w \in \sem{s} 
    \quad\text{and}\quad
    w \in \sem{u'} \iff \alpha p \cdot w \in \sem{u}\ .\]
    Finally, we show that \(\sem{s} = \sem{u}\), which conclude that this case also satsifies all the conditions in~\Cref{def:alg-progression}:
    \begin{align*}
      w \in \sem{s'} 
      \iff \alpha p \cdot w \in \sem{s}
      \iff \alpha p \cdot w \in \sem{u}
      \iff \alpha w \in \sem{u}
    \end{align*}
  \end{itemize}
  Notice that these are the only cases we need to worry about since \((s, u) \in R_{\equiv}\), implying \(\sem{s} = \sem{u}\).
  Thus, we reach our desired conclusion: \(R_{\equiv} \transCorr{\algIsDead} R_{\equiv}\)

\end{proof}

When combining~\Cref{thm:trace-equiv-is-invariant,thm:invariant-perserve-trace}, we obtain that trace equivalence is indeed the maximal invariant that we are looking for.

\begin{theorem}[Trace Equivalence is The Maximal Invariant]\label{thm:trace-equiv-is-max-invari}
  Given two automata over state sets \(S\) and \(U\), then the trace equivalence relation \(R_{\equiv}\):
  \[R_{\equiv} \triangleq \{(s, u) \mid (s, u) \in S \times U, \sem{s} = \sem{u}\},\]
  is the maximal invariant, i.e. \(R_\equiv \transCorr{\algIsDead} R_\equiv\) and for all \(R \subseteq S \times U\), \(R \transCorr{\algIsDead} R \implies R \subseteq R_{\equiv}\).
\end{theorem}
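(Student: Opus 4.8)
The statement decomposes into exactly two claims: that $R_\equiv$ is itself an invariant ($R_\equiv \transCorr{\algIsDead} R_\equiv$), and that it absorbs every other invariant ($R \transCorr{\algIsDead} R \implies R \subseteq R_\equiv$). Each of these is already isolated as a preceding lemma, so the plan is simply to stitch them together rather than to redo any trace-level reasoning.

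For the first claim, $R_\equiv \transCorr{\algIsDead} R_\equiv$ is verbatim the conclusion of \Cref{thm:trace-equiv-is-invariant}, so I would cite it directly with no further work. For the maximality claim, I would fix an arbitrary $R \subseteq S \times U$ satisfying $R \transCorr{\algIsDead} R$; this is precisely the hypothesis ``$R$ is an invariant'' of \Cref{thm:invariant-perserve-trace}. That lemma yields $\sem{s} = \sem{u}$ for every $(s,u) \in R$, which is exactly the defining membership condition of $R_\equiv$. Hence $(s,u) \in R_\equiv$ for every such pair, i.e.\ $R \subseteq R_\equiv$, as required. Combining the two claims gives that $R_\equiv$ is an invariant containing every invariant, i.e.\ the maximal one.

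The genuine difficulty is therefore \emph{not} in this final assembly, which is routine, but has already been discharged inside the two cited lemmas and ultimately in the core \Cref{thm:trace-equiv-trans-clo-invariant-core-lemma}. If I were proving the whole chain from scratch, the hard part would be establishing $R \transCorr{\algIsDead} R \implies \sem{s} = \sem{u}$ by induction on trace length: the subtle cases are when $u$ rejects $\alpha$ or transitions on a different action than $s$, where I would use that the length-one prefix extends to a genuine trace (hence $s'$ is \emph{live} by \Cref{thm:live-iff-have-trace}) to contradict the progression conditions, which force $\algIsDead$ precisely in those mismatch cases. This trace-based argument is exactly what lets one sidestep the fact that an invariant of $\transCorr{\algIsDead}$ need not be a bisimulation on the normalized automaton — the obstruction flagged in the surrounding discussion. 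Given those lemmas as available, however, the present theorem is an immediate corollary.
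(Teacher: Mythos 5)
Your proof is correct and matches the paper's own argument exactly: the paper likewise cites \Cref{thm:trace-equiv-is-invariant} for $R_\equiv \transCorr{\algIsDead} R_\equiv$ and applies \Cref{thm:invariant-perserve-trace} to an arbitrary invariant $R$ to conclude $R \subseteq R_\equiv$. Your side remarks about where the real work lives (the trace-length induction in the core lemma and the liveness argument via \Cref{thm:live-iff-have-trace}) are also accurate.
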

 
\begin{proof}
  \(R_{\equiv}\) is an invariant by~\Cref{thm:trace-equiv-is-invariant}
  Then we prove that \(R_{\equiv}\) is maximal. Take any invariant \(R \transCorr{\algIsDead} R\), by~\Cref{thm:invariant-perserve-trace}, every pair \((s, u) \in R\) is trace equivalent: \(\sem{s} = \sem{u}\). 
  Therefore, \(R\) is contained with in \(R_{\equiv}\): \(R \subseteq R_{\equiv}\)
\end{proof}

It is important that our property of interest, trace equivalence is the maximal invariant, as it allows us to apply the up-to technique~\cite{pous_CompleteLatticesUpTo_2007}.
In particular, we will require the up-to technique with equivalence closure \(e\) to prove our algorithm being sound, since we utilize a union-find object in the algorithm to keep trace of the explored states.
This optimization enlarges the resulting states of a progression by the equivalence closure.

Unfortunately, transitive closure \(\transClo\) and thus equivalence closure \(e\) is not compatible with \(\transCorr{\algIsDead}\).
We will show a weaker notion next, that is \(\transClo\) is ``sound'' (called ``correct'' by~\citet{pous_CompleteLatticesUpTo_2007}): although a sound function is safe to use in an up-to proof, but unlike compatible function, composition of sound function is not necessary still sound. 
However, post-composing a sound function with a compatible function, still yields a sound Function~\cite[Theorem 1.7]{pous_CompleteLatticesUpTo_2007}. 
This result is enough for us to derive the soundness of the equivalence closure \(e\) to be used in the up-to technique, which is crucial for the correctness of our algorithm.

\begin{theorem}
  The transitive closure \(\transClo\) is sound with respect to \(\transCorr{\algIsDead}\).
\end{theorem}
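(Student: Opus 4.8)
The plan is to reduce the statement directly to the core lemma \Cref{thm:trace-equiv-trans-clo-invariant-core-lemma}, which is engineered precisely to subsume both the invariant-implies-trace-equivalence result and the soundness of transitive closure. First I would unfold what soundness means in the sense of \citet{pous_CompleteLatticesUpTo_2007}: a function $f$ is sound with respect to $\transCorr{\algIsDead}$ when every relation $R$ with $R \transCorr{\algIsDead} f(R)$ is contained in the maximal invariant, which by \Cref{thm:trace-equiv-is-max-invari} is exactly the trace equivalence relation $R_\equiv$. Instantiating $f = \transClo$, the goal becomes: $R \transCorr{\algIsDead} \transClo(R)$ implies $R \subseteq R_\equiv$, i.e.\ $\sem{s} = \sem{u}$ for all $(s,u) \in R$. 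Since transitive closure is only meaningful on a single state set, I would phrase everything over the combined automaton obtained as the disjoint union of the two input automata.

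To invoke \Cref{thm:trace-equiv-trans-clo-invariant-core-lemma}, I would set $R' \triangleq \transClo(R)$, so that its conclusion reads precisely ``$R \transCorr{\algIsDead} \transClo(R)$ and $(s,u) \in R$ implies $\sem{s} = \sem{u}$'', which is the soundness statement. All that remains is to discharge the lemma's hypothesis: for every trace $w$,
\[
  \bigl[\forall (s,u)\in R,\; w\in\sem{s} \text{ iff } w\in\sem{u}\bigr]
  \implies
  \bigl[\forall (s,u)\in \transClo(R),\; w\in\sem{s} \text{ iff } w\in\sem{u}\bigr].
\]

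The heart of the argument is the observation that, for a \emph{fixed} trace $w$, ``agreeing on whether $w$ is accepted'' is itself an equivalence relation on states. Concretely I would define $s \approx_w s'$ to hold iff ($w \in \sem{s}$ iff $w \in \sem{s'}$); reflexivity, symmetry, and transitivity follow immediately from the corresponding properties of the biconditional. The hypothesis of the implication says exactly $R \subseteq {\approx_w}$, and since $\approx_w$ is transitive we get $\transClo(R) \subseteq \transClo({\approx_w}) = {\approx_w}$, which is precisely the conclusion. Feeding this back through the core lemma yields $\sem{s} = \sem{u}$ for all $(s,u) \in R$, establishing soundness.

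I expect no genuine obstacle in the calculation; the conceptual crux---and the reason this succeeds for $\transCorr{\algIsDead}$ despite transitive closure failing to be \emph{compatible} (\Cref{exp:trans-clos-incompatible})---is the decision to reason at the level of individual traces rather than at the level of the one-step progression. Per-trace acceptance agreement is trivially closed under transitive closure, whereas the liveness side-conditions in \Cref{def:alg-progression} (which entangle rejection with transitions into live states) are exactly what break one-step compatibility. The only points requiring care are the bookkeeping that keeps relations on a single combined state set so that $\transClo$ is well-defined, and confirming that the core lemma is applied with $R' = \transClo(R)$ and not some other relation.
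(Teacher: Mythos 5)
Your proposal is correct and follows essentially the same route as the paper: both reduce soundness to the implication ``$R \transCorr{\algIsDead} \transClo(R)$ implies $R \subseteq R_\equiv$'' via \Cref{thm:trace-equiv-trans-clo-invariant-core-lemma} instantiated at $R' = \transClo(R)$, and then discharge the lemma's per-trace hypothesis by closing agreement on a fixed trace under chains of $R$-related states. Your observation that $\approx_w$ is itself an equivalence relation is just a cleaner packaging of the paper's induction on the length of the chain $s \mathrel{R} m_1 \mathrel{R} \cdots \mathrel{R} m_n \mathrel{R} u$.
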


\begin{proof}
  We claim that it suffices to show that \(R \transCorr{\algIsDead} \transClo(R)\) implies that for all \((s, u) \in R\), \(\sem{s} = \sem{u}\).

  First, we show the above claim indeed implies soundness as defined by~\citet{pous_CompleteLatticesUpTo_2007}, which is \(\nu(m_{\transCorr{}} \circ \transClo) \subseteq \nu(m_{\transCorr{}})\), where 
  \[
    m_{\transCorr{}}(R') \triangleq \bigcup \{R \mid R \transCorr{\algIsDead} R'\}\ ; \qquad
    \nu(f) = \bigcup \{R \mid R \subseteq f(R) \}\ .
  \]
  We prove this fact mostly by unfolding the definitions.
  Recalling a theorem by~\citet{pous_CompleteLatticesUpTo_2007}, the following is true for all \(f\)
  \[R \subseteq m_{\transCorr{}} \circ f(R)  \qquad\iff\qquad R \transCorr{\algIsDead} f(R)\]
  which means \(\nu(m_{\transCorr{}} \circ \transClo)\) is the largest \(R\) s.t. \(R \transCorr{\algIsDead} \transClo(R)\); and \(\nu(m_{\transCorr{}})\) is the largest invariant \(R \transCorr{\algIsDead} R\), which is just trace equivalence \(R_{\equiv}\) by~\Cref{thm:trace-equiv-is-max-invari}.
  Hence, \(\nu(m_{\transCorr{}} \circ \transClo) \subseteq \nu(m_{\transCorr{}})\) simply means that the every \(R\) s.t. \(R \transCorr{\algIsDead} \transClo(R)\) is contained within \(R_{\equiv}\), i.e. for all \((s, u) \in R\), \(\sem{s} = \sem{u}\).

  Then to prove that \(R \transCorr{\algIsDead} \transClo(R)\) implies for all \((s, u) \in R\), \(\sem{s} = \sem{u}\), we only need to use~\Cref{thm:trace-equiv-trans-clo-invariant-core-lemma}, and show that fix any trace \(w\), \(\forall (s, u) \in R, w \in \sem{s} \iff w \in \sem{u}\) implies \(\forall (s, u) \in \transClo(R), w \in \sem{s} \iff \sem{u}\). 
  Recall that \((s, u) \in \transClo(R)\) if and only if there exists a sequence of \(m_i\) s.t.
  \[s \mathrel{R} m_1 \mathrel{R} m_2 \mathrel{R} \cdots \mathrel{R} m_n \mathrel{R} u\ .\]
  Thus, by induction on the length of above chain, we can show that \(w \in \sem{s}\) if and only if \(w \in \sem{u}\).
  Finally, by~\Cref{thm:trace-equiv-trans-clo-invariant-core-lemma}, \(R \transCorr{\algIsDead} \transClo(R)\) implies \((s, u) \in R\) implies \(\sem{s} = \sem{u}\), which then implies that \(\transClo\) is sound with respect to \(\transCorr{\algIsDead}\).
\end{proof}

\begin{corollary}[Soundness of Equivalence Closure]\label{thm:equiv-closure-sound-trace-prog}
  We recall that for any sound function \(f\), \(R \transCorr{\algIsDead} f(R)\) implies \(R\) is in the maximal invariant of \(\transCorr{\algIsDead}\), which is trace equivalence.
  Because \(\transClo\) is sound and post-composition of sound function with compatible function gives sound functions, thus \(e = \refClo \circ \symClo \circ \transClo\) is sound with respect to \(\transCorr{\algIsDead}\).
  In other word, 
  \[R \transCorr{\algIsDead} e(R) \qquad\implies\qquad
  \forall (s, u) \in R, \sem{s} = \sem{u}.\]
\end{corollary}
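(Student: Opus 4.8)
The plan is to assemble the corollary from two facts already in hand---the soundness of \(\transClo\) (the immediately preceding theorem) and the compatibility of \(\refClo\) and \(\symClo\) (established earlier)---glued together by the abstract composition principle of \citet{pous_CompleteLatticesUpTo_2007}. First I would recall precisely what ``sound'' unpacks to: a function \(f\) is sound with respect to \(\transCorr{\algIsDead}\) exactly when \(R \transCorr{\algIsDead} f(R)\) forces \(R\) into the maximal invariant of \(\transCorr{\algIsDead}\). By \Cref{thm:trace-equiv-is-max-invari}, that maximal invariant is the trace-equivalence relation \(R_\equiv = \{(s,u) \mid \sem{s} = \sem{u}\}\). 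Hence proving \(e\) sound is literally the same as proving the displayed implication \(R \transCorr{\algIsDead} e(R) \Rightarrow \forall (s,u) \in R,\ \sem{s} = \sem{u}\), so the whole task reduces to establishing soundness of \(e = \refClo \circ \symClo \circ \transClo\).

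The key difficulty---and the reason the corollary is not a one-line appeal to compatibility---is that \(\transClo\) is provably \emph{not} compatible with \(\transCorr{\algIsDead}\) (\Cref{exp:trans-clos-incompatible}), so I cannot simply compose three compatible functions. Instead I would proceed in two layers. In the outer layer I note that \(\refClo\) and \(\symClo\) are each compatible, and since compatible functions are closed under composition in the lattice framework of \citet{pous_CompleteLatticesUpTo_2007}, the composite \(\refClo \circ \symClo\) is again compatible. In the inner layer I take the already-proved soundness of \(\transClo\) as given.

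I would then invoke the composition principle of \citet{pous_CompleteLatticesUpTo_2007} (their Theorem 1.7): post-composing a sound function with a compatible function yields a sound function. Applying it with the sound function \(\transClo\) and the compatible function \(\refClo \circ \symClo\) gives that \(e = (\refClo \circ \symClo) \circ \transClo\) is sound. Finally I would close the loop by re-unfolding the definition of soundness for \(e\): soundness gives \(R \transCorr{\algIsDead} e(R) \Rightarrow R \subseteq R_\equiv\), which is precisely \(\forall (s,u) \in R,\ \sem{s} = \sem{u}\).

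I expect the main obstacle to be purely one of careful bookkeeping around the abstract lemma rather than any genuine calculation. One must make sure the associativity of composition lines up so that \(e\) really is presented as ``(compatible) after (sound)'' and not as some composite requiring the unavailable compatibility of \(\transClo\); and one must double-check that soundness (unlike compatibility) is \emph{not} in general preserved under arbitrary composition, so the argument genuinely depends on the asymmetric ``sound post-composed with compatible'' form of the principle. The only mathematical content beyond citing prior results is the standard closure of compatible maps under composition, which justifies treating \(\refClo \circ \symClo\) as a single compatible map.
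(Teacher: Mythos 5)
Your proposal is correct and follows essentially the same route as the paper: it combines the previously established soundness of \(\transClo\) with the compatibility of \(\refClo\) and \(\symClo\) via the ``sound post-composed with compatible is sound'' principle of \citet{pous_CompleteLatticesUpTo_2007}, then unfolds soundness using the fact that the maximal invariant is trace equivalence (\Cref{thm:trace-equiv-is-max-invari}). Your explicit attention to the ordering of the composition (so that the argument never needs the unavailable compatibility of \(\transClo\)) matches the paper's intent exactly.
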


\begin{remark}\label{rem:coprod-aut}
  To be completely formal, we will need to reason in a different automaton.
  For example, let \(R_{\equiv}'\) be the equivalence relation over \(A_S\) and \(A_U\) with state sets \(S\) and \(U\) respectively, whereas \(e(R)\) is a relation over the disjoint union \(S + U\) and itself.
  In this case, \(R_{\equiv}'\) is not necessarily contained in \(e(R)\) even when \(R\) is an invariant. 
  However, we can parallel compose (a.k.a. coproduct) two automata over state sets \(S\) and \(U\), this yields an automaton over state set \(S + U\).
  The state in the coproduct automata will transition according to the original automata:
  \[\zeta(s, \alpha) \triangleq \begin{cases}
    \zeta_S(s, \alpha) & s \in S \\  
    \zeta_U(u, \alpha) & s \in U
  \end{cases},\]
  where \(\zeta_S\) and \(\zeta_U\) are the transition function of the original automata; therefore the trace of each state also coincides with the original trace.
  Then the trace equivalence relation \(R_{\equiv}\) over the coproduct automata and itself is indeed the greatest invariant of \(\transCorr{\algIsDead}\), hence contains the equivalence closure of any invariant \(R\) over the coproduct automata and \(e(R)\).
\end{remark}

Finally, we have all the tools to prove that the algorithm is sound and complete for trace equivalence, which is the second part of~\Cref{thm:non-symb-corr}

\begin{theorem}[Correctness of Finite-Trace Equivalence]\label{thm:fin-trace-corr}
  \(\algEquiv(s, u)\) will always terminate if the input automata are finite, and \(\algEquiv(s, u)\) will return true if and only if \(s\) and \(u\) are bisimilar.
\end{theorem}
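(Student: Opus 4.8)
The plan is to adapt the structure of the infinite-trace argument (\Cref{thm:inf-trace-corr}) to the finite-trace setting, replacing the progression \(\transCorr{\false}\) everywhere by \(\transCorr{\algIsDead}\), so that the goal becomes: \(\algEquiv(s,u)\) returns true iff \(\sem{s} = \sem{u}\). The one essential difference is that for \(\transCorr{\false}\) the equivalence closure \(e\) is \emph{compatible}, whereas for \(\transCorr{\algIsDead}\) the transitive (hence equivalence) closure fails to be compatible (\Cref{exp:trans-clos-incompatible}); I would therefore route the soundness half through the weaker \emph{soundness} of \(e\) recorded in \Cref{thm:equiv-closure-sound-trace-prog}, rather than through compatibility. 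As before, let \(R \subseteq S \times U\) denote the set of all pairs ever passed to \(\algUnion\), and recall that the union-find object realises the equivalence closure, \(e(R) = \{(s,u) \mid \algRep(s) = \algRep(u)\}\).

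For termination I would argue exactly as in \Cref{thm:inf-trace-corr}: since \(S\) and \(U\) are finite so is \(S \times U\), and whenever \(\algEquiv\) recurses on a pair \((s,u) \notin e(R)\) it first calls \(\algUnion(s,u)\) on line 3, placing \((s,u)\) into \(R \subseteq e(R)\) before the recursive call on line 13; thus the finite set \((S \times U) \setminus e(R)\) strictly shrinks down each recursion branch. The only new ingredient is that the \(\algIsDead\) calls must also terminate, which holds because each runs a depth-first search over a finite automaton and caches every state it visits, so across the whole run it inspects each state at most once.

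For soundness (``true'' implies \(\sem{s} = \sem{u}\)) I would carry the same inductive invariant as the infinite-trace case: writing \(R_0\) for the value of \(R\) on entry, a true return establishes \((s,u) \in R\) and \((R \setminus R_0) \transCorr{\algIsDead} e(R)\). The checks on lines 8--13 are by construction the five clauses of \Cref{def:alg-progression} with \(P = \algIsDead\); the point to verify is that whenever a clause discharges its conclusion via \(\algIsDead(s')\) returning true, the state \(s'\) really is dead, so the \emph{semantic} progression condition holds and not merely the algorithm's local test. The shortcut on lines 4--7 is handled separately: if \(\algKnownDead(s)\) holds then \(s\) is dead, and the returned \(\algIsDead(u)\) certifies \(u\) dead as well, whence \(\sem{s} = \sem{u} = \emptyset\) by \Cref{thm:live-iff-have-trace} and every progression clause is vacuous. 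Starting \(\algEquiv\) from a fresh union-find makes \(R_0 = \emptyset\), so the invariant yields \(R \transCorr{\algIsDead} e(R)\); \Cref{thm:equiv-closure-sound-trace-prog} then gives \(\sem{s'} = \sem{u'}\) for every \((s',u') \in R\), and in particular for the input pair.

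For completeness I would prove the contrapositive: if \(\algEquiv(s,u)\) returns false then \(\sem{s} \neq \sem{u}\). Since trace equivalence \(R_\equiv\) is an invariant of \(\transCorr{\algIsDead}\) (\Cref{thm:trace-equiv-is-invariant,thm:trace-equiv-is-max-invari}), it satisfies all five clauses of \Cref{def:alg-progression}; a false return means some clause fails for some atom \(\alpha\), and matching that failure against the corresponding clause of \(R_\equiv\)---using \Cref{thm:live-iff-have-trace} to convert liveness into the existence of a trace and \Cref{thm:trace-deter} to pin down the unique continuation of \(\alpha\)---forces \((s,u) \notin R_\equiv\). The recursive failure on line 13 is closed under the induction hypothesis \(\algEquiv(s',u') = \text{false} \Rightarrow \sem{s'} \neq \sem{u'}\), lifted back through the clause \(w \in \sem{s'} \iff \alpha p \cdot w \in \sem{s}\) of \Cref{def:trace-sem}. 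I expect the main obstacle to be precisely the soundness step: because \(e\) is not compatible with \(\transCorr{\algIsDead}\), the usual up-to-equivalence reasoning is unavailable, and the invariant \((R \setminus R_0) \transCorr{\algIsDead} e(R)\) must be threaded carefully through the accumulating union-find while checking that each \(\algIsDead\) certificate genuinely discharges conditions 2--4 of the progression rather than merely the algorithm's local tests.
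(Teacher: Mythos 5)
Your proposal is correct and follows essentially the same route as the paper's proof: the same termination metric \((S \times U) \setminus e(R)\), the same inductive invariant \((s,u) \in R\) and \((R \setminus R_0) \transCorr{\algIsDead} e(R)\) with the dead--dead shortcut handled as a separate disjunct, soundness discharged via the \emph{soundness} (rather than compatibility) of the equivalence closure from \Cref{thm:equiv-closure-sound-trace-prog}, and completeness via the maximality of the trace-equivalence invariant (\Cref{thm:trace-equiv-is-max-invari}). Note only that the theorem's target is trace equivalence \(\sem{s} = \sem{u}\) (as you correctly assume), not bisimilarity as the statement's wording suggests.
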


\begin{proof}
  We split the proof into termination and partial correctness. 
 
  \textbf{First, we show the termination of the algorithm.}
  Recall that \(\algIsDead\) is a DFS on the finite state sets; and \(\algKnownDead\) are checking inclusion of the cached dead states, therefore both functions always terminate.
  We can construct the same termination metric as the proof of~\Cref{thm:inf-trace-corr}, \((S \times U) \setminus e(R)\).
  Indeed, by observing the algorithm in~\Cref{alg:nonsymb-bisim}, we realize that if \(s, u \notin e(R)\) then on line 3, we will send them into the function \(\algUnion{} \).
  After the union \((s, u) \in R \subseteq e(R)\), which means that \((S \times U) \setminus e(R)\) will decrease before the recursive call on line 13.

  \textbf{We then proceed with the partial correctness proof.}
  First, we recall that the notation \(R \transCorr{\algIsDead} R'\) for two relation \(R, R' \in S \times U\) when \(R\) transition to \(R'\) following all the conditions in~\Cref{alg:nonsymb-bisim} (see the formal definition of \emph{progression} in~\Cref{def:alg-progression}).

  Like the proof of~\Cref{thm:inf-trace-corr}, we split the inductive invariants into two cases for readability.
  And we also use the relation \(R\) to denote all the inputs that have been sent into the \(\algUnion{}\) function. 
  Let \(e(R)\) be the least equivalence generated by \(R\) on the disjoint union of \(S\) and \(U\). 
  And we recall that this least equivalence coincide with the equivalence of representatives in the union-find object,
  \[e(R) = \{(s, u) \mid \algRep(s) = \algRep(u)\}.\]

  First, we focus on the case where the output \(\algEquiv(s, u) = \true\). 
  Unlike the previous proofs, we will need to take into account of the optimization performed on lines 4-7.
  We assert the following invariant for the recursion: given the precondition \(R_0 = R\), where \(R_0\) is an auxiliary variable to keep track of the starting \(R\), then the following post-condition can be derived:
  \[\algIsDead(s) \text{ and } \algIsDead(u)
  \quad\text{or}\quad
  (s, u) \in R \text{ and } (R \setminus R_0) \transCorr{\algIsDead} e(R).\]
  Indeed, if either \(\algKnownDead(s)\) or \(\algKnownDead(u)\) is true, then it means both \(s\) and \(u\) are dead, reaching the condition \(\algIsDead(s) \text{ and } \algIsDead(u)\).
  If neither of these are true, we let \(R_1 \triangleq R_0 \cup \{(s, u)\}\); we construct the following invariant for the for-all loop on line 8-13:
  \[(s, u) \in R \text{ and }
  (R \setminus R_1) \transCorr{\algIsDead} e(R) \text{ and } 
  \forall \alpha \in \Lambda, \{(s, u)\} \transCorr{\algIsDead}_\alpha e(R),\]
  where \(\Lambda\) is all the explored atoms.
  Then at the end of the loop \(\Lambda\) will just be \(\At\), therefore, we derive 
  \[(s, u) \in R \text{ and }
  (R \setminus R_1) \transCorr{\algIsDead} e(R) \text{ and }
  \{(s, u)\} \transCorr{\algIsDead} e(R),\]
  which gives us the desired post-condition \(R \transCorr{\algIsDead} e(R)\) and \((s, u) \in R\), because \(R \setminus R_0 = R \setminus R_1 \cup \{(s, u)\}\).

  Then for the case where \(\algEquiv(s, u) = \false\), we prove the invariant that under the precondition true, we can derive the post-condition \(\sem{s} \neq \sem{u}\).
  This result is a direct consequence of~\Cref{thm:trace-equiv-is-invariant}: if \(\algEquiv(s, u) = \false\), then one of the following will be true:
  \begin{itemize}
    \item One of \(s\) and \(u\) is dead, but the other is not. Then by~\Cref{thm:live-iff-have-trace}, one of them will have trace, whereas the other don't, hence they are clearly not trace equivalent.
    \item there exists \(\alpha\) s.t. \((s, u) \not\transCorr{\algIsDead}_\alpha R_{\equiv}\), then by~\Cref{thm:trace-equiv-is-invariant}, \(s\) and \(u\) are not trace equivalent.
  \end{itemize}
\end{proof}

Finally, combining the proof of~\Cref{thm:inf-trace-corr} and~\Cref{thm:fin-trace-corr}, we obtain the proof of the correctness for the non-symbolic algorithm.

\begin{proof}[Proof of~\Cref{thm:non-symb-corr}]
  Each item is proven by~\Cref{thm:inf-trace-corr} and~\Cref{thm:fin-trace-corr} respectively.
\end{proof}

\section{Correctness of the Symbolic Algorithm}\label[apx]{sec:symb-corr-proof}

To prove the correctness of the symbolic notation, we first set some notation for \emph{symbolic progression}, which is characterizes the checks performed on the symbolic algorithm in~\Cref{alg:symb-bisim}.

\begin{definition}[Notation for Symbolic Progression]
  Given two symbolic GKAT automata over state space \(S\) and \(U\) respectively, we write \(R \transCorrSymb{P} R'\) for a predicate \(P\) on the disjoint union \(S + U\), when \(R\) transitions according to the conditions from line 9-13 in~\Cref{alg:symb-bisim}: for all \((s, u) \in R\),
  \begin{itemize}
    \item \(\bigvee \epsilon(s) \equiv \bigvee \epsilon(u)\) under axioms of Boolean algebra; 
    \item if \(s \transvia{b \mid p} s'\) and \(b \land \rho(u) \not\equiv 0\), then \(P(s')\);
    \item if \(u \transvia{b \mid p} u'\) and \(b \land \rho(u) \not\equiv 0\), then \(P(u')\);
    \item if \(s \transvia{b \mid p} s'\), \(u \transvia{a \mid q} u'\), \(b \land a \not\equiv 0\), and \(p \neq q\), then \(P(s')\) and \(P(u')\);
    \item if \(s \transvia{b \mid p} s'\), \(u \transvia{a \mid p} u'\), and \(b \land a \not\equiv 0\), then \((s', u') \in R'\).
  \end{itemize}
  In the case of computing trace equivalence, the \(P\) will just be dead state detection \(\algIsDead\); and in the case of computing bisimulation equivalence, \(P\) will be the constant false predicate.
\end{definition}

We first prove a correspondence theorem, to show that the symbolic progression exactly correspond to the non-symbolic progression (\Cref{def:alg-progression}) on its concretization.

\begin{theorem}[Correspondence]
  Take any two relation over the state sets of two symbolic \gkat automata \(R, R'\), the following correspondence holds: \(R \transCorrSymb{P} R'\) if and only if \(R \transCorr{P} R'\) in their concretizations.
\end{theorem}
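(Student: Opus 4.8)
The plan is to route both directions of the biconditional through a single bridging lemma that converts Boolean-formula reasoning into atom-wise reasoning, and then to check the five clauses of the symbolic progression against the five clauses of \Cref{def:alg-progression} termwise. The lemma I would establish first states that, in the concretization $\zeta_{\varepsilon,\delta}$, for every state $s$ and atom $\alpha$: (i) $s \transOut{\alpha}\retc$ iff $\alpha$ satisfies $\bigvee\varepsilon(s)$; (ii) $s\transRej{\alpha}$ iff $\alpha$ satisfies $\rho(s)$; and (iii) $s\transvia{\alpha \mid p} s'$ iff there is a \emph{unique} $(b,s',p)\in\delta(s)$ with $\alpha$ satisfying $b$. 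Parts (i)--(iii) are immediate from \Cref{def:concretization} and the defining formula for $\rho$, while the uniqueness in (iii)---and the fact that at most one clause of the case split fires---is exactly the content of the disjointness condition, which I would invoke explicitly here.

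Given this lemma, each of the two directions becomes a clause-by-clause match. For the acceptance clause I would use completeness of Boolean algebra with respect to atoms: $\bigvee\varepsilon(s)\equiv\bigvee\varepsilon(u)$ holds iff the same atoms satisfy both disjunctions, which by part (i) is precisely concrete clause~1. For the four transition clauses, the key observation is that a symbolic side-condition of the form $b\land c\not\equiv 0$ holds iff there exists a witnessing atom satisfying both $b$ and $c$, and that the symbolic and concrete conclusions ($P(s')$, $P(u')$, or $(s',u')\in R'$) are syntactically identical. Thus for the forward direction I would fix an arbitrary atom $\alpha$, and for each concrete premise that holds, extract via the lemma the unique symbolic transition(s) whose guards $\alpha$ satisfies, note that the relevant conjunction is nonzero because $\alpha$ satisfies it, and read off the conclusion from the symbolic clause. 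For the backward direction I would instead start from a symbolic clause whose nonemptiness premise holds, pick any witness atom $\alpha$ satisfying the conjunction, and turn the symbolic transitions into concrete transitions on $\alpha$ by the lemma, invoking the concrete progression to obtain the conclusion. The disjointness condition is what makes this back-and-forth tight: it guarantees that the witness atom selects the \emph{same} targets $s',u'$ in the concrete automaton as appear in the symbolic transitions, so clause~5 transfers the pair $(s',u')$ into $R'$ without ambiguity.

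The main obstacle I anticipate is not any single clause but keeping the quantifier alternation aligned: the concrete progression $\transCorr{P}$ quantifies universally over atoms $\alpha\in\At$, whereas the symbolic progression $\transCorrSymb{P}$ quantifies over pairs of symbolic transitions guarded by a satisfiable conjunction. Justifying that ``for all atoms'' decomposes cleanly as ``for all symbolic transitions, for all atoms satisfying the guard'' rests essentially on the uniqueness in part (iii) of the lemma, and this is the one place where the disjointness condition must be used with care. I would therefore isolate the bridging lemma (including disjointness) as a standalone sublemma, after which each of the five termwise equivalences is dispatched by a short witness-extraction argument in each direction.
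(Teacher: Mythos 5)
Your proposal is correct and follows essentially the same route as the paper's proof: unfold the concretization to translate between symbolic guards and atom-indexed transitions, then verify the five clauses termwise in both directions by extracting a witness atom from a satisfiable conjunction (or, conversely, noting that an atom satisfying both guards witnesses $b \land a \not\equiv 0$). Your explicit isolation of the bridging lemma and of the role of the disjointedness condition in guaranteeing uniqueness of the fired transition is a slightly more careful packaging of what the paper states inline, but it is not a different argument.
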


\begin{proof}
  By unfolding the definition of concretization, we obtain the following implications:
  \begin{align*}
    s \transOut{b} \retc & \text{ implies } s \transOut{\alpha} \retc \text{ for all \(\alpha \leq b\), in the concrete automata}; \\
    s \transvia{b \mid p} s' & \text{ implies } s \transvia{\alpha \mid p} s' \text{ for all \(\alpha \leq b\), in the concrete automata}.
  \end{align*}
  First, assuming \(R \transCorrSymb{P} R'\), we prove some conditions in \(R \transCorr{P} R'\) as examples: let \((s, u) \in R\),
  \begin{enumerate}
    \item if \(s \transOut{\alpha} \retc\) in the concrete automata, then there exists \(b\) s.t. \(s \transOut{b} \retc\) and \(\alpha \leq b\). 
    Because \(\bigvee \varepsilon(s) \equiv \bigvee \varepsilon(u)\), therefore there exists \(\alpha \leq a\) and \(u \transOut{a} \retc\), which then implies \(u \transOut{\alpha} \retc\) in the concrete automaton with states \(U\).
    Thus, we have proven \(s \transOut{\alpha} \retc\) implies \(u \transOut{\alpha} \retc\), and the reverse direction can be obtained by symmetry.
    \item Assume \(s \transvia{\alpha \mid p} s'\) and \(u \transRej{\alpha}\) in the automata. 
    By definition of concretization, there exists \(b\) s.t. \(\alpha \leq b\) and \(s \transvia{b \mid p} s'\). 
    Additionally, \(\alpha \in \rho(u)\). 
    This means \(\transvia{b \mid p} s'\) and \(\rho(u) \land b \not\equiv 0\), because \(R \transCorrSymb{P} R'\), we obtained \(P(b)\).
  \end{enumerate}
  Then assume \(R \transCorr{P} R'\), we will prove some condition in \(R \transCorrSymb{P} R'\) as examples: let \((s, u) \in R\)
  \begin{enumerate}
    \item take any \(\alpha \leq \bigvee \varepsilon(s)\), then there exists a \(b \in \varepsilon(s)\), s.t. \(\alpha \leq b\), which implies \(s \transOut{\alpha} \retc\) in the concretization.
    Because \(s \transOut{\alpha} \retc\) iff \(u \transOut{\alpha} \retc\), there exists \(u \transOut{a} \retc\) where \(\alpha \leq a\), implying \(\alpha \leq \bigvee \varepsilon(u)\).
    Because atoms are truth assignments, and we just proved \(\alpha \leq \bigvee \varepsilon(s)\) implies \(\alpha \leq \bigvee \varepsilon(u)\), this means \(\bigvee \varepsilon(s) \leq \bigvee \varepsilon(u)\) in Boolean algebra; and the reverse direction \(\bigvee \varepsilon(s) \geq \bigvee \varepsilon(u)\) is shown by symmetric argument.
    \item Assume \(s \transvia{b \mid p} s'\) and \(b \land \rho(u) \not\equiv 0\), then there exists an atom \(\alpha\) s.t. \(\alpha \leq b\) and \(\alpha \leq \rho(u)\).
    By definition of concretization, \(s \transvia{\alpha \mid p} s'\) and \(u \transRej{\alpha}\).
    Because \(R \transCorr{P} R'\), therefore we reached \(P(s')\).
  \end{enumerate}
  All the other cases follows exactly the same pattern as the listed case, therefore we opt to omit them in this paper.
\end{proof}

\begin{proof}[Proof of~\Cref{thm:symb-corr}]
  The proof essentially by reducing to~\Cref{thm:fin-trace-corr,thm:inf-trace-corr}, where we show that we can construct similar loop invariants with \(\transCorrSymb{}\). 
  Then by the correspondence theorem, these loop invariants can be unfolded into the loop invariants in~\Cref{thm:fin-trace-corr,thm:inf-trace-corr} in the concrete automata, essentially reduced the correctness of the symbolic algorithm to the correctness of the non-symbolic ones.

  We take the hardest case where we show that the symbolic \(\algEquiv(s, u) = \true\) then \(\sem{s} = \sem{u}\), where recall that the trace semantics of a symbolic automata coincide with its concretization.  
  We assert the following induction invariant when \(\algEquiv(s, u) = \true\):
  \[\algIsDead(s) \text{ and }\algIsDead(u)
  \quad\text{or}\quad
  (s, u) \in e(R) \text{ and } (R \setminus R_0) \transCorrSymb{\algIsDead} e(R),
  \]
  where \(R\) is all the previous inputs of \(\algUnion\), \(R_0\) is the starting value of \(R\) for \(\algEquiv\), and \(e(R)\) is the smallest equivalence relation that contains \(R\).
  Verifying the soundness of the induction invariant of the symbolic algorithm is more straightforward as there is no loop, but the argument is eactly the same as \Cref{thm:fin-trace-corr}.
  Thus, when starting with an empty union-find object, \(R_0\) will be empty, thus we case analysis on the post condition:
  \begin{itemize}
    \item if both \(s\) and \(u\) are dead, then by~\Cref{thm:live-iff-have-trace}, both \(\sem{s}\) and \(\sem{u}\) are empty, hence equal; 
    \item if \((R) \transCorrSymb{\algIsDead} e(R)\), then by correspondence of the symbolic and concrete automata, we have \((R) \transCorr{\algIsDead} e(R)\), which implies \(e(R)\) is an invariant. 
    Because \((s, u) \in e(R)\), therefore \(s\) and \(u\) are trace equivalent.
  \end{itemize}
\end{proof}

\section{Application: Decision Procedure for \gkat}\label[apx]{ap:aut-construction-gkat}


In~\Cref{sec:overview,sec:correctness-and-complexity}, we presented decision procedures for \gkat automata and symbolic \gkat automata. We now want to apply these algorithms to expressions directly and therefore need algorithms to compile expressions to automata. 
Although the procedure to generate \gkat automata from \gkat and \cfgkat expressions has been presented in previous works~\cite{smolka_GuardedKleeneAlgebra_2020,schmid_GuardedKleeneAlgebra_2021,zhang_CFGKATEfficientValidation_2025}, we need an algorithm to generate \emph{symbolic \gkat automata} in order to take advantage of the on-the-fly symbolic algorithm in~\Cref{alg:symb-bisim}.

In this section, we will focus on the foundational system of \gkat,  and explain how to generate symbolic \gkat automata from \gkat expressions. 
In the next section, we will then define a separate compilation procedure for \cfgkat,   whose operational semantics is also presented as \gkat automata~\cite{zhang_CFGKATEfficientValidation_2025} but with additional structures, like non-local control flows, that make the compilation procedure more complex and requiring extra steps that are non-trival. 
These two compilation procedures together will bridge the gap to reusing of the symbolic equivalence checking algorithm in~\Cref{alg:symb-bisim} for both systems.

The syntax of \gkat largely mimics that of the imperative programs, except it abstracts away the meaning of the primitive actions and the tests of the program to enable a rich variety of semantics~\cite{smolka_GuardedKleeneAlgebra_2020,schmid_GuardedKleeneAlgebra_2021}.
Formally, \gkat expressions over a finite set of \emph{primitive tests} \(T\) and a finite set of \emph{primitive actions} \(\Sigma\) is defined as follows:
\begin{align*}
  \BExp \ni b, c & \triangleq 
  0 \mid 1 \mid b \land c \mid b \lor c \mid \overline{b} &
  \gkat \ni e, f & \triangleq 
  b \mid p \in \Sigma \mid e \seq f \mid e +_b f \mid e^{(b)}
\end{align*}
\(e +_b f\) and \(e^{(b)}\) are the compact notations for \(\comIfElse{b}{e}{f}\) and \(\comWhile{b}{e}\).

The operational semantics of a \gkat expression is defined as a \gkat automaton, and the trace equivalence of the generated \gkat automata corresponds to the trace equivalence of the original \gkat expression~\cite{smolka_GuardedKleeneAlgebra_2020,schmid_GuardedKleeneAlgebra_2021}.
This result enables us to determine trace equivalence of GKAT expressions by combining the efficient algorithm presented in~\Cref{alg:nonsymb-bisim} with existing automata generation processes like Thompson's construction and derivatives.

\begin{figure}
  \begin{mathpar}
    \inferrule[]{\\}
    {p \transvia{1 \mid p} 1} \and  
    \inferrule[]{\\}
    {b \transOut{b} \retc} \and  
    \inferrule[]
    {e \transRes{a} r}
    {e +_b f \transRes{b \land a} r}
    \and
    \inferrule[]
    {f \transRes{a} r}
    {e +_b f \transRes{b \land a} r}
    \and 
    \inferrule[]
    {e \transvia{b \mid p} e'}
    {e \seq f \transvia{b \mid p} e' \seq f}
    \and
    \inferrule[]
    {e \transOut{b} \retc \\ f \transRes{a} r}
    {e \seq f \transRes{b \land a} r}
    \and  
    \inferrule[]
    {\\}
    {e^{(b)} \transOut{\overline{b}} \retc}  
    \and  
    \inferrule[]
    {e \transvia{a \mid p} e'}
    {e^{(b)} \transvia{b \land a \mid p} e'\seq e^{(b)}}
  \end{mathpar}
  \caption{Symbolic Derivatives for GKAT}\label{fig:derivatives-gkat}
\end{figure}

However, to effectively utilize the on-the-fly symbolic decision procedure in~\Cref{alg:symb-bisim}, we will need to devise an algorithm to generate symbolic \gkat automata \emph{directly} from \gkat expressions.
This process is given by \emph{symbolic derivatives}, which generates a transition system on-the-fly, with states being \gkat expressions (denoted as \(\gkat\)):
\begin{align*}
  \varepsilon &: \gkat \to \powSet{\BExp}, & 
  \delta &: \gkat \to \powSet{\BExp \times \gkat \times \Sigma}.
\end{align*}
Indeed, this transition system defines the operational semantics for \gkat expressions.

Concretely, symbolic derivatives for \gkat are given in~\Cref{fig:derivatives-gkat}, where we use the notation \(e \transRes{b} r\) to say that \(e\) under condition \(b\) will reach result \(r\), where \(r\) ranges over \(\{\retc\} + \Sigma \times \gkat\).
Intuitively, a rule with \(e \transRes{b} r\) can be split into a return rule \(e \transOut{b} \retc\) and a transition rule \(e \transvia{b \mid p} e'\).
As an example, the following rule for sequencing can be soundly split as follows:
\[
    \inferrule
    {e \transOut{b} \retc \\ f \transRes{a} r}
    {e \seq f \transRes{b \land a} r}
    \quad 
    \longrightsquigarrow 
    \quad
    \inferrule
    {e \transOut{b} \retc \\ f \transOut{a} \retc}
    {e \seq f \transOut{b \land a} \retc} 
    \quad 
    \inferrule
    {e \transOut{b} \retc \\ f \transvia{a \mid p} f'}
    {e \seq f \transvia{b \land a \mid p} f'}
\]

With the operational semantics defined by derivatives, we can construct a symbolic \gkat automaton \(\langle e \rangle\) for any expression \(e\), where the states are the \emph{reachable} expressions by taking the derivatives of \(e\), and the start state is the 
\begin{wrapfigure}{r}{3cm}
  \vspace{-0.9cm}
  \begin{minipage}{3cm}
  \begin{flushright}
    \begin{tikzpicture}
      \node (init-if) {};
      \node[textstate] (if) [right=3mm of init-if] {\(1 +_b p\)};
      \node[textstate] (skip) [below=5mm of if] {\(1\)};
      \node (if-ret) [right=3mm of if] {\(\retc\)};
      \node (skip-ret) [right=3mm of skip] {\(\retc\)};
      \draw[->] (init-if) edge (if);
      \draw[->] (if) edge[output-edge] node[below] (ifOut) {\(b\)} (if-ret);
      \draw[->] (skip) edge[output-edge] node[above] (skipOut)  {\(1\)} (skip-ret);
      \draw[->] (if) edge node[left] {\(\overline{b} \mid p\)} (skip);

      \node (init-loop) [below=3cm of init-if] {};
      \node[textstate] (loop) [right=3mm of init-loop] {\((1 +_b p)^{(c)}\)};
      \node (loop-ret) [below=3mm of loop] {\(\retc\)};
      \draw[->] (init-loop) edge (loop);
      \draw[->] (loop) edge[output-edge] node[right] {\(\overline{c}\)} (loop-ret);
      \draw[->] (loop) edge[out=125,in=55,loop,looseness=2] node[above] (loop-trans) {\(c \land \overline{b} \mid p\)} (loop);

      \node (if-aut) [
        fit=(init-if) (if) (skip) (if-ret) (skip-ret) (ifOut) (skipOut), 
        draw=gray, dashed, rounded corners=5pt, thick, draw opacity=0.5,
        label={[anchor=south west]north west:\(\langle 1 +_b p \rangle:\)}] {};

      \node (loop-aut) [
        fit=(init-loop) (loop) (loop-ret) (loop-trans), 
        draw=gray, dashed, rounded corners=5pt, thick, draw opacity=0.5,
        label={[anchor=south west]north west:\(\langle (1 +_b p)^{(c)} \rangle:\)}] {};
  \end{tikzpicture}
  \end{flushright}
\end{minipage}
\end{wrapfigure}
expression \(e\).
Then by running the algorithm in~\Cref{alg:symb-bisim} on symbolic automata \(\langle e \rangle\) and \(\langle f \rangle\), we can decide the trace equivalence of the expressions \(e\) and \(f\).

As an example, We compute the derivative for expression \((1 +_b p)^{(c)}\). 
There is only one rule that leads to the (immediate) termination of while loops: when the loop condition \(c\) is unsatisfied: \(e^{(c)} \transOut{\overline{c}} \retc\).
For the transition case, the loop \(e^{(c)}\) only transition to \(e' \seq e^{(c)}\) when the body \(e\) transition to \(e'\). 
Since only \(p\) branch of \(1 +_b p\) will transition: \(p \transvia{1 \mid p} 1\); we obtain \(1 +_b p \transvia{\overline{b} \mid p} 1\) and \((1 +_b p)^{(c)} \transvia{c \land \overline{b} \land 1 \mid p} 1 \seq e^{(c)}\).
Finally, because \(c \land b \land 1 \equiv c \land b\) and \(1 \seq e^{(c)} \equiv e^{(c)}\), we optimize the derivative of \((1 +_b p)^{(c)}\):
\begin{align*}
  \varepsilon((1 +_b p)^{(c)}) & = \{\overline{c}\}\ ,&  
  \delta((1 +_b p)^{c}) & = \{(c \land \overline{b}, (1 +_b p)^{(c)}, p)\}\ .
\end{align*}
Notice the atoms satisfying \(c \land b\) will neither terminate nor transition, which means they silently diverge, or formally, are rejected by the state \((1 +_b p)^{(c)}\).

\section{Omitted proofs in~\Cref{sec:aut-construction-cfgkat}}\label[apx]{ap:omitted}

\begin{figure}[h]
\begin{mathpar}
  \inferrule*
  {
    \inferrule*[Left=\(e^{(x \neq 2)} \equiv \comSkip \unfold e^{(x \neq 2)}\)]
    {
      \inferrule*
      {(x \mapsto 3, e) \transvia{\overline{b} \land a \mid p} (x \mapsto 3, \comSkip)}
      {(x \mapsto 3, e \unfold e^{(x \neq 2)}) \transvia{\overline{b} \land a \mid p} (x \mapsto 3, \comSkip \unfold e^{(x \neq 2)})}
    }
    {(x \mapsto 3, e \unfold e^{(x \neq 2)}) \transvia{\overline{b} \land a \mid p} (x \mapsto 3, e^{(x \neq 2)})}
  }
  {(x \mapsto 3, e^{(x \neq 2)}) \transvia{\overline{b} \land a \mid p} (x \mapsto 3, e^{(x \neq 2)})}
  \and
  \inferrule*
  {
    \inferrule*
    {
      (x \mapsto 3, e) \transOut{b} \acc{(x \mapsto 4)} \\
      \inferrule*
      {
        \inferrule*[Left=\(e^{(x \neq 2)} \equiv \comSkip \unfold e^{(x \neq 2)}\)]
        {
          \inferrule*
          {(x \mapsto 4, e) \transvia{a \mid p} (x \mapsto 4, \comSkip)}
          {(x \mapsto 4, e \unfold e^{(x \neq 2)}) \transvia{a \mid p} (x \mapsto 4, \comSkip \unfold e^{(x \neq 2)})}
        }
        {(x \mapsto 4, e \unfold e^{(x \neq 2)}) \transvia{a \mid p} (x \mapsto 4, e^{(x \neq 2)})}
      }
      {(x \mapsto 4, e^{(x \neq 2)}) \transvia{a \mid p} (x \mapsto 4, e^{(x \neq 2)})}
    }
    {(x \mapsto 3, e \unfold e^{(x \neq 2)}) \transvia{b \land a \mid p} (x \mapsto 4, e^{(x \neq 2)})}}
  {(x \mapsto 3, e^{(x \neq 2)}) \transvia{b \land a \mid p} (x \mapsto 4, e^{(x \neq 2)})}
\end{mathpar}
\caption{The derivation of the outgoing transitions for \((x \mapsto 3, e^{x \neq 2})\), where \(e^{x \neq 2}\) is from~\Cref{fig:loop-cfgkat-program-cumu}. 
We also include the standard optimization \(e^{x \neq 2} \equiv \comSkip \unfold e^{x \neq 2}\), which removes \(\comSkip\), in the derivation.}\label{fig:deriv-out-going-transition-cumu-loop-example}
\end{figure}

\begin{proof}[Proof for~\Cref{thm:cumulation-computes}]
  Before we start the formal proof, we would like to note that \(\accu_{\res, \con}\) can also be characterized as the least fixpoint of the following equation:
  \[\accu_{\res, \con}(n) \triangleq \res(n) \cup \bigcup \{h(r) \mid (h, n') \in \con(n), r \in \accu_{\res, \con}(n')\}.\]
  Hence, the Scott-continuity of the function in the above fixpoint equation will imply \(\accu\) can be approximated from below by the Kleene's fixpoint theorem.
  However, we opt to present our current proof, since it is closer to our implementation.

  We design an algorithm to explicitly compute the result of \(\accu_{\res, \con}(n)\) by keeping track of an axillary variable \(M\), which keeps track of the explored \(n \in N\):
  \begin{align*}
    \accu'_{\res,\con}(n, M) & \triangleq \begin{cases}
      \emptyset & \text{if } n  \in M \\
      \begin{aligned}
        \res(n) \cup 
        \{h(r) \mid 
          {}& (n', h)  \in \con(n) \text{ and }\\
          {}& r  \in \accu'_{\res,\con}(n', M \cup \{n\})\}
      \end{aligned}
      & \text{otherwise} 
    \end{cases} \\
    \accu_{\res,\con}(n) & = \accu'_{\res,\con}(n,  \emptyset ).
  \end{align*}
  The algorithm will terminate because the size of \(N \setminus M\) decrease for each recursive call.

  If \(r  \in \res(n)\), then we say that \(r  \in \accu'_{\res,\con}(n, M)\) is computable in 1 call of \(\accu'\); and if \(r  \in \accu'_{\res,\con}(n', M \cup \{n\})\) is computable in \(k\) calls of \(\accu'\), we say \(h(r)  \in \accu'_{\res,\con}(n', M)\) is computable in \(k + 1\) calls of \(\accu'\).
  Since every call decreases the size of \(N \setminus M\), then \(r  \in \accu'_{\res,\con}(n',  \emptyset )\) if and only if it can be computed within \(|N \setminus  \emptyset | = |N|\) calls.

  Then we can prove that if \(r  \in \accu_{\res,\con}(n)\) is derivable with proof of size \(k\) if and only if \(r\) can be computed with exactly \(k\) call of \(\accu'_{\res,\con}\).
  \begin{itemize}
    \item If \(k = 1\), then 
    \begin{align*}
      & r  \in \accu_{\res,\con}(n) \text{ is derivable in 1 step} \\
      & \iff r  \in \res(n) \\
      & \iff r  \in \accu'_{\res,\con}(n,  \emptyset ) \text{ is computable in 1 call}.
    \end{align*}
    \item If \(r  \in \accu_{\res,\con}(n)\) can be derived using a derivation of size \(k + 1\).
    Then let \((n', h) = \con(n)\) and \(r' \in \accu(n')\) is derivable in \(k\) step, s.t. \(r = h(r')\).
    \begin{align*}
      & r \in \accu_{\res,\con}(n) \text{ derivable in \(k + 1\) step}\\
      & \iff r = h(r') \text{ where } (n', h) = \con(n) 
        \text{ and } 
        r' \in \accu(n') \text{is derivable in \(k\) steps}\\
      & \iff r = h(r') \text{ where } (n', h) = \con(n) 
        \text{ and } 
        r' \in \accu(n') \text{is computable in \(k\) calls}\\
      & \iff r \in \accu'_{\res,\con}(n,  \emptyset ) \text{ in \(k + 1\) call}.
    \end{align*}
  \end{itemize} 
\end{proof}

\begin{proof}[Proof for~\Cref{thm:loop-comp-cfgkat-correctness}] 
  The 2 equivalences are proven separately by induction.

  First to prove $\delta(\pi, e^{(b)}) = \accu(\res_{\delta}, \con_{\delta}, \pi)$, 
  we begin in the forward direction, i.e., we show that for all \(b, p\),
  \(\sigma\) and \(f\) such that there is 
  \((\pi, e^{(b)}) \transvia{b \mid p} (\sigma, f)\), there is 
  \((b, (\sigma, f), p) \in \accu(\res_{\delta}, \con_{\delta}, \pi)\). 
  The proof proceeds by induction on the derivation of 
  \((\pi, e^{(b)}) \transvia{b \mid p} (\sigma, f)\). 
  It is clear by the rules of \Cref{fig:sem-CF-GKAT} that there are only two cases to consider:
  \vspace{-0.5em}
  \begin{center}
    \scalebox{0.85}{\(
      \inferrule
      {
        \inferrule
        {(\pi, e) \transvia{a \mid p} (\pi', e')}
        {(\pi, e \unfold e^{(b)}) \transvia{a \mid p} (\pi', e' \unfold e^{(b)})}
      }
      {(\pi, e^{(b)}) \transvia{b[\pi] \land a \mid p} (\pi', e' \unfold e^{(b)})}
      \quad (1)
    \)}
    \qquad
    \scalebox{0.85}{\(
      \inferrule
      {
        \inferrule
        {(\pi, e) \transOut{a}{c} \\ c \in \{\acc{\pi'}, \contc{\pi'}\} \\ (\pi', e^{(b)}) \transvia{d \mid p} (\sigma, f)}
        {(\pi, e \unfold e^{(b)}) \transvia{a \land d \mid p} (\sigma, f)}
      }
      {(\pi, e^{(b)}) \transvia{b[\pi] \land a \land d \mid p} (\sigma, f)}
      \quad (2)
    \)}
  \end{center}
  In case (1), from the definition of \(\res_{\delta}\), we immediately have:
  \begin{small}
  \begin{mathpar}
    \inferrule
    {(\pi, e) \transvia{a \mid p} (\pi', e')}
    { (b[\pi] \land a, (\pi', e' \unfold e^{(b)}), p) \in \res_{\delta}(\pi) }
  \end{mathpar}
  \end{small}
  which shows that \((b[\pi] \land a, (\pi', e' \unfold e^{(b)}), p) \in \accu(\res_{\delta}, \con_{\delta}, \pi)\).

  \noindent In case (2), from the premises \((\pi, e) \transOut{a}{c}\) and \(c \in \{\acc{\pi'}, \contc{\pi'}\}\) obtained from inversion, we have \((\pi', \langle b[\pi] \land a |) \in \con_{\delta}(\pi)\). Furthermore, from the induction hypothesis, we know that \((d, (\sigma, f), p) \in \accu(\res_{\delta}, \con_{\delta}, \pi')\). This means that the following holds:
  \begin{small}
  \begin{mathpar}
    \inferrule 
    { (\pi', \langle b[\pi] \land a |) \in \con_{\delta}(\pi) \\ 
      (d, (\sigma, f), p) \in \accu(\res_{\delta}, \con_{\delta}, \pi') }
    { (b[\pi] \land a \land d, (\sigma, f), p) \in \accu(\res_{\delta}, \con_{\delta}, \pi) }
  \end{mathpar}
  \end{small}
  which concludes the proof of the forward direction.

  To prove the reverse direction, we need to show that for all 
  \((b_0, (\sigma, f), p) \in \accu(\res_{\delta}, \con_{\delta}, \pi)\), there is
  \((\pi, e^{(b)}) \transvia{b_0 \mid p} (\sigma, f)\). By induction on 
  the derivation of \((b_0, (\sigma, f), p) \in \accu(\res_{\delta}, \con_{\delta}, \pi)\), 
  we have the following cases to consider:
  \begin{mathpar}
    \inferrule
    {(b, (\sigma, f), p) \in \res_{\delta}(\pi)}
    {(b, (\sigma, f), p) \in \accu(\res_{\delta}, \con_{\delta}, \pi)}
    (1)
    \and 
    \inferrule
    { (\pi', h) \in \con_{\delta}(\pi) \\ 
      r \in \accu(\res_{\delta}, \con_{\delta}, \pi') }
    {h(r) \in \accu(\res_{\delta}, \con_{\delta}, \pi)}
    (2)
  \end{mathpar}
  In case (1), by inversion on \((b_0, (\sigma, f), p) \in \res_{\delta}(\pi)\) we have:
  \begin{mathpar}
    \inferrule 
    { (\pi, e) \transvia{a \mid p} (\pi', e') }
    { (b[\pi] \land a, (\pi', e' \unfold e^{(b)}), p) \in \res_{\delta}(\pi) }
  \end{mathpar}
  and \(b_0 = b[\pi] \land a\), \(\sigma = \pi'\), \(f = e' \unfold e^{(b)}\). 
  Now from the rules of \Cref{fig:sem-CF-GKAT}, we can derive the transition
  \((\pi, e^{(b)}) \transvia{b[\pi] \land a \mid p} (\pi', e' \unfold e^{(b)})\) which is just
  \((\pi, e^{(b)}) \transvia{b_0 \mid p} (\sigma, f)\).

  \noindent In case (2), we have \(h(r) = (b_0, (\sigma, f), p)\) by assumption. 
  Inverting \((\pi', h) \in \con_{\delta}(\pi)\) we have:
  \begin{mathpar}
    \inferrule    
    { (\pi, e) \transOut{a}{c} \\ 
      c \in \{ \acc{\pi'}, \contc{\pi'} \} }
    { (\pi', \langle b[\pi] \land a| \in \con_{\delta}(\pi) ) }
  \end{mathpar}
  and \(h = \langle b[\pi] \land a|\). 
  Let \(r = (d, (\sigma', f'), p') \in \accu(\res_{\delta}, \con_{\delta}, \pi')\).
  By \(h(r) = (b_0, (\sigma, f), p)\) we have 
  \(b_0 = b[\pi] \land a \land d\), \(\sigma = \sigma'\), \(f = f'\), and \(p = p'\).
  From the induction hypothesis, we know that there is transition 
  \((\pi', e^{(b)}) \transvia{d \mid p} (\sigma, f)\). The rules of \Cref{fig:sem-CF-GKAT} then allow us to derive the transition 
  \((\pi, e^{(b)}) \transvia{b[\pi] \land a \land b \mid p} (\sigma, f)\) which concludes the
  proof of the reverse direction.

  We now begin proving the second equivalence 
  \(\varepsilon(\pi, e^{(b)}) = \accu(\res_{\varepsilon},\con_{\varepsilon},\pi)\). 
  In the forward direction, we show that for all \((\pi, e^{(b)}) \transOut{b_0} c_0\),
  we have \((b_0, c_0) \in \accu(\res_{\varepsilon},\con_{\varepsilon},\pi)\). 
  By induction on the derivation of \((\pi, e^{(b)}) \transOut{b_0} c_0\), the only cases to 
  consider are:
  \begin{small}
  \begin{mathpar}
  \inferrule
  {\\}
  {(\pi, e^{(b)}) \transOut{ \neg b[\pi]}{\acc{\pi}}}
 \and 
  \inferrule
  {(\pi, e) \transOut{a}{\brkc{ \pi'}}}
  {(\pi, e^{(b)}) \transOut{b[\pi] \land a}{\acc{ \pi'}}}
  \and
  \inferrule
  {(\pi, e) \transOut{a}{c} \\ c = \retc \text{ or } c = \jmpc{(l,  \pi')}}
  {(\pi, e^{(b)}) \transOut{b[\pi] \land a}{c}}
  \and
  \inferrule
  {(\pi, e) \transOut{a}{c} \\ c = \acc{ \pi'} \text{ or } \contc{ \pi'} \\ ( \pi', e^{(b)}) \transOut{d}{c_0}}
  {(\pi, e^{(b)}) \transOut{b[\pi] \land a \land d}{c_0}}
  \end{mathpar}
  \end{small}
  The first 3 cases are trivial as they all belong in \(\res_{\varepsilon}(\pi)\), and by 
  extension \(\accu(\res_{\varepsilon}, \con_{\varepsilon}, \pi)\), via the following rules:
  \vspace{-1em}
  \begin{small}
  \begin{mathpar}
  \inferrule
  {\\}
  {(\neg b[\pi], \acc{\pi}) \in \res_{\varepsilon}(\pi)} 
  \and 
  \inferrule
  {(\pi, e)\transOut{a}{\brkc{\pi'}}}
  {(b[\pi] \land a, \acc{\pi'}) \in \res_{\varepsilon}(\pi)}
  \and 
  \inferrule
  {(\pi, e)\transOut{a}{c} \\ c = \retc \text{ or } c = \jmpc{(l, \pi')}}
  {(b[\pi] \land a, c) \in \res_{\varepsilon}(\pi)}
  \end{mathpar}
  \end{small}
  From the first 2 premises of the 4th case, we can derive
  \((\pi', \langle b[\pi] \land a|) \in \con_{\varepsilon}(\pi)\) by definition of
  \(\con_{\varepsilon}\). 
  Applying the induction hypothesis on \((\pi', e^{(b)}) \transOut{d} c_0\) gives us
  \((d, c_0) \in \accu(\res_{\varepsilon}, \con_{\varepsilon}, \pi')\). The \(\accu\) rule
  then allows us to derive:
  \begin{small}
  \begin{mathpar}
  \inferrule
  {(\pi', \langle b[\pi] \land a|) \in \con_{\varepsilon}(\pi) \\ 
   (d, c_0) \in \accu(\res_{\varepsilon}, \con_{\varepsilon}, \pi')}
  {(b[\pi] \land a \land d, c_0) \in \accu(\res_{\varepsilon}, \con_{\varepsilon}, \pi)}
  \end{mathpar}
  \end{small}
  which concludes the proof of the forward direction.

  To prove the reverse direction, we need to show that for all 
  \((b_0, c_0) \in \accu(\res_{\varepsilon}, \con_{\varepsilon}, \pi)\), there is 
  \((\pi, e^{(b)}) \transOut{b_0} c_0\). By induction on the derivation of 
  \((b_0, c_0) \in \accu(\res_{\varepsilon}, \con_{\varepsilon}, \pi)\), 
  we have the following cases to consider:
  \begin{small}
  \begin{mathpar}
    \inferrule
    {(b_0, c_0) \in \res_{\varepsilon}(\pi)}  
    {(b_0, c_0) \in \accu(\res_{\varepsilon}, \con_{\varepsilon}, \pi)}
    (1)
    \and
    \inferrule
    {(\pi', h) \in \con_{\varepsilon}(\pi) \\ 
     r \in \accu(\res_{\varepsilon}, \con_{\varepsilon}, \pi')}
    {h(r) \in \accu(\res_{\varepsilon}, \ con_{\varepsilon}, \pi)}
    (2)
  \end{mathpar}
  \end{small}
  Similarly to the forward direction, case (1) is trivial as inversion on the
  proof of \((b_0, c_0) \in \res_{\varepsilon}(\pi)\) yields the following cases:
  \begin{small}
  \begin{mathpar}
  \inferrule
  {\\}
  {(\neg b[\pi], \acc{\pi}) \in \res_{\varepsilon}(\pi)} 
  \and 
  \inferrule
  {(\pi, e)\transOut{a}{\brkc{\pi'}}}
  {(b[\pi] \land a, \acc{\pi'}) \in \res_{\varepsilon}(\pi)}
  \and 
  \inferrule
  {(\pi, e)\transOut{a}{c_0} \\ c_0 = \retc \text{ or } c_0 = \jmpc{(l, \pi')}}
  {(b[\pi] \land a, c_0) \in \res_{\varepsilon}(\pi)}
  \end{mathpar}
  \end{small}
  which allow us to derive \((\pi, e^{(b)}) \transOut{\neg b[\pi]} \acc{\pi}\),
  \((\pi, e^{(b)}) \transOut{b[\pi] \land a} \acc{\pi'}\) and 
  \((\pi, e^{(b)}) \transOut{b[\pi] \land a} c_0\) respectively.

  \noindent In case (2), we have \(h(r) = (b_0, c_0)\) by assumption.
  Inverting \((\pi', h) \in \con_{\varepsilon}(\pi)\) we have:
  \begin{small}
  \begin{mathpar}
  \inferrule
  {(\pi, e) \transOut{a}{c} \\ c = \acc{\pi'} \text{ or } \contc{\pi'} }
  {(\pi', \langle b[\pi] \land a|) \in \con_{\varepsilon}(\pi)}
  \end{mathpar}
  \end{small}
  and \(h = \langle b[\pi] \land a|\). 
  Let \(r = (d, c') \in \accu(\res_{\varepsilon}, \con_{\varepsilon}, \pi')\).
  From the induction hypothesis, we have \((\pi', e^{(b)}) \transOut{d} c'\) and:
  \begin{small}
  \begin{mathpar}
    \inferrule
    { (\pi, e) \transOut{a}{c} \\ 
      c = \acc{\pi'} \text{ or } \contc{\pi'} \\ 
      (\pi', e^{(b)}) \transOut{d}{c'} }
    { (\pi, e^{(b)}) \transOut{b[\pi] \land a \land d}{c'} }
  \end{mathpar}
  \end{small}
  Because \((b_0, c_0) = h(r) = (b[\pi] \land a \land d, c')\), there is
  \((\pi, e^{(b)}) \transOut{b_0} c_0\) which concludes the proof.
\end{proof}

\myparagraph{Expressing the \(\varepsilon\) of while-loops}
Similar to the \(\delta\) case, we unfold the proof rules for \(e \unfold e^{(b)}\) and extract \emph{all} the premise that can lead to \(e^{(b)} \transOut{a} c\) for some condition \(a \in \BExp\) and continuation \(c \in C\); we omit the middle step \((\pi, e \unfold e^{(b)}) \transOut{a} c\):
\begin{mathpar}
  \inferrule
  {\\}
  {(\pi, e^{(b)}) \transOut{ \neg b[\pi]}{\acc{\pi}}}
 \and 
  \inferrule
  {(\pi, e) \transOut{a}{\brkc{ \pi'}}}
  {(\pi, e^{(b)}) \transOut{b[\pi] \land a}{\acc{ \pi'}}}
  \and
  \inferrule
  {(\pi, e) \transOut{a}{c} \\ c = \retc \text{ or } c = \jmpc{(l,  \pi')}}
  {(\pi, e^{(b)}) \transOut{b[\pi] \land a}{c}}
  \and
  \inferrule
  {(\pi, e) \transOut{a}{c} \\ c = \acc{ \pi'} \text{ or } \contc{ \pi'} \\ ( \pi', e^{(b)}) \transOut{d}{c'}}
  {(\pi, e^{(b)}) \transOut{b[\pi] \land a \land d}{c'}}
\end{mathpar}
We notice that the first three rules compute elements in \(\varepsilon(\pi, e^{(b)})\) directly from \(\varepsilon(\pi, e)\); the final rule will compute the elements in \(\varepsilon(\pi, e^{(b)})\) inductively from \(\varepsilon(\pi', e^{(b)})\), and the resulting condition \(d\) is guarded by \(b[\pi] \land a\).
Thus, the first three rules are converted into rules for \(\res_{\varepsilon}\) and the final rule is converted into a rule for \(\con_{\varepsilon}\):
\begin{mathpar}
  \inferrule
  {\\}
  {(\neg b[\pi], \acc{\pi}) \in \res_{\varepsilon}(\pi)} 
 \and 
  \inferrule
  {(\pi, e)\transOut{a}{\brkc{\pi'}}}
  {(b[\pi] \land a, \acc{\pi'}) \in \res_{\varepsilon}(\pi)}
 \and 
  \inferrule
  {(\pi, e)\transOut{a}{c} \\ c = \retc \text{ or } c = \jmpc{(l, \pi')}}
  {(b[\pi] \land a, c) \in \res_{\varepsilon}(\pi)}
 \and 
  \inferrule
  {(\pi, e) \transOut{a} c \\ c = \acc{\pi'} \text{ or } c = \contc{\pi'}}
  {(\pi', \langle b[\pi] \land a|) \in \con_{\varepsilon}(\pi)}
\end{mathpar}

\begin{proof}[Proof for~\Cref{thm:cfgkat-jmp-compute}]
  We define \(\res_{\varepsilon}\), \(\con_{\varepsilon}\), \(\res_{ \delta }\), \(\con_{ \delta }\), via the following inference rules: let \(l\) ranges over all the labels in \(e\), and \(\sigma, \sigma'\) ranges over all indicator states,
  \begin{mathpar}
    \inferrule
    {(\sigma, f) \transOut{a}{c} \\ c \neq \jmpc{l, \sigma'}}
    {(a, c) \in \res_{\varepsilon}(\sigma, f)}
    \and  
    \inferrule
    {(\sigma, f) \transOut{a}{\jmpc{l, \sigma'}}}
    {((\sigma', e_l),  \langle a|) \in \con_{\varepsilon}(\sigma, f)}
    \and  
    \inferrule
    {(\sigma, f) \transvia{a \mid p} (\sigma', f')}
    {(a, (\sigma', f'), p) \in \res_{ \delta }(\sigma, f)}
    \and  
    \inferrule
    {(\sigma, f) \transOut{a}{\jmpc{l, \sigma'}}}
    {((\sigma', e_l),  \langle a|) \in \con_{ \delta }(\sigma, f)}
  \end{mathpar}
  We then show respective \(\accu\) computes the accepting \( \varepsilon \) and transition function \( \delta \) in jump resolution:
  \begin{align*}
    \varepsilon(\pi, f) &= \accu(\res_\varepsilon, \con_\varepsilon, (\pi, f)) &
    \delta(\pi, f) &= \accu(\res_\delta, \con_\delta, (\pi, f))
  \end{align*}
  via induction on the size of the derivation.

  First to prove 
  \(\varepsilon(\sigma, f) = \accu(\res_{\varepsilon}, \con_{\varepsilon}, (\sigma, f))\), 
  in the forward direction, we must show that for all
  \((\sigma, f) \transOut{b_0}{c}\) in \(\langle \pi, e \rangle\JmpRes{}\), there is
  \((b_0, c) \in \accu(\res_{\varepsilon}, \con_{\varepsilon}, (\sigma, f))\). 
  By induction on the derivation of \((\sigma, f) \transOut{b_0}{c}\)
  in \(\langle \pi, e \rangle\JmpRes{}\), there are 2 cases to consider:
  \begin{small}
  \begin{mathpar}
    \inferrule
    { (\sigma, f) \transOut{a} c  \text{ in } \langle \pi, e \rangle\\ 
      c \neq \jmpc{l, \sigma'} }
    { (\sigma, f) \transOut{a} c \text{ in } \langle \pi, e \rangle\JmpRes{} }
    (1)

    \inferrule
    {
      (\sigma, f) \transOut{a}{\jmpc{l, \sigma'}} \text{ in } \langle \pi, e \rangle \\ 
      (\sigma', e_l) \transOut{b} r \text{ in } \langle \pi, e \rangle\JmpRes{}
    }
    {(\sigma, f) \transOut{a \land b} r \text{ in } \langle \pi, e \rangle\JmpRes{}}
    (2)
  \end{mathpar}
  \end{small}

  \noindent In case (1), by definition of \(\res_{\varepsilon}\), we have 
  \((a, c) \in \res_{\varepsilon}(\sigma, f)\) which immediately gives us
  \((a, c) \in \accu(\res_{\varepsilon}, \con_{\varepsilon}, (\sigma, f))\).

  \noindent In case (2), the first premise gives us 
  \(((\sigma', e_l), \langle a |) \in \con_{\varepsilon}(\sigma, f)\)
  by definition of \(\con_{\varepsilon}\). The induction hypothesis applied to the
  second premise gives us 
  \((b, r) \in \accu(\res_{\varepsilon}, \con_{\varepsilon}, (\sigma', e_l))\).
  The \(\accu\) rule then allows us to derive:
  \begin{small}
  \begin{mathpar}
    \inferrule
    { ((\sigma', e_l), \langle a |) \in \con_{\varepsilon}(\sigma, f) \\ 
      (b, r) \in \accu(\res_{\varepsilon}, \con_{\varepsilon}, (\sigma', e_l)) }
    { (a \land b, r) \in \accu(\res_{\varepsilon}, \con_{\varepsilon}, (\sigma, f)) }
  \end{mathpar}
  \end{small}
  which concludes the proof of the forward direction.

  To prove the reverse direction, we need to show that for all
  \((b_0, c) \in \accu(\res_{\varepsilon}, \con_{\varepsilon}, (\sigma, f))\), there is
  \((\sigma, f) \transOut{b_0}{c}\) in \(\langle \pi, e \rangle\JmpRes{}\). By
  induction on the derivation of 
  \((b_0, c) \in \accu(\res_{\varepsilon}, \con_{\varepsilon}, (\sigma, f))\),
  the cases to consider are:
  \begin{small}
  \begin{mathpar}
    \inferrule
    { (b_0, c) \in \res_{\varepsilon}(\sigma, f) }
    { (b_0, c) \in \accu(\res_{\varepsilon}, \con_{\varepsilon}, (\sigma, f)) }
    (1)
    \and
    \inferrule
    { (n, h) \in \con_{\varepsilon}(\sigma, f) \\ 
      r \in \accu(\res_{\varepsilon}, \con_{\varepsilon}, n) }
    { h(r) \in \accu(\res_{\varepsilon}, \con_{\varepsilon}, (\sigma, f)) }
    (2)
  \end{mathpar}
  \end{small}

  \noindent In case (1), inversion on \((b_0, c) \in \res_{\varepsilon}(\sigma, f)\) gives us:
  \begin{small}
  \begin{mathpar}
    \inferrule
    { (\sigma, f) \transOut{a}{c} \text{ in } \langle \pi, e \rangle \\ 
      c \neq \jmpc{l, \sigma'} }
    { (a, c) \in \res_{\varepsilon}(\sigma, f) }
  \end{mathpar}
  \end{small}
  which allows us to derive
  \((\sigma, f) \transOut{a}{c}\) in \(\langle \pi, e \rangle\JmpRes{}\).

  \noindent In case (2), inversion of the first premise gives us:
  \begin{small}
  \begin{mathpar}
    \inferrule  
    { (\sigma, f) \transOut{a}{\jmpc{l, \sigma'}} \text{ in } \langle \pi, e \rangle }
    { ((\sigma', e_l), \langle a |) \in \con_{\varepsilon}(\sigma, f) }
  \end{mathpar}
  \end{small}
  where \(n = (\sigma', e_l)\) and \(h = \langle a |\).
  Let \(r = (b, c') \in \accu(\res_{\varepsilon}, \con_{\varepsilon}, (\sigma', e_l))\).
  The induction hypothesis applied to the second premise gives us
  \((\sigma', e_l) \transOut{b}{c'}\) in \(\langle \pi, e \rangle\JmpRes{}\).
  We can now derive:
  \begin{small}
  \begin{mathpar}
    \inferrule 
    { (\sigma, f) \transOut{a}{\jmpc{l, \sigma'}} \text{ in } \langle \pi, e \rangle \\ 
      (\sigma', e_l) \transOut{b}{c'} \text{ in } \langle \pi, e \rangle\JmpRes{} }
    { (\sigma, f) \transOut{a \land b}{c'} \text{ in } \langle \pi, e \rangle\JmpRes{} }
  \end{mathpar}
  \end{small}
  which concludes the proof of the reverse direction.

  We now begin proving the second equivalence
  \(\delta(\sigma, f) = \accu(\res_{\delta}, \con_{\delta}, (\sigma, f))\).
  In the forward direction, we must show that for all
  \((\sigma, f) \transvia{b_0 \mid p} (\sigma', f')\) in 
  \(\langle \pi, e \rangle\JmpRes{}\), there is
  \((b_0, (\sigma', f'), p) \in \accu(\res_{\delta}, \con_{\delta}, (\sigma, f))\).
  We proceed by induction on the derivation of
  \((\sigma, f) \transvia{b_0 \mid p} (\sigma', f')\) in 
  \(\langle \pi, e \rangle\JmpRes{}\). 
  There are 2 cases to consider:
  \begin{small}
  \begin{mathpar}
    \inferrule
    { (\sigma, f) \transvia{b \mid p} (\sigma', f') \text{ in } \langle \pi, e \rangle }
    { (\sigma, f) \transvia{b \mid p} (\sigma', f') \text{ in } \langle \pi, e \rangle\JmpRes{} }
    (1)
    \and
    \inferrule
    {
      (\sigma, f) \transOut{a}{\jmpc{l, \sigma'}} \text{ in } \langle \pi, e \rangle \\
      (\sigma', e_l) \transvia{b \mid p} (\sigma'', f') \text{ in } \langle \pi, e \rangle\JmpRes{}
    }
    { (\sigma, f) \transvia{a \land b \mid p} (\sigma'', f') \text{ in } \langle \pi, e \rangle\JmpRes{} }
    (2)
  \end{mathpar}
  \end{small}

  \noindent In case (1), by definition of \(\res_{\delta}\), we have
  \((b, (\sigma', f'), p) \in \res_{\delta}(\sigma, f)\) which immediately gives us
  \((b, (\sigma', f'), p) \in \accu(\res_{\delta}, \con_{\delta}, (\sigma, f))\).

  \noindent In case (2), the first premise gives us
  \(((\sigma', e_l), \langle a |) \in \con_{\delta}(\sigma, f)\)
  by definition of \(\con_{\delta}\).
  The induction hypothesis applied to the second premise gives us
  \((b, (\sigma'', f'), p) \in \accu(\res_{\delta}, \con_{\delta}, (\sigma', e_l))\).
  The \(\accu\) rule then allows us to derive:
  \begin{small}
  \begin{mathpar}
    \inferrule
    { ((\sigma', e_l), \langle a |) \in \con_{\delta}(\sigma, f) \\ 
      (b, (\sigma'', f'), p) \in \accu(\res_{\delta}, \con_{\delta}, (\sigma', e_l)) }
    { (a \land b, (\sigma'', f'), p) \in \accu(\res_{\delta}, \con_{\delta}, (\sigma, f)) }
  \end{mathpar}
  \end{small}
  which concludes the proof of the forward direction.

  To prove the reverse, we need to show that for all
  \((b_0, (\sigma', f'), p) \in \accu(\res_{\delta}, \con_{\delta}, (\sigma, f))\), there is
  \((\sigma, f) \transvia{b_0 \mid p} (\sigma', f')\) in \(\langle \pi, e \rangle\JmpRes{}\).
  By induction on the derivation of
  \((b_0, (\sigma', f'), p) \in \accu(\res_{\delta}, \con_{\delta}, (\sigma, f))\),
  the cases to consider are:
  \begin{small}
  \begin{mathpar}
    \inferrule
    { (b_0, (\sigma', f'), p) \in \res_{\delta}(\sigma, f) }
    { (b_0, (\sigma', f'), p) \in \accu(\res_{\delta}, \con_{\delta}, (\sigma, f)) }
    (1)
    \and
    \inferrule
    { (n, h) \in \con_{\delta}(\sigma, f) \\ 
      r \in \accu(\res_{\delta}, \con_{\delta}, n) }
    { h(r) \in \accu(\res_{\delta}, \con_{\delta}, (\sigma, f)) }
    (2)
  \end{mathpar}
  \end{small}

  \noindent In case (1), inversion on \((b_0, (\sigma', f'), p) \in \res_{\delta}(\sigma, f)\) gives us:
  \begin{small}
  \begin{mathpar}
    \inferrule
    { (\sigma, f) \transvia{b_0 \mid p} (\sigma', f') \text{ in } \langle \pi, e \rangle }
    { (b_0, (\sigma', f'), p) \in \res_{\delta}(\sigma, f) }
  \end{mathpar}
  \end{small}
  which immediately gives us
  \((\sigma, f) \transvia{b_0 \mid p} (\sigma', f')\) in \(\langle \pi, e \rangle\JmpRes{}\).

  \noindent In case (2), inversion of the first premise gives us:
  \begin{small}
  \begin{mathpar}
    \inferrule  
    { (\sigma, f) \transOut{a}{\jmpc{l, \sigma''}} \text{ in } \langle \pi, e \rangle } 
    { ((\sigma'', e_l), \langle a |) \in \con_{\delta}(\sigma, f) }
  \end{mathpar}
  \end{small}
  where \(n = (\sigma'', e_l)\) and \(h = \langle a |\).
  Let \(r = (b, (\sigma', f'), p) \in \accu(\res_{\delta}, \con_{\delta}, (\sigma', e_l))\).
  The induction hypothesis applied to the second premise gives us
  \((\sigma'', e_l) \transvia{b \mid p} (\sigma', f')\) in \(\langle \pi, e \rangle\JmpRes{}\).
  We can now derive:
  \begin{small}
  \begin{mathpar}
    \inferrule 
    { (\sigma, f) \transOut{a}{\jmpc{l, \sigma''}} \text{ in } \langle \pi, e \rangle \\ 
      (\sigma'', e_l) \transvia{b \mid p} (\sigma', f') \text{ in } \langle \pi, e \rangle\JmpRes{} }
    { (\sigma, f) \transvia{a \land b \mid p} (\sigma', f') \text{ in } \langle \pi, e \rangle\JmpRes{} }
  \end{mathpar}
  \end{small}
  Because \((b_0, (\sigma', f'), p) = h(r) = (a \land b, (\sigma', f'), p)\), there is
  \((\sigma, f) \transvia{b_0 \mid p} (\sigma', f')\) in \(\langle \pi, e \rangle\JmpRes{}\)
  which concludes the proof.
\end{proof}
\fi

\end{document}
\endinput